\newenvironment{jfnote}{ \bgroup \color{blue} }{\egroup}
\newcommand{\red}{\color[rgb]{1.0,0.2,0.2}} 
\newcommand{\oldStuff}[1]{}
\newcommand{\Coord}{{\rm Coord}}
\newcommand{\og}{{\scriptscriptstyle \le}}
\newcommand{\Bg}{{\scalebox{1.0}{$\!\scriptscriptstyle /\!B$}}}
\DeclareMathOperator{\SHom}{\mathscr{H}\text{\kern -3pt {\calligra\large om}}\,}
\DeclareMathOperator{\ViSu}{VisSub}
\newcommand{\naturals}{{\mathbb N}}
\newcommand{\Eor}{E^{\mathrm{or}}}
\newcommand{\mec}[1]{{\bf #1}}	
\newcommand{\bec}[1]{{\boldsymbol #1}}	
\DeclareMathOperator{\Trace}{Trace}
\newcommand{\rhonew}{\rho^{\mathrm{new}}}
\newcommand{\specnew}{\Spec^{\mathrm{new}}}
\newcommand{\Specnew}{\Spec^{\mathrm{new}}}
\newcommand{\Edir}{E^{\mathrm{dir}}}
\theoremstyle{plain}
\newtheorem{theorem}{Theorem}[section]
\newtheorem{lemma}[theorem]{Lemma}
\newtheorem{conjecture}[theorem]{Conjecture}
\theoremstyle{definition}
\newtheorem{definition}[theorem]{Definition}
\newtheorem{xca}{Exercise}[section]
\newcommand{\isom}{\simeq} 
\newcommand{\ignore}[1]{}
\newcommand{\reals}{{\mathbb R}}
\newcommand{\integers}{{\mathbb Z}}
\newcommand{\complex}{{\mathbb C}}
\newcommand\EE{\mathbb{E}}
\newcommand\II{\mathbb{I}}
\DeclareMathAlphabet{\mathcal}{OMS}{cmsy}{m}{n}
\newcommand\cC{\mathcal{C}}
\newcommand\cM{\mathcal{M}}
\newcommand\cP{\mathcal{P}}
\newcommand\cR{\mathcal{R}}
\newcommand\cS{\mathcal{S}}
\newcommand\cT{\mathcal{T}}
\newcommand\frakp{\mathfrak{p}}
\DeclareMathOperator{\Prob}{Prob}
\DeclareMathOperator{\VLG}{VLG}
\DeclareMathOperator{\Line}{Line}
\DeclareMathOperator{\SNBC}{SNBC}
\DeclareMathOperator{\snbc}{snbc}
\def\from{\colon}
\def\isom{\simeq}
\def\eqdef{\overset{\text{def}}{=}}
\DeclareMathOperator{\id}{id}
\DeclareMathOperator{\ord}{ord}
\DeclareMathOperator{\Spec}{Spec}
\def\implies{\Rightarrow}
\DeclareRobustCommand
\p@\hbox{.}\mkern2mu\raise7\p@\hbox{.}\mkern1mu}}
\newcommand\xhookrightarrow[2][]{\ext@arrow 0062{\hookrightarrowfill@}{#1}{#2}}
\def\hookrightarrowfill@{\arrowfill@\lhook\relbar\rightarrow}
\newcommand{\myDeleteNote}[1]{{\red #1}}
\begin{document}

\title[Relativized Alon Conjecture V] 
{On the Relativized Alon Second Eigenvalue
Conjecture V: Proof of the Relativized Alon Conjecture for Regular Base
Graphs}

\author{Joel Friedman}
\address{Department of Computer Science,
        University of British Columbia, Vancouver, BC\ \ V6T 1Z4, CANADA}
\curraddr{}
\email{{\tt jf@cs.ubc.ca}}
\thanks{Research supported in part by an NSERC grant.}

\author{David Kohler}
\address{Department of Mathematics,
        University of British Columbia, Vancouver, BC\ \ V6T 1Z2, CANADA}
\curraddr{422 Richards St, Suite 170, Vancouver BC\ \  V6B 2Z4, CANADA}
\email{{David.kohler@a3.epfl.ch}}
\thanks{Research supported in part by an NSERC grant.}

%
\date{\today}

\subjclass[2010]{Primary 68R10}

\keywords{}

\begin{abstract}

This is the fifth in a series of articles devoted to showing that a typical
covering map of large degree to a fixed, regular graph has its new adjacency
eigenvalues within the bound conjectured by Alon for random regular graphs.

In this article we use the results of Articles~III and IV in this 
series to prove that if the base graph is regular, 
then as the degree, $n$, of the covering map tends to infinity,
some new adjacency eigenvalue has
absolute value outside the
Alon bound with probability bounded by $O(1/n)$.
In addition, we give upper and lower bounds on this probability
that are tight to within a multiplicative constant times the
degree of the covering map.
These bounds depend on two positive integers,
the {\em algebraic power} (which can also be $+\infty$)
and the {\em tangle power} of the model
of random covering map.

We conjecture that the algebraic power of the models we study is
always $+\infty$, and in Article~VI we prove this when the base
graph is regular and {\em Ramanujan}.  
When the algebraic power of the model is $+\infty$, then
the results in this article imply
stronger results, such as (1) the upper
and lower bounds mentioned above are matching to within a 
multiplicative constant, and
(2) with probability smaller than any negative power of the
degree, the some new eigenvalue fails to be within the Alon bound
only if the covering map contains one of finitely many ``tangles''
as a subgraph (and this event has low probability).

\end{abstract}

\maketitle
\setcounter{tocdepth}{3}
\tableofcontents

\newcommand{\sePrelimProofs}{17}

\section{Introduction}

The is the fifth article in a series of six articles whose goal is to prove
a relativization of
Alon's Second Eigenvalue Conjecture,
formulated in \cite{friedman_relative},
for any {\em base graph}, $B$, that is regular;
a proof of this theorem appears in our
preprint \cite{friedman_kohler}.
This series of six articles represents a
``factorization'' of the proof in \cite{friedman_kohler}
into many independent parts.
This series of articles includes some original work beyond 
that required to merely factor \cite{friedman_kohler}:
this series of articles has some simplifications and generalizations
of \cite{friedman_kohler} and of
\cite{friedman_random_graphs,friedman_alon} (on which much of
\cite{friedman_kohler} is based).
As such, the tools we develop in this series of articles will likely make it
easier to generalize these results to related questions about the
adjacency matrix eigenvalues of families
of random graphs.
Furthermore we close a gap in \cite{friedman_alon}
between the upper and lower bound
on the probability of having eigenvalues outside the Alon bound

In this article we complete the proof of the first main theorem in 
this series of articles.
This theorem shows that the Relativized Alon Conjecture holds for
algebraic models of coverings over any $d$-regular
base graph; more precisely, it shows that the probability of having
{\em non-Alon} new eigenvalues---meaning new eigenvalues
larger than $2(d-1)^{1/2}+\epsilon$ for a fixed $\epsilon>0$---for 
a random cover of degree $n$ is bounded
above by a function of 
order $1/n$.  This proof is given in Section~\ref{se_proof_first}.
Much of this article is devoted to proving a much stronger
theorem that results from our trace methods.  Let us describe this
theorem in rough terms.

For each graph $B$, we formulate models of random covering maps
that we call {\em our basic models}; these are based on the models
in \cite{friedman_alon}.  All these models turn out to be
{\em algebraic}, meaning that they satisfy
a set of conditions that allow us to apply
our trace methods.  To any algebraic model we 
associate an integer,
its {\em tangle power}, $\tau_{\rm tang}$; this is relatively easy
to estimate and was determined exactly in \cite{friedman_alon} when
$B$ is a bouquet of whole-loops or of half-loops (and therefore $B$ has
only one vertex).
To any algebraic model we also associate 
its {\em algebraic power}, $\tau_{\rm alg}$, which is either a positive
integer or $+\infty$; determining $\tau_{\rm alg}$ is much more
difficult in practice and relies on computing certain asymptotic
expansions involving the expected values
of certain traces of matrices (the {\em Hashimoto} or
{\em non-backtracking} matrix) associated to the covering graph.
In this paper we show that when the base graph is regular,
then the probability that a random covering
graph having a new eigenvalue outside of the Alon bound is
bounded above proportional to
$n^{-\tau_1}$ and below proportional to $n^{-\tau_2}$, where
$$
\tau_1 = \min(\tau_{\rm tang},\tau_{\rm alg}), \quad
\tau_2 = \min(\tau_{\rm tang},\tau_{\rm alg}+1).
$$
This is the second main theorem in this article, although there is
a more remarkable result that follows from the proof of this
theorem.
Namely, whenever $B$ is regular and $\tau_{\rm alg}=+\infty$, 
then the probability of having a non-Alon new eigenvalue 
is $O(n^{-j})$ for any $j$, provided that we discard those graphs
that contain one of finitely many tangles (this finite number
depends on $j$ and the $\epsilon>0$ above in defining non-Alon
eigenvalues); furthermore, the probability that such
tangles exist is $O(n^{-\tau_{\rm tang}})$ (where the constant
in the $O()$ depends on $j$ and $\epsilon$).
We conjecture that $\tau_{\rm alg}=+\infty$ for all graphs, $B$, and
in Article~VI (i.e., the sixth article in this series)
we will prove this when $B$ is regular and {\em Ramanujan};
in this case, then $\tau_1=\tau_2=\tau_{\rm tang}$.

The rest of this paper is organized as follows.
In Section~\ref{se_defs_review} we review the main definitions that
we will use in this series of papers; for more details and 
motivation regarding these definitions, we refer the reader to 
Article~I.

In Section~\ref{se_main_results} we state the two main results of this
article: first, that the Relativized Alon Bound holds for all regular
base graphs, and second, that the more precise bounds 
involving $\tau_1,\tau_2$ above hold;
we also include some conjectures regarding 
$\tau_{\rm alg}$ and discuss the consequences of
$\tau_{\rm alg}=+\infty$.
In Section~\ref{se_main_results} we also recall the main results of
Articles~III and IV which we need in this article.

Sections~\ref{se_ihara}--\ref{se_remaining_proofs} are devoted to
proving the main theorems in Section~\ref{se_main_results}.
In Section~\ref{se_ihara} we prove what is sometimes called 
Ihara's Determinantal Formula, which for $d$-regular graphs gives a
precise description of adjacency matrix eigenvalues in terms of those of
its 
{\em Hashimoto matrix} (also called the {\em non-backtracking matrix})
of the graph.
In Section~\ref{se_ours_algebraic} we prove that our {\em basic models}
of covering graphs of a given base graph satisfy the ``algebraic''
properties we need in Articles~II and III.
Both Sections~\ref{se_ihara} and Section~\ref{se_ours_algebraic} can
be viewed as ``loose ends'' from Article~I, and are independent of the
rest of this article.
In Section~\ref{se_proof_first} we prove the Relativized Alon
Conjecture for regular base graphs.
In Section~\ref{se_fundamental_subgr} we use the methods of
Friedman-Tillich \cite{friedman_tillich_generalized} to show that
the existence of certain {\em tangles} in any covering graph of sufficiently
large degree
implies the existence of a new eigenvalue outside of the Alon bound.
In Section~\ref{se_remaining_proofs} we complete the proof of the
second main theorem in this article.

In Section~\ref{se_markov_bounds} we make an observation,
apparently new as of \cite{friedman_kohler}, that applies to
trace methods
for random regular graphs, such as
\cite{friedman_random_graphs,friedman_relative,linial_puder,puder},
that prove a high probability new adjacency spectral
bound that is strictly greater than the Alon bound: namely, these
bounds can be improved by
the analogous trace methods applied to Hashimoto (i.e., non-backtracking)
new eigenvalues, and then converting these bounds back to
adjacency matrix bounds.
Section~\ref{se_markov_bounds} is independent of the rest of the
article, beyond some of the terminology in Section~\ref{se_defs_review}.

\section{Review of the Main Definitions}
\label{se_defs_review}

We refer the reader to Article~I for the definitions used in this article,
the motivation of such definitions, and an appendix there that lists all the
definitions and notation.
In this section we briefly review these definitions and notation. 

\subsection{Basic Notation and Conventions}
\label{su_very_basic}

We use $\reals,\complex,\integers,\naturals$
to denote, respectively, the
the real numbers, the complex numbers, the integers, and positive
integers or
natural numbers;
we use $\integers_{\ge 0}$ ($\reals_{>0}$, etc.)
to denote the set of non-negative
integers (of positive real numbers, etc.).
We denote $\{1,\ldots,n\}$ by $[n]$.

If $A$ is a set, we use $\naturals^A$ to denote the set of
maps $A \to \naturals$; we will refers to its elements as
{\em vectors}, denoted in bold face letters, e.g., $\mec k\in \naturals^A$
or $\mec k\from A\to\naturals$; we denote its {\em component}
in the regular face equivalents, i.e., for $a\in A$,
we use $k(a)\in\naturals$ to denote
the $a$-component of $\mec k$.
As usual, $\naturals^n$ denotes $\naturals^{[n]}=\naturals^{\{1,\ldots,n\}}$.
We use similar conventions for $\naturals$ replaced by $\reals$,
$\complex$, etc.

If $A$ is a set, then $\# A$ denotes the cardinality of $A$.
We often denote a set with all capital letters, and its cardinality
in lower case letters; for example,
when we define
$\SNBC(G,k)$, we will write
$\snbc(G,k)$ for $\#\SNBC(G,k)$.

If $A'\subset A$ are sets, then $\II_{A'}\from A\to\{0,1\}$ (with $A$
understood) denotes
the characteristic function of $A'$, i.e., $\II_{A'}(a)$ is $1$ if
$a\in A'$ and otherwise is $0$;
we also write $\II_{A'}$ (with $A$ understood) to mean $\II_{A'\cap A}$
when $A'$ is not necessarily a subset of $A$.

All probability spaces are finite; hence a probability space
is a pair $\cP=(\Omega,P)$ where $\Omega$ is a finite set and
$P\from \Omega\to\reals_{>0}$ with $\sum_{\omega\in\Omega}P(\omega)=1$;
hence an {\em event} is any subset of $\Omega$.
We emphasize that $\omega\in\Omega$ implies that $P(\omega)>0$ with
strict inequality; we refer to the elements of $\Omega$ as
the atoms of the probability space.
We use $\cP$ and $\Omega$ interchangeably when $P$ is
understood and confusion is unlikely.

A {\em complex-valued random variable} on $\cP$ or $\Omega$
is a function $f\from\Omega\to\complex$, and similarly for real-,
integer-, and natural-valued random variable; we denote its
$\cP$-expected value by
$$
\EE_{\omega\in\Omega}[f(\omega)]=\sum_{\omega\in\Omega}f(\omega)P(\omega).
$$
If $\Omega'\subset\Omega$ we denote the probability of $\Omega'$ by
$$
\Prob_{\cP}[\Omega']=\sum_{\omega\in\Omega'}P(\omega')
=
\EE_{\omega\in\Omega}[\II_{\Omega'}(\omega)].
$$
At times we write $\Prob_{\cP}[\Omega']$ where $\Omega'$ is
not a subset of $\Omega$, by which we mean
$\Prob_{\cP}[\Omega'\cap\Omega]$.

\subsection{Graphs, Our Basic Models, Walks}

A {\em directed graph},
or simply a {\em digraph},
is a tuple $G=(V_G,\Edir_G,h_G,t_G)$ consisting of sets
$V_G$ and $\Edir_G$ (of {\em vertices} and {\em directed edges}) and maps
$h_G,t_G$ ({\em heads}
and {\em tails}) $\Edir_G\to V_G$.
Therefore our digraphs can have multiple edges and
self-loops (i.e., $e\in\Edir_G$ with $h_G(e)=t_G(e)$).
A {\em graph} is a tuple $G=(V_G,\Edir_G,h_G,t_G,\iota_G)$
where $(V_G,\Edir_G,h_G,t_G)$ is a digraph and
$\iota_G\from \Edir_G\to \Edir_G$ is an involution with
$t_G\iota_G=h_G$;
the {\em edge set} of $G$, denoted $E_G$, is the
set of orbits of $\iota_G$, which (notation aside)
can be identified with $\Edir_G/\iota_G$,
the set of equivalence classes of
$\Edir_G$ modulo $\iota_G$;
if $\{e\}\in E_G$ is a singleton, then necessarily $e$ is a self-loop
with $\iota_G e =e $, and
we call $e$ a {\em half-loop}; other elements of $E_G$ are sets
$\{e,\iota_G e\}$ of size two, i.e., with $e\ne\iota_G e$, and for such $e$
we say that $e$ (or, at times, $\{e,\iota_G e\}$)
is a {\em whole-loop} if
$h_G e=t_G e$ (otherwise $e$ has distinct endpoints).

Hence these definitions allow our graphs to have multiple edges and 
two types of self-loops---whole-loops
and half-loops---as in
\cite{friedman_geometric_aspects,friedman_alon}.
The {\em indegree} and {\em outdegree} of a vertex in a digraph is
the number of edges whose tail, respectively whose head, is the vertex;
the {\em degree} of a vertex in a graph is its indegree (which equals
its outdegree) in the underlying digraph; 
therefore a whole-loop about a vertex contributes $2$
to its degree, whereas a half-loop contributes $1$.

An {\em orientation} of a graph, $G$, is a choice $\Eor_G\subset\Edir_G$
of $\iota_G$ representatives; i.e., $\Eor_G$ contains every half-loop, $e$,
and one element of each two-element set $\{e,\iota_G e\}$.

A {\em morphism $\pi\from G\to H$} of directed graphs is a pair
$\pi=(\pi_V,\pi_E)$ where $\pi_V\from V_G\to V_H$ and
$\pi_E\from \Edir_G\to\Edir_H$ are maps that intertwine the heads maps
and the tails maps of $G,H$ in the evident fashion;
such a morphism is {\em covering} (respectively, {\em \'etale},
elsewhere called an {\em immersion}) if for each $v\in V_G$,
$\pi_E$ maps those directed edges whose head is $v$ bijectively
(respectively, injectively) to those whose head is $\pi_V(v)$,
and the same with tail replacing head.
If $G,H$ are graphs, then a morphism $\pi\from G\to H$ is a morphism
of underlying directed graphs where $\pi_E\iota_G=\iota_H\pi_E$;
$\pi$ is called {\em covering} or {\em \'etale} if it is so as a morphism
of underlying directed graphs.
We use the words {\em morphism} and {\em map} interchangeably.

A walk in a graph or digraph, $G$, is an alternating sequence
$w=(v_0,e_1,\ldots,e_k,v_k)$ of vertices and directed edges
with $t_Ge_i=v_{i-1}$ and $h_Ge_i=v_i$ for $i\in[k]$;
$w$ is {\em closed} if $v_k=v_0$;
if $G$ is a graph,
$w$ is {\em non-backtracking}, or simply {\em NB},
if $\iota_Ge_i\ne e_{i+1}$
for $i\in[k-1]$, and {\em strictly 
non-backtracking closed}, or simply {\em SNBC},
if it is closed, non-backtracking, and 
$\iota_G e_k\ne e_1$.
The {\em visited subgraph} of a walk, $w$, in a graph $G$, denoted
$\ViSu_G(w)$ or simply
$\ViSu(w)$, is the smallest subgraph of $G$ containing all the vertices
and directed edges of $w$;
$\ViSu_G(w)$ generally depends on $G$, i.e., $\ViSu_G(w)$ cannot be inferred
from the sequence $v_0,e_1,\ldots,e_k,v_k$ alone without knowing
$\iota_G$.

The adjacency matrix, $A_G$,
of a graph or digraph, $G$, is defined as usual (its $(v_1,v_2)$-entry
is the number of directed edges from $v_1$ to $v_2$);
if $G$ is a graph on $n$ vertices, 
then $A_G$ is symmetric and we order its eigenvalues (counted with
multiplicities) and denote them
$$
\lambda_1(G)\ge \cdots \ge \lambda_n(G).
$$
If $G$ is a graph, its
Hashimoto matrix (also called the non-backtracking matrix), $H_G$,
is the adjacency matrix of the {\em oriented line graph} of $G$,
$\Line(G)$,
whose vertices are $\Edir_G$ and whose directed edges
are the subset of $\Edir_G\times\Edir_G$ consisting of pairs $(e_1,e_2)$
such that $e_1,e_2$ form the
directed edges of a non-backtracking walk (of length two) in $G$
(the tail of $(e_1,e_2)$ is $e_1$, and its head $e_2$);
therefore $H_G$
is the square matrix indexed on $\Edir_G$, whose $(e_1,e_2)$ entry
is $1$ or $0$ according to, respectively, whether or not
$e_1,e_2$ form a non-backtracking walk
(i.e., $h_G e_1=t_G e_2$ and $\iota_G e_1\ne e_2$).
We use $\mu_1(G)$ to denote the Perron-Frobenius eigenvalue of 
$H_G$, and use $\mu_i(G)$ with $1<i\le \#\Edir_G$ to denote the
other eigenvalues of $H_G$ (which are generally complex-valued)
in any order.

If $B,G$ are both digraphs,
we say that $G$ is a {\em coordinatized graph over $B$
of degree $n$}
if
\begin{equation}\label{eq_coord_def}
V_G=V_B\times [n], \quad\Edir_G=\Edir_B\times[n], \quad
t_G(e,i)=(t_B e,i),\quad
h_G(e,i)=(h_Be,\sigma(e)i)
\end{equation} 
for some map
$\sigma\from\Edir_B\to\cS_n$, where $\cS_n$ is the group
of permutations on $[n]$; we call $\sigma$ (which is uniquely determined by
\eqref{eq_coord_def}) {\em the permutation assignment
associated to $G$}.
[Any such $G$ comes with a map $G\to B$ given by 
``projection to the first component of
the pair,'' and this map is a covering map of degree $n$.]
If $B,G$ are graphs, we say that a graph $G$ is a 
{\em coordinatized graph over $B$
of degree $n$} if \eqref{eq_coord_def} holds and also
\begin{equation}\label{eq_coord_def_graph}
\iota_G(e,i) = \bigl( \iota_B e,\sigma(e)i \bigr) ,
\end{equation} 
which implies that 
\begin{equation}\label{eq_sigma_iota_B}
(e,i)=\iota_G\iota_G(e,i) = \bigl( e, \sigma(\iota_B e)\sigma(e)i \bigr)
\quad\forall e\in\Edir_B,\ i\in[n],
\end{equation}
and hence $\sigma(\iota_B e)=\sigma(e)^{-1}$;
we use $\Coord_n(B)$ to denote the set of all coordinatized covers
of a graph, $B$, of degree $n$.

The {\em order} of a graph, $G$, is $\ord(G)\eqdef (\#E_G)-(\#V_G)$.
Note that a half-loop and a whole-loop each contribute $1$ to 
$\#E_G$ and to the order of $G$.
The {\em Euler characteristic} of a graph, $G$, is
$\chi(G)\eqdef (\# V_G) - (\#\Edir_G)/2$.
Hence $\ord(G)\ge -\chi(G)$, with equality iff $G$ has no half-loops.

If $w$ is a walk in any $G\in\Coord_n(B)$, then one easily
sees that $\ViSu_G(w)$ can be inferred
from $B$ and $w$ alone.

If $B$ is a graph without half-loops, then the {\em permutation model over
$B$} refers to the probability spaces $\{\cC_n(B)\}_{n\in\naturals}$ where
the atoms of $\cC_n(B)$ are coordinatized coverings of degree $n$
over $B$ chosen with the uniform distribution.
More generally, a {\em model} over a graph, $B$, is a collection of
probability spaces, $\{\cC_n(B)\}_{n\in N}$, 
defined for $n\in N$ where $N\subset\naturals$ is an
infinite subset, and where the atoms of each $\cC_n(B)$ are elements
of $\Coord_n(B)$.
There are a number of models related to the permutation model,
which are generalizations of the models of \cite{friedman_alon},
that we call {\em our basic models} and are defined in Article~I;
let us give a rough description.

All of {\em our basic models} are {\em edge independent}, meaning that
for any orientation $\Eor_B\subset\Edir_B$, the values of 
the permutation assignment, $\sigma$, on $\Eor_B$ are independent
of one another (of course, $\sigma(\iota_G e)=(\sigma(e))^{-1}$,
so $\sigma$ is determined by its values on any orientation
$\Eor_B$); for edge independent models, it suffices to specify
the ($\cS_n$-valued)
random variable $\sigma(e)$ for each $e$ in $\Eor_B$ or $\Edir_B$.
The permutation model can be alternatively described as the 
edge independent model that assigns a uniformly chosen permutation
to each $e\in\Edir_B$ (which requires $B$ to have no half-loops);
the {\em full cycle} (or simply {\em cyclic}) model is the same, except
that if $e$ is a whole-loop then $\sigma(e)$ is chosen uniformly
among all permutations whose cyclic structure consists of a single
$n$-cycle.
If $B$ has half-loops, then we restrict $\cC_n(B)$ either to $n$ even
or $n$ odd and for each half-loop $e\in\Edir_B$ we
choose $\sigma(e)$ as follows: if $n$ is even we choose 
$\sigma(e)$ uniformly among all perfect matchings,
i.e., involutions (maps equal to their inverse) with no fixed points;
if $n$ is odd then we choose $\sigma(e)$ uniformly among
all {\em nearly perfect matchings}, meaning involutions with one
fixed point.
We combine terms when $B$ has half-loops: for example,
the term {\em full cycle-involution} (or simply {\em cyclic-involution})
{\em model of odd degree over $B$} refers
to the model where the degree, $n$, is odd,
where $\sigma(e)$ follows the full cycle rule when $e$ is
not a half-loop, and where $\sigma(e)$ is a near perfect matching
when $e$ is a half-loop;
similarly for the {\em full cycle-involution} (or simply 
{\em cyclic-involution})
{\em model of even degree}
and the {\em permutation-involution model of even degree}
or {\em of odd degree}.

If $B$ is a graph, then a model, $\{\cC_n(B)\}_{n\in N}$, over $B$
may well have $N\ne \naturals$ (e.g., our basic models above when
$B$ has half-loops); in this case many formulas involving
the variable $n$ are only defined for $n\in N$.  For brevity, we
often do not explicitly write $n\in N$ in such formulas; 
for example we usually write
$$
\lim_{n\to\infty} \quad\mbox{to abbreviate}\quad
\lim_{n\in N,\ n\to\infty} \ .
$$
Also we often write simply $\cC_n(B)$ or $\{\cC_n(B)\}$ for
$\{\cC_n(B)\}_{n\in N}$ if confusion is unlikely to occur.

A graph is {\em pruned} if all its vertices are of degree at least
two (this differs from the more standard definition of {\em pruned} 
meaning that there are
no leaves).  If $w$ is any SNBC walk in a graph, $G$, then
we easily see that
$\ViSu_G(w)$ is necessarily pruned: i.e., any of its vertices must be
incident upon a whole-loop or two distinct edges
[note that a walk of length $k=1$ about a half-loop, $(v_0,e_1,v_1)$, by
definition, is not SNBC since $\iota_G e_k=e_1$].
It easily follows that $\ViSu_G(w)$ is contained in the graph
obtained from $G$ by repeatedly ``pruning any leaves''
(i.e., discarding any vertex of degree one and its incident edge)
from $G$.
Since our trace methods only concern (Hashimoto matrices and)
SNBC walks, it suffices to work with models $\cC_n(B)$ where
$B$ is pruned.
It is not hard to see that if $B$ is pruned and connected,
then $\ord(B)=0$ iff $B$ is a cycle,
and $\mu_1(B)>1$ iff $\chi(B)<0$;
this is formally proven in Article~III (Lemma~6.4).
Our theorems are not usually interesting unless $\mu_1(B)>\mu_1^{1/2}(B)$,
so we tend to restrict our main theorems
to the case $\mu_1(B)>1$ or, equivalently,
$\chi(B)<0$; some of our techniques work without these restrictions.

\subsection{Asymptotic Expansions}
\label{su_asymptotic_expansions}


A function $f\from\naturals\to\complex$ is a {\em polyexponential} if
it is a sum of functions $p(k)\mu^k$, where $p$ is a polynomial
and $\mu\in\complex$, with the convention
that for $\mu=0$ we understand $p(k)\mu^k$ to mean
any function that vanishes for sufficiently large $k$\footnote{
  This convention is used because then for any fixed matrix, $M$,
  any entry of $M^k$, as a function of $k$, is a polyexponential
  function of $k$; more specifically, the $\mu=0$ convention
  is due to the fact that a Jordan block of eigenvalue $0$ is
  nilpotent.
  }; we refer to the $\mu$
needed to express $f$ as the {\em exponents} or {\em bases} of $f$.
A function $f\from\naturals\to\complex$ is {\em of growth $\rho$}
for a $\rho\in\reals$ if $|f(k)|=o(1)(\rho+\epsilon)^k$ for any $\epsilon>0$.
A function $f\from\naturals\to\complex$ is $(B,\nu)$-bounded if it
is the sum of a function of growth $\nu$ plus a polyexponential function
whose bases are bounded by $\mu_1(B)$ (the Perron-Frobenius eigenvalue
of $H_B$); the {\em larger bases} of $f$ (with respect to $\nu$) are
those bases of the polyexponential function that are larger in
absolute value than $\nu$.
Moreover, such an $f$ is called {\em $(B,\nu)$-Ramanujan} if its
larger bases are all eigenvalues of $H_B$.

We say that a function $f=f(k,n)$ taking some subset of $\naturals^2$ to
$\complex$ has a 
{\em $(B,\nu)$-bounded expansion of order $r$} if for some
constant $C$ we have
\begin{equation}\label{eq_B_nu_defs_summ}
f(k,n) = c_0(k)+\cdots+c_{r-1}(k)+ O(1) c_r(k)/n^r,
\end{equation} 
whenever $f(k,n)$ is defined and $1\le k\le n^{1/2}/C$, where
for $0\le i\le r-1$, the $c_i(k)$ are $(B,\nu)$-bounded and $c_r(k)$
is of growth $\mu_1(B)$.
Furthermore, such an expansion is called {\em $(B,\nu)$-Ramanujan}
if for $0\le i\le r-1$, the $c_i(k)$ are {\em $(B,\nu)$-Ramanujan}.

Typically our functions $f(k,n)$ as in
\eqref{eq_B_nu_defs_summ} are defined for all $k\in\naturals$
and $n\in N$ for an infinite set $N\subset\naturals$ representing
the possible degrees of our random covering maps in the model
$\{\cC_n(B)\}_{n\in N}$ at hand.

\subsection{Tangles}
\label{su_tangles}

A {\em $(\ge\nu)$-tangle} is any 
connected graph, $\psi$, with $\mu_1(\psi)\ge\nu$,
where $\mu_1(\psi)$ denotes the Perron-Frobenius eigenvalue of $H_B$;
a {\em $(\ge\nu,<r)$-tangle} is any $(\ge\nu)$-tangle of order less than
$r$;
similarly for $(>\nu)$-tangles, i.e.,
$\psi$ satisfying the weak inequality $\mu_1(\psi)>\nu$,
and for $(>\nu,r)$-tangles.
We use ${\rm TangleFree}(\ge\nu,<r)$ to denote those graphs that don't
contain a subgraph that is $(\ge\nu,<r)$-tangle, and
${\rm HasTangles}(\ge\nu,<r)$ for those that do; we
never use $(>\nu)$-tangles in defining TangleFree and HasTangles,
for the technical reason
(see Article~III or Lemma~9.2 of \cite{friedman_alon}) that
for $\nu>1$ and any $r\in\naturals$
that there are only finitely many 
$(\ge\nu,<r)$-tangles, up to isomorphism, that are minimal
with respect to inclusion\footnote{
  By contrast, there are infinitely many minimal $(>\nu,<r)$-tangles
  for some values of $\nu>1$ and $r$: indeed, consider any connected pruned
  graph $\psi$, and set $r=\ord(\psi)+2$, $\nu=\mu_1(\psi)$.  Then if
  we fix two vertices in $\psi$ and let $\psi_s$ be the graph that is
  $\psi$ with an additional edge of length $s$ between these two 
  vertices, then $\psi_s$ is an $(>\nu,<r)$-tangle.  However, if
  $\psi'$ is $\psi$ with any single edge deleted, and $\psi'_s$ is 
  $\psi_s$ with this edge deleted, then one can show that
  $\mu_1(\psi'_s)<\nu$ for $s$ sufficiently large.  It follows that
  for $s$ sufficiently large, $\psi_s$ are minimal $(>\nu,<r)$-tangles.
}.

\subsection{$B$-Graphs, Ordered Graphs, and Strongly Algebraic Models}
\label{su_B_ordered_strongly_alg}

An {\em ordered graph}, $G^\og$, is a graph, $G$, endowed with an
{\em ordering}, meaning
an orientation (i.e., $\iota_G$-orbit representatives), 
$\Eor_G\subset\Edir_G$, 
and total orderings of $V_G$ and $E_G$;
a walk, $w=(v_0,\ldots,e_k,v_k)$ in a graph endows $\ViSu(w)$ with a
{\em first-encountered} ordering:
namely, $v\le v'$ if the first occurrence of $v$ comes before that
of $v'$ in the sequence $v_0,v_1,\ldots,v_k$,
similarly for $e\le e'$, and we orient each edge in the
order in which it is first traversed (some edges may be traversed
in only one direction).
We use $\ViSu^\og(w)$ to refer to $\ViSu(w)$ with this ordering.

A {\em morphism} $G^\og\to H^\og$ of ordered graphs is a morphism
$G\to H$ that respects the ordering in the evident fashion.
We are mostly interested in {\em isomorphisms} of ordered graphs;
we easily see that any isomorphism $G^\og\to G^\og$ must be the
identity morphism; it follows that if $G^\og$ and $H^\og$ are
isomorphic, then there is a unique isomorphism $G^\og\to H^\og$.

If $B$ is a graph, then a $B$-graph, $G_\Bg$, is a graph $G$ endowed 
with a map $G\to B$ (its {\em $B$-graph} structure).
A {\em morphism} $G_\Bg\to H_\Bg$ of $B$-graphs is a morphism
$G\to H$ that respects the $B$-structures in the evident sense.
An {\em ordered $B$-graph}, $G^\og_\Bg$, is a graph endowed with
both an ordering and a $B$-graph structure; a morphism of
ordered $B$-graphs is a morphism of the underlying graphs that
respects both the ordering and $B$-graph structures.
If $w$ is a walk in a $B$-graph, $G_\Bg$, we use $\ViSu_\Bg(w)$ to denote
$\ViSu(w)$ with the $B$-graph structure it inherits from $G$ in
the evident sense; we use $\ViSu_\Bg^\og(w)$ to denote
$\ViSu_\Bg(w)$ with its first-encountered ordering.

At times we drop the superscript $\,^\og$ and the subscript $\,_\Bg$;
for example, we write $G\in\Coord_n(B)$ instead of $G_\Bg\in\cC_n(B)$
(despite the fact that we constantly utilize
the $B$-graph structure on elements of
$\Coord_n(B)$).

A $B$-graph $G_\Bg$ is {\em covering} or {\'etale} if its structure
map $G\to B$ is.

If $\pi\from S\to B$ is a $B$-graph, we use
$\mec a=\mec a_{S_\Bg}$ to denote the vector
$\Edir_B\to\integers_{\ge 0}$ given by
$a_{S_\Bg}(e) = \# \pi^{-1}(e)$;
since $a_{S_\Bg}(\iota_B e) = a_{S_\Bg}(e)$ for all $e\in\Edir_B$,
we sometimes view $\mec a$ as a function $E_B\to\integers_{\ge 0}$, i.e.,
as the function taking $\{e,\iota_B e\}$ to 
$a_{S_\Bg}(e)=a_{S_\Bg}(\iota_B e)$.
We similarly define $\mec b_{S_\Bg}\from V_B\to\integers_{\ge 0}$ by
setting $b_{S_\Bg}(v) = \#\pi^{-1}(v)$.
If $w$ is a walk in a $B$-graph, we set $\mec a_w$ to be
$\mec a_{S_\Bg}$ where $S_\Bg=\ViSu_\Bg(w)$, and similarly for $\mec b_w$.
We refer to $\mec a,\mec b$ (in either context) as
{\em $B$-fibre counting functions}.

If $S_\Bg^\og$ is an ordered $B$-graph and $G_\Bg$ is a $B$-graph, we 
use $[S_\Bg^\og]\cap G_\Bg$ to denote the set of ordered graphs ${G'}_\Bg^\og$
such that $G'_\Bg\subset G_\Bg$ and ${G'}_\Bg^\og\isom S_\Bg^\og$
(as ordered $B$-graphs); this set is naturally identified with the
set of injective morphisms $S_\Bg\to G_\Bg$, and the cardinality of these
sets is independent of the ordering on $S_\Bg^\og$.


A $B$-graph, $S_\Bg$, or an ordered $B$-graph, $S_\Bg^\og$,
{\em occurs in a model $\{\cC_n(B)\}_{n\in N}$}
if for all sufficiently large
$n\in N$, $S_\Bg$ is isomorphic to a $B$-subgraph of some element
of $\cC_n(B)$; similary a graph, $S$, {\em occurs in 
$\{\cC_n(B)\}_{n\in N}$} if it can be endowed with a $B$-graph
structure, $S_\Bg$, that occurs in 
$\{\cC_n(B)\}_{n\in N}$.

A model $\{\cC_n(B)\}_{n\in N}$ of coverings of $B$ is {\em strongly
algebraic} if
\begin{enumerate}
\item for each $r\in\naturals$
there is a function, $g=g(k)$, of growth $\mu_1(B)$
such that if $k\le n/4$ we have
\begin{equation}\label{eq_algebraic_order_bound}
\EE_{G\in\cC_n(B)}[ \snbc_{\ge r}(G,k)] \le
g(k)/n^r
\end{equation}
where $\snbc_{\ge r}(G,k)$ is the number of SNBC walks of length
$k$ in $G$ whose visited subgraph is of order at least $r$;
\item
for any $r$ there exists
a function $g$ of growth $1$ and real $C>0$ such that the following
holds:
for any ordered $B$-graph, $S_\Bg^\og$, that is pruned and of
order less than $r$,
\begin{enumerate}
\item
if $S_\Bg$ occurs in $\cC_n(B)$, then for
$1\le \#\Edir_S\le n^{1/2}/C$,
\begin{equation}\label{eq_expansion_S}
\EE_{G\in\cC_n(B)}\Bigl[ \#\bigl([S_\Bg^\og]\cap G\bigr) \Bigr]
=
c_0 + \cdots + c_{r-1}/n^{r-1}
+ O(1) g(\# E_S) /n^r
\end{equation} 
where the $O(1)$ term is bounded in absolute value by $C$
(and therefore independent of $n$ and $S_\Bg$), and
where $c_i=c_i(S_\Bg)\in\reals$ such that
$c_i$ is $0$ if $i<\ord(S)$ and $c_i>0$ for $i=\ord(S)$;
and
\item
if $S_\Bg$ does not occur in $\cC_n(B)$, then for any
$n$ with $\#\Edir_S\le n^{1/2}/C$,
\begin{equation}\label{eq_zero_S_in_G}
\EE_{G\in\cC_n(B)}\Bigl[ \#\bigl([S_\Bg^\og]\cap G\bigr) \Bigr]
= 0 
\end{equation} 
(or, equivalently, no graph in $\cC_n(B)$ has a $B$-subgraph isomorphic to
$S_\Bg^\og$);
\end{enumerate}
\item
$c_0=c_0(S_\Bg)$ equals $1$ if $S$ is a cycle (i.e., $\ord(S)=0$ and
$S$ is connected) that occurs in $\cC_n(B)$;
\item
$S_\Bg$ occurs in $\cC_n(B)$ iff $S_\Bg$ is an \'etale $B$-graph
and $S$ has no half-loops; and
\item
there exist
polynomials $p_i=p_i(\mec a,\mec b)$ such that $p_0=1$
(i.e., identically 1), and for every
\'etale $B$-graph, $S_\Bg^\og$, we have that
\begin{equation}\label{eq_strongly_algebraic}
c_{\ord(S)+i}(S_\Bg) = p_i(\mec a_{S_\Bg},\mec b_{S_\Bg}) \ .
\end{equation}
\end{enumerate}
Notice that condition~(3), regarding $S$ that are cycles, is implied
by conditions~(4) and~(5); we leave in condition~(3) since this makes the
definition of {\em algebraic} (below) simpler.
Notice that \eqref{eq_expansion_S} and \eqref{eq_strongly_algebraic}
are the main reasons that we work with
ordered $B$-graphs: indeed, the coefficients depend only on
the $B$-fibre counting function $\mec a,\mec b$, which 
depend on the structure of
$S_\Bg^\og$ as a $B$-graph; this is not true if we don't work with
ordered graphs: i.e.,
\eqref{eq_expansion_S} fails to
hold if we replace $[S_\Bg^\og]$
with $[S_\Bg]$ (when $S_\Bg$ has nontrivial automorphisms), where
$[S_\Bg]\cap G$ refers to the number of $B$-subgraphs of $G$ isomorphic
to $S_\Bg$; the reason is that
$$
\#[S_\Bg^\og]\cap G_\Bg = \bigl( \#{\rm Aut}(S_\Bg)\bigr)
\bigl( \#[S_\Bg]\cap G_\Bg \bigr)
$$
where ${\rm Aut}(S_\Bg)$ is the group of automorphisms of $S_\Bg$, 
and it is $[S_\Bg^\og]\cap G_\Bg$ rather than $[S_\Bg]\cap G_\Bg$
that turns out to have the ``better'' properties;
see Section~6 of Article~I for examples.
Ordered graphs are convenient to use for a number of other reasons.

\ignore{
\myDeleteNote{Stuff deleted here and below on September 13, 2018.}
}

\subsection{Homotopy Type}

The homotopy type of a walk and of an ordered subgraph are defined
by {\em suppressing} its ``uninteresting'' vertices of degree two;
examples are given in Section~6 of Article~I.
Here is how we make this precise.

A {\em bead} in a graph is a vertex of degree two that is not
incident upon a self-loop.
Let $S$ be a graph and $V'\subset V_S$ be a {\em proper bead subset} of 
$V_S$,
meaning that $V'$ consists only of beads of $V$,
and that no connected component of $S$ has all its vertices in
$V'$ (this can only happen for connected components of $S$ that
are cycles);
we define the {\em bead suppression} $S/V'$ to be the following
graph: (1) its
vertex set $V_{S/V'}$
is $V''=V_S\setminus V'$, (2) its directed edges, $\Edir_{S/V'}$ consist
of
the {\em $V$'-beaded paths}, i.e., non-backtracking walks
in $S$ between elements of $V''$ whose intermediate vertices lie in $V'$,
(3) $t_{S/V'}$ and $h_{S/V'}$ give the first and last vertex of
the beaded path, and (4) $\iota_{S/V'}$ takes a beaded path
to its reverse walk
(i.e., takes $(v_0,e_1,\ldots,v_k)$ to
$(v_k,\iota_S e_k,\ldots,\iota_S e_1,v_0)$).
One can recover $S$ from the suppression $S/V'$ for pedantic reasons,
since we have defined its directed edges to be beaded paths of $S$.
If $S^\og=\ViSu^\og(w)$ where $w$ is a non-backtracking walk,
then the ordering of $S$ can be inferred by the naturally
corresponding order on $S/V'$, and we use $S^\og/V'$ to denote
$S/V'$ with this ordering.

Let $w$ be a non-backtracking walk in a graph, and 
$S^\og=\ViSu^\og(w)$ its visited
subgraph; the {\em reduction} of $w$ is the ordered graph,
$R^\og$, denoted $S^\og/V'$,
whose underlying graph is
$S/V'$ where $V'$ is the set of beads of $S$ except
the first and last vertices of $w$ (if one or both are beads),
and whose ordering is naturally arises from that on $S^\og$;
the {\em edge lengths} of $w$ is the function $E_{S/V'}\to\naturals$
taking an edge of $S/V'$ to the length of the beaded path it represents
in $S$;
we say that $w$ is {\em of homotopy type} $T^\og$ for any ordered
graph $T^\og$ that is isomorphic to $S^\og/V'$; in this case
the lengths of $S^\og/V'$ naturally give lengths $E_T\to\naturals$
by the unique isomorphism from $T^\og$ to $S^\og/V'$.
If $S^\og$ is the visited subgraph of a non-backtracking walk,
we define the reduction, homotopy type, and edge-lengths of $S^\og$ to
be that of the walk, since these notions depend only on $S^\og$ and
not the particular walk.

If $T$ is a graph and $\mec k\from E_T\to\naturals$ a function, then
we use $\VLG(T,\mec k)$ (for {\em variable-length graph}) to denote
any graph obtained from $T$ by gluing in a path of length $k(e)$
for each $e\in E_T$.  If $S^\og$ is of homotopy type $T^\og$
and $\mec k\from E_T\to \naturals$ its edge lengths,
then $\VLG(T,\mec k)$ is isomorphic to $S$ (as a graph).
Hence the construction of variable-length graphs is a sort of
inverse to bead suppression.

If $T^\og$ is an ordering on $T$ that arises as the first encountered
ordering of a non-backtracking walk on $T$ (whose visited subgraph
is all of $T$), then this ordering gives rise to a natural
ordering on $\VLG(T,\mec k)$ that we denote $\VLG^\og(T^\og,\mec k)$.
Again, this ordering on the variable-length graph is a sort of
inverse to bead suppression on ordered graphs.

\subsection{$B$-graphs and Wordings}

If $w_B=(v_0,e_1,\ldots,e_k,v_k)$ with $k\ge 1$ is a walk in a graph
$B$, then we can identify
$w_B$ with the string $e_1,e_2,\ldots,e_k$ over the alphabet
$\Edir_B$.
For technical reasons, the definitions below of
a {\em $B$-wording} and 
the {\em induced wording}, are given as strings over $\Edir_B$ rather
than the full alternating string of vertices and directed edges.
The reason is that 
doing this gives the correct notion of the {\em eigenvalues} of
an algebraic model (defined below).

Let $w$ be a non-backtracking walk in a $B$-graph, whose reduction
is $S^\og/V'$, and let
$S_\Bg^\og=\ViSu_\Bg^\og$.
Then the {\em wording induced by $w$} on $S^\og/V'$ is
the map $W$ from $\Edir_{S/V'}$ to strings in $\Edir_B$
of positive length, 
taking a
directed edge $e\in\Edir_{S/V'}$ to the string of $\Edir_B$ edges
in the non-backtracking walk in $B$
that lies under the walk in $S$ that it represents.
Abstractly, we say that a {\em $B$-wording} of a graph $T$
is a map $W$ from $\Edir_T$ to words over the alphabet
$\Edir_B$ that represent (the directed edges of)
non-backtracking walks in $B$ such that
(1) $W(\iota_T e)$ is the reverse word (corresponding to
the reverse walk) in $B$ of $W(e)$, 
(2) if $e\in\Edir_T$ is a half-loop, then $W(e)$ is of length one
whose single letter is a half-loop, and
(3) the tail of the first directed edge in $W(e)$ 
(corresponding to the first vertex in the associated walk in $B$)
depends only on $t_T e$;
the {\em edge-lengths} of $W$ is the function $E_T\to\naturals$
taking $e$ to the length of $W(e)$.
[Hence the wording induced by $w$ above is, indeed, a $B$-wording.]

Given a graph, $T$, and a $B$-wording $W$, there is a $B$-graph,
unique up to isomorphism, whose underlying graph is $\VLG(T,\mec k)$
where $\mec k$ is the edge-lengths of $W$, and where the $B$-graph
structure maps the non-backtracking walk in $\VLG(T,\mec k)$
corresponding to an $e\in\Edir_T$ to the non-backtracking walk in $B$
given by $W(e)$.
We denote any such $B$-graph by $\VLG(T,W)$; again this is
a sort of inverse to starting with a non-backtracking walk
and producing the wording it induces on its visited subgraph.

Notice that if $S_\Bg^\og=\VLG(T^\og,W)$ for a $B$-wording, $W$,
then the $B$-fibre counting functions
$\mec a_{S_\Bg}$ and $\mec b_{S_\Bg}$ can be
inferred from $W$, and we may therefore write $\mec a_W$ and
$\mec b_W$.

\subsection{Algebraic Models}

By a $B$-type we mean a pair $T^{\rm type}=(T,\cR)$ consisting
of a graph, $T$, and a map from $\Edir_T$ to the set
of regular languages over the alphabet $\Edir_B$ (in the sense of regular
language theory) such that
(1) all words in $\cR(e)$ are positive length strings corresponding to
non-backtracking walks in $B$, 
(2) if for $e\in\Edir_T$ we have $w=e_1\ldots e_k\in\cR(e)$,
then $w^R\eqdef \iota_B e_k\ldots\iota_B e_1$ lies in $\cR(\iota_T e)$,
and (3) if $W\from \Edir_T\to(\Edir_B)^*$ (where $(\Edir_B)^*$ is
the set of strings over $\Edir_B$) satisfies
$W(e)\in\cR(e)$ and $W(\iota_T e)=W(e)^R$ for all $e\in \Edir_T$,
then $W$ is a $B$-wording.
A $B$-wording $W$ of $T$ is {\em of type $T^{\rm type}$} if
$W(e)\in\cR(e)$ for each $e\in\Edir_T$.

Let $\cC_n(B)$ be a model that satisfies (1)--(3) of the definition
of strongly algebraic.
If $\cT$ a subset of $B$-graphs,
we say that the model is {\em algebraic restricted to $\cT$}
if 
either all $S_\Bg\in\cT$ occur in $\cC_n(B)$ or they all do not,
and if so
there are polynomials $p_0,p_1,\ldots$ such that
$c_i(S_\Bg)=p_i(S_\Bg)$ for any $S_\Bg\in\cT$. 
We say that $\cC_n(B)$ is {\em algebraic} if 
\begin{enumerate}
\item
setting $h(k)$ to be
the number of $B$-graph isomorphism classes of \'etale $B$-graphs
$S_\Bg$ such that $S$ is a cycle of length $k$ and $S$ does
not occur in $\cC_n(B)$, we have that 
$h$ is a function of growth $(d-1)^{1/2}$; and
\item
for any
pruned, ordered graph, $T^\og$, there is a finite number of
$B$-types, $T_j^{\rm type}=(T^\og,\cR_j)$, $j=1,\ldots,s$, 
such that (1) any $B$-wording, $W$, of $T$ belongs to exactly one
$\cR_j$, and
(2) $\cC_n(B)$ is algebraic when restricted to $T_j^{\rm type}$.
\end{enumerate}

[In Article~I we show that
if instead each $B$-wording belong to 
{\em at least one} $B$-type $T_j^{\rm type}$, then one can choose a
another set of
$B$-types that satisfy (2) and where each $B$-wording belongs
to {\em a unique} $B$-type;
however, the uniqueness
is ultimately needed in our proofs,
so we use uniqueness in our definition of algebraic.]

We remark that one can say that a walk, $w$, in a $B$-graph,
or an ordered $B$-graphs, $S_\Bg^\og$, is of {\em homotopy type $T^\og$},
but when $T$ has non-trivial automorphism one {\em cannot} say
that is of $B$-type $(T,\cR)$ unless---for example---one orders
$T$ and speaks of an {\em ordered $B$-type}, $(T^\og,\cR)$.
[This will be of concern only in Article~II.]

We define the {\em eigenvalues} of a regular language, $R$, to be the minimal
set $\mu_1,\ldots,\mu_m$ such that for any $k\ge 1$,
the number of words of length $k$ in the language
is given as
$$
\sum_{i=1}^m p_i(k)\mu_i^k
$$
for some polynomials $p_i=p_i(k)$, with the convention that
if $\mu_i=0$ then $p_i(k)\mu_i^k$ refers to any function that 
vanishes for $k$ sufficiently large (the reason for this is that
a Jordan block of eigenvalue $0$ is a nilpotent matrix).
Similarly, we define the eigenvalues of a $B$-type $T^{\rm type}=(T,\cR)$
as the union of all the eigenvalues of the $\cR(e)$.
Similarly a {\em set of eigenvalues} of a graph, $T$
(respectively, an algebraic model, $\cC_n(B)$)
is
any set containing the eigenvalues containing the eigenvalues
of some choice of $B$-types used in the definition of algebraic
for $T$-wordings (respectively, for $T$-wordings for all $T$).

[In Article~V we prove that all of our basic models are algebraic;
some of our basic models, such as the
permutation-involution model and the cyclic models, are not
strongly algebraic.]

We remark that a homotopy type, $T^\og$,
of a non-backtracking walk, can only have beads as its first or last 
vertices; however, in the definition of algebraic we require
a condition on {\em all pruned graphs}, $T$, 
which includes $T$ that may have many beads and may not be connected;
this is needed
when we define homotopy types of pairs in Article~II.

\subsection{SNBC Counting Functions}

If $T^\og$ is an ordered graph and $\mec k\from E_T\to\naturals$, 
we use $\SNBC(T^\og,\mec k;G,k)$ to denote the set of SNBC walks in $G$
of length $k$ and of homotopy type $T^\og$ and edge lengths $\mec k$.
We similarly define
$$
\SNBC(T^\og,\ge\bec\xi;G,k) \eqdef 
\bigcup_{\mec k\ge\bec\xi} \SNBC(T^\og,\mec k;G,k)
$$
where $\mec k\ge\bec\xi$ means that $k(e)\ge\xi(e)$ for all $e\in E_T$.
We denote the cardinality of these sets by replacing $\SNBC$ with
$\snbc$;
we call $\snbc(T^\og,\ge\bec\xi;G,k)$ the set of 
{\em $\bec\xi$-certified
traces of homotopy type $T^\og$ of length $k$ in $G$};
in Article~III we will refer to certain $\bec\xi$ as {\em certificates}.

\section{Main Results}
\label{se_main_results}

In this section we give some more definitions and explain the
results we prove in this article and the next article in this series.
We also state the main results
from Articles~III and IV that we will need to quote here;
more details about these results can be found in Articles~III and IV, and
some rough remarks on these results and articles can be found in Article~I.

\subsection{The First Main Theorem}

If $B$ is a graph, $\|A_{\widehat B}\|_2$ denotes
the $L^2$ norm of the adjacency operator
on a universal cover, $\widehat B$, of $B$; it is well-known that
if $B$ is $d$-regular,
then $\|A_{\widehat B}\|_2=2\sqrt{d-1}$
(see, for example, \cite{mohar_woess}).
If $\pi\from G\to B$ is a covering map graphs,
and $\epsilon>0$, the
{\em $\epsilon$-non-Alon multiplicity of $G$
relative to $B$} is
$$
{\rm NonAlon}_B(G;\epsilon) \eqdef
\# \bigl\{\lambda\in\specnew_B(A_G)\ \bigm|\
|\lambda|>
\|A_{\widehat B}\|_2 +\epsilon\bigr\} ,
$$
where
the above $\lambda$ are counted with their multiplicity in
$\specnew_B(A_G)$.

Here is our first main theorem.

\begin{theorem}\label{th_first_main_thm}
Let $B$ be a $d$-regular graph, and $\{\cC_n(B)\}_{n\in N}$ an algebraic
model over $B$.
Then for any
$\epsilon>0$ there is a constant $C=C(\epsilon)$ for which
$$
\Prob_{G\in\cC_n(B)}[ {\rm NonAlon}_B(G;\epsilon)>0 ]
\le  C(\epsilon)/n \ .
$$
\end{theorem}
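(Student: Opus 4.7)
The plan is to pass from $A_G$ to the Hashimoto (non-backtracking) matrix $H_G$ via Ihara's determinantal formula (proved in Section~\ref{se_ihara}): for $d$-regular $B$, any new eigenvalue $\lambda$ of $A_G$ with $|\lambda|>2\sqrt{d-1}+\epsilon$ corresponds to a new eigenvalue $\mu$ of $H_G$ with $|\mu|>\sqrt{d-1}+\epsilon'$ for some $\epsilon'=\epsilon'(\epsilon)>0$. Hence it suffices to bound, by $O(1/n)$, the probability that $H_G$ has such a new eigenvalue.

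To control the new Hashimoto spectrum I would run the trace method. Since the old eigenvalues of $H_G$ are precisely those of $H_B$, set
$$
\snbc^{\rm new}(G,k)\eqdef \trace(H_G^k)-\trace(H_B^k)=\sum_{\mu\ \mathrm{new}}\mu^k.
$$
A standard Markov-style argument (using that the new eigenvalues of $H_G$, being eigenvalues of a real matrix, come in conjugate pairs so their power sums are real, and applying Markov for even powers) bounds the probability that some new eigenvalue has absolute value exceeding $\sqrt{d-1}+\epsilon'$ by a multiple of $\EE_{G\in\cC_n(B)}[|\snbc^{\rm new}(G,2k)|]/(\sqrt{d-1}+\epsilon')^{2k}$. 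The aim is to show the numerator is $O(1/n)\cdot(\sqrt{d-1}+\epsilon'/2)^{2k}$ for $k$ of order $\log n$.

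To bound the numerator, decompose $\snbc(G,2k)$ as a sum over ordered homotopy types $T^\og$ and edge-length vectors $\mec k\from E_T\to\naturals$ of $\snbc(T^\og,\mec k;G,2k)$. By algebraicity, for each pruned $T^\og$ the $B$-wordings split into finitely many $B$-types, on each of which the counting results of Articles~III--IV yield an expansion $\EE[\snbc(T^\og,\mec k;G,2k)]=c_0+c_1/n+c_2/n^2+\cdots$ with coefficients polynomial in the fibre-counting data $\mec a,\mec b$. The $c_0$ contribution, summed over all homotopy types and edge lengths, reconstructs $\trace(H_B^{2k})$ exactly and cancels in $\snbc^{\rm new}$. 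Summing the remaining $c_i/n^i$ contributions over homotopy types of order $1$ and edge-length vectors with total length $2k$ produces a bound of the form $(C/n)(\sqrt{d-1}+\epsilon'/4)^{2k}$: the dominant base is $\mu_1(B)=\sqrt{d-1}$ thanks to $d$-regularity, and algebraicity ensures that every $B$-type eigenvalue controlling the expansion is bounded by $\sqrt{d-1}$ in modulus.

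The main obstacle is controlling walks whose visited subgraph has large order, where the expansion breaks down, together with the presence of \emph{tangles} --- connected subgraphs whose Perron eigenvalue exceeds $\sqrt{d-1}+\epsilon'$ --- whose mere appearance in $G$ can force an offending new Hashimoto eigenvalue independently of any trace estimate. The former is absorbed using the strong-algebraic bound $\EE[\snbc_{\ge r}(G,k)]\le g(k)/n^r$ of \eqref{eq_algebraic_order_bound}, which handles walks of visited-subgraph order $\ge 2$ with room to spare. The latter is handled by the Friedman--Tillich style argument of Section~\ref{se_fundamental_subgr}, combined with the tangle-occurrence estimates of Article~IV, which show that $G$ contains such a tangle only with probability $O(1/n^{\tau_{\rm tang}})=O(1/n)$. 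Combining the trace bound on the tangle-free event with the tangle-occurrence probability, and choosing $k=\lceil c(\epsilon')\log n\rceil$ with $c$ sufficiently small, yields the desired $C(\epsilon)/n$ bound.
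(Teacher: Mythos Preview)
Your approach has a genuine gap at the heart of the Markov argument. You write that ``the dominant base is $\mu_1(B)=\sqrt{d-1}$'' and that ``algebraicity ensures that every $B$-type eigenvalue controlling the expansion is bounded by $\sqrt{d-1}$ in modulus.'' This is a misidentification: for $d$-regular $B$ the Perron--Frobenius eigenvalue of $H_B$ is $\mu_1(B)=d-1$, not $(d-1)^{1/2}$, and the eigenvalues of the model (which, by Section~\ref{se_ours_algebraic}, are eigenvalues of $H_B$ and possibly~$1$) include $d-1$. Consequently the coefficients $c_i(k)$ for $i\ge 1$ in the expansion of $\EE[\II_{{\rm TangleFree}}(\Trace(H_G^k)-\Trace(H_B^k))]$ are only $(B,\nu)$-bounded (Lemma~\ref{le_corr_main_tech}), meaning they can contain polyexponential terms with base as large as $d-1$. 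A single-power Markov bound with $c_1(2k)/n$ of size roughly $(d-1)^{2k}/n$ in the numerator and $(\sqrt{d-1}+\epsilon')^{2k}$ in the denominator does not yield $O(1/n)$: the ratio behaves like $[(d-1)/(\sqrt{d-1}+\epsilon')]^{2k}/n$, and since $(d-1)/(\sqrt{d-1}+\epsilon')\approx\sqrt{d-1}-\epsilon'>1$ there is no choice of $k\asymp\log n$ that simultaneously makes this small and makes the bad-eigenvalue contribution dominate the (possibly negative) complex-eigenvalue contributions to $\Trace(\tilde H_G^{2k})$. Section~\ref{se_markov_bounds} discusses exactly this limitation: single-power Markov bounds applied to either $A_G$ or $H_G$ traces give only $\rhonew\le 2\sqrt{d-1}\cdot(d-1)^{O(1/r)}+\epsilon$, strictly above the Alon bound for every finite expansion order~$r$.

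The paper's route is different in kind, not in degree. It packages the TangleFree Hashimoto traces as a $(\Lambda_0,\Lambda_1)=(\nu,d-1)$ matrix model (Definition~\ref{de_matrix_model}) and invokes the Sidestepping Theorem (Theorem~\ref{th_sidestep}) from Article~IV, which uses the \emph{full} family of expansions $\{c_i(k)\}_k$ for many consecutive~$k$---not a single Markov inequality---to separate genuine eigenvalue mass near large real bases from the polyexponential artifacts in the $c_i$. Since $c_0(k)$ has growth $(d-1)^{1/2}<\nu$ (Lemma~\ref{le_corr_main_tech}), its polyexponential part $p_0$ vanishes, so the smallest index $j$ with $p_j\ne 0$ is at least~$1$; the Sidestepping Theorem then gives $\EE{\rm out}[B_{\nu+\epsilon''}(0)]=O(n^{-j})\le O(1/n)$. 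Adding the HasTangles probability, which is $O(n^{-\tau_{\rm tang}})\le O(1/n)$ by the $\widetilde c_i$ expansion in Theorem~\ref{th_main_tech_result} (no Friedman--Tillich input is needed here---that lemma supplies only lower bounds), finishes the proof. The tangle split you propose is correct; what is missing is the replacement of the single-power Markov step by the multi-power sidestepping argument.
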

In fact, we conjecture that for the above probability
there are matching upper and lower bounds,
within a constant depending on $\epsilon$ (but not on $n$),
that we now describe.  It will be convenient to first recall the
main results from Article~III.

\subsection{Results Needed from Article~III}

Let us recall the main theorem of Article~III.

%
\begin{theorem}\label{th_main_tech_result}
Let $B$ be a connected graph with
$\mu_1(B)>1$, and let
$\{\cC_n(B)\}_{n\in N}$ be
an algebraic model over $B$.
Let $r>0$ be an integer and $\nu\ge\mu_1^{1/2}(B)$ be a real number.
Then
\begin{equation}\label{eq_main_tech_result1}
f(k,n)\eqdef
\EE_{G\in\cC_n(B)}[ \II_{{\rm TangleFree}(\ge\nu,<r)}(G) \Trace(H^k_G) ]
\end{equation}
has a $(B,\nu)$-bounded expansion to order $r$,
$$
f(k,n)=c_0(k)+\cdots+c_{r-1}(k)/n^{r-1}+O(1)c_r(k)/n^r,
$$
where
\begin{equation}\label{eq_c_0}
c_0(k)=\sum_{k'|k} \Trace(H_B^{k'}) - h(k)
\end{equation} 
where the sum is over all positive integers, $k'$, dividing $k$
and where $h(k)$ is of growth $(d-1)^{1/2}$; hence
\begin{equation}\label{eq_c_zero_roughly}
c_0(k) = \Trace(H_B^k) + \tilde h(k)
\end{equation} 
where $\tilde h(k)$ is a function of growth $(d-1)^{1/2}$;
furthermore, the larger
bases of each $c_i(k)$ (with respect to $\mu_1^{1/2}(B)$)
is some subset of the eigenvalues
of the model.
Also, the function $h(k)$ in \eqref{eq_c_0} is precisely
the function described in condition~(1) of the definition
of {\em algebraic model}.
Finally, for any $r'\in\naturals$ the function
\begin{equation}\label{eq_main_tech_result2}
\widetilde f(n)  \eqdef
\EE_{G\in\cC_n(B)}[ \II_{{\rm TangleFree}(\ge\nu,<r')}(G)]
=
\Prob_{G\in\cC_n(B)}[ G\in {\rm TangleFree}(\ge\nu,<r') ]
\end{equation}
has an asymptotic expansion in $1/n$ to any order $r$,
$$
\widetilde c_0+\cdots+\widetilde c_{r-1}/n^{r-1}+O(1)/n^r ;
$$
where $\widetilde c_0=1$; furthermore, if $j_0$ is the
smallest order of a $(\ge\nu)$-tangle occurring in $\cC_n(B)$,
then
$\widetilde c_j=0$ for $1\le j<j_0$ and
$\widetilde c_j>0$ for $j=j_0$ (provided that $r\ge j_0+1$ so that
$\widetilde c_{j_0}$ is defined).
\end{theorem}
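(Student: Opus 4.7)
The plan is to use the trace method: write $\Trace(H_G^k) = \snbc(G,k)$ as the number of SNBC walks of length $k$ in $G$, and then decompose these walks by the isomorphism class of their first-encountered ordered visited subgraph. For any $G \in {\rm TangleFree}(\ge\nu,<r)$ and any SNBC walk $w$ in $G$, the connected graph $\ViSu_G(w)$ must itself be tangle-free; in particular, either $\ord(\ViSu(w)) \ge r$ or $\mu_1(\ViSu(w)) < \nu$. The high-order contribution is bounded directly by the algebraic axiom \eqref{eq_algebraic_order_bound}, which gives $\EE[\snbc_{\ge r}(G,k)] \le g(k)/n^r$ for some $g$ of growth $\mu_1(B)$; this is exactly the $O(1)c_r(k)/n^r$ error term required.

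For the remaining walks, I would group them first by their homotopy type $T^\og$ (a pruned ordered graph with $\ord(T) < r$), and then by their $B$-wording $W$ of type $T^\og$. Each such $W$ produces an ordered $B$-graph $S_\Bg^\og = \VLG^\og(T^\og, W)$, and the number of SNBC walks of homotopy type $T^\og$ with wording $W$ in a random $G$ equals $\#([S_\Bg^\og] \cap G)$ multiplied by the number of starting positions on the corresponding cycle (a combinatorial factor depending only on $T^\og$ and the lengths). By the algebraic axiom, $\EE[\#([S_\Bg^\og] \cap G)]$ expands as in \eqref{eq_expansion_S} with coefficients $c_i(S_\Bg) = p_i(\mec a_W, \mec b_W)$ for an \'etale $S_\Bg$, and vanishes otherwise. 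Grouping wordings by the finite collection of $B$-types $(T^\og, \cR_j)$ provided by the definition of algebraic, the $k$-dependence of $\sum_W p_i(\mec a_W, \mec b_W)$ for $W$ of type $\cR_j$ and total length $k$ is governed by the generating series of the regular languages $\cR_j(e)$; standard rational-generating-function theory gives polyexponentials in $k$ whose bases are among the eigenvalues of the $B$-type, which is precisely the content of the ``larger bases are eigenvalues of the model'' statement.

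To identify $c_0(k)$, note that the only pruned connected ordered graphs $T^\og$ of order $0$ are cycles, and each SNBC walk of length $k$ in a cycle of length $k'$ occurs only when $k' \mid k$; this gives the sum $\sum_{k'|k}\Trace(H_B^{k'})$ over those cycles that \emph{do} occur as \'etale $B$-subgraphs. Cycles of $B$ that do not occur in $\cC_n(B)$ contribute the correction $-h(k)$, where $h$ is precisely the function in axiom (1) of the definition of algebraic, and which has growth $(d-1)^{1/2}$ by that axiom. Since $\Trace(H_B^{k'}) = O(\mu_1(B)^{k'})$ and divisors $k' < k$ contribute geometrically less, the tail $\sum_{k'|k,\,k'<k}$ is absorbed into $\tilde h(k)$ of growth $\mu_1^{1/2}(B) \le (d-1)^{1/2}$, yielding \eqref{eq_c_zero_roughly}.

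For the second claim about $\widetilde f(n)$, I would use inclusion-exclusion over the finitely many (by the footnote citing Lemma~9.2 of \cite{friedman_alon}) minimal $(\ge\nu,<r')$-tangles, expressing $1 - \II_{{\rm TangleFree}(\ge\nu,<r')}$ as an alternating sum of indicators ``$G$ contains minimal tangle $\psi_i$''. Each such indicator's expectation is a sum over orderings $\psi_i^\og$ and $B$-graph structures of $\EE[\#([\psi_{i,\Bg}^\og] \cap G)]$, which by \eqref{eq_expansion_S} expands in $1/n$ starting at order $\ord(\psi_i)$, giving $\widetilde c_j = 0$ for $j < j_0$ and $\widetilde c_{j_0} > 0$ from the positivity of $c_{\ord(S)}$ in that axiom. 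The main obstacle throughout is uniformity: the expansion \eqref{eq_expansion_S} for a fixed $S_\Bg^\og$ is only valid for $\#\Edir_S \le n^{1/2}/C$, so the sum over homotopy types and edge lengths must be truncated at about $k \le \sqrt{n}/C'$, and the interplay between the $(B,\nu)$-bounded growth in $k$ and the $1/n^r$ error in the algebraic expansion must be controlled simultaneously; this is exactly what the $\bec\xi$-certified trace framework $\snbc(T^\og, \ge\bec\xi; G, k)$ is designed to handle, by letting us split each edge-length sum at appropriate thresholds $\bec\xi$ and resum the tails using the $(B,\nu)$-bounded polyexponential structure.
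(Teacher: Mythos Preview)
This theorem is not proved in the present paper; it is the main theorem of Article~III, quoted here in Section~\ref{se_main_results} as a black box. So there is no in-paper proof to compare against, but the paper does leave enough hints to see where your outline breaks down.

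Your sketch correctly expands $\EE_{G}[\snbc_{<r}^{\rm TF}(G,k)]$, the expected number of SNBC walks of length $k$ whose visited subgraph has order $<r$ and $\mu_1<\nu$. The gap is that this is \emph{not} the quantity $f(k,n)=\EE_G[\II_{{\rm TangleFree}(\ge\nu,<r)}(G)\Trace(H_G^k)]$ you are asked to expand. The indicator $\II_{{\rm TangleFree}}(G)$ is a function of the whole covering graph $G$, not of the walk; once you pass to ``the remaining walks'' and invoke \eqref{eq_expansion_S} for $\EE[\#([S_\Bg^\og]\cap G)]$, you have silently replaced $\EE[\II_{{\rm TF}}(G)\cdot\#([S_\Bg^\og]\cap G)]$ by $\EE[\#([S_\Bg^\og]\cap G)]$. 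The discrepancy
\[
\EE_{G}\bigl[\II_{{\rm HasTangles}(\ge\nu,<r)}(G)\,\snbc_{<r}^{\rm TF}(G,k)\bigr]
\]
is not $O(g(k)/n^r)$: a crude bound gives only order $n^{-j_0}\cdot n\,\mu_1(B)^k$ with $j_0=\tau_{\rm tang}$, which for $j_0<r$ contaminates every coefficient $c_{j_0},\ldots,c_{r-1}$ in the claimed expansion.

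The fix, and the reason the paper stresses (end of Subsection~\ref{su_B_ordered_strongly_alg} and the last paragraph of the section on algebraic models) that the algebraic axioms must hold for \emph{all} pruned graphs $T$, including disconnected ones, is that one must run inclusion--exclusion on $\II_{{\rm HasTangles}}(G)$ over the finitely many minimal $(\ge\nu,<r)$-tangles and then analyze expectations of the form
\[
\EE_G\Bigl[\#\bigl([S_\Bg^\og]\cap G\bigr)\cdot\#\bigl([(\psi_1)_\Bg^\og]\cap G\bigr)\cdots\#\bigl([(\psi_m)_\Bg^\og]\cap G\bigr)\Bigr],
\]
i.e., expected counts of \emph{pairs} (walk, tangle-configuration). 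Each such product is rewritten as a sum over the possible ordered $B$-graph isomorphism types of the union $S\cup\psi_1\cup\cdots\cup\psi_m\subset G$, which is in general disconnected and may have many beads---hence the need for the homotopy-types-of-pairs machinery developed in Article~II and exploited in Article~III. Your final paragraph applies inclusion--exclusion to $\widetilde f(n)$ alone, but the same device has to be threaded through the trace computation for $f(k,n)$ as well; the $\bec\xi$-certified traces enter later, to control growth once the pair decomposition is in place, not as a substitute for it.
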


We will also need the following result of Article~III, whose
proof is related to the result \eqref{eq_main_tech_result2}
(actually both results are special cases of a more general result
proven there).

\begin{theorem}\label{th_extra_needed}
Let $\cC_n(B)$ be an algebraic model over a graph, $B$, and
let $S_\Bg$ be a connected, pruned graph of positive order
that occurs in this model (recall that this
means that for some $n$ and some $G\in\cC_n(B)$, $G_\Bg$ has a subgraph
isomorphic to $S_\Bg$).  Then for some constant, $C'$, and
$n$ sufficiently large,
$$
\Prob_{G\in\cC_n(B)}\Bigl[ [S_\Bg]\cap G\ne\emptyset
\Bigr] \ge
C' n^{-\ord(S_\Bg)}.
$$
\end{theorem}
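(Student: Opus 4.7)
The plan is to apply the second-moment method to
$$
X \eqdef \#\bigl([S_\Bg^\og]\cap G\bigr),
$$
where $S_\Bg^\og$ is $S_\Bg$ endowed with an arbitrary fixed ordering. Since any $B$-subgraph of $G$ isomorphic to $S_\Bg$ can be endowed with such an ordering, the event $\{[S_\Bg]\cap G\ne\emptyset\}$ coincides with $\{X\ge 1\}$, and the Paley--Zygmund inequality gives
$$
\Prob_{G\in\cC_n(B)}\bigl[[S_\Bg]\cap G\ne\emptyset\bigr] \;\ge\; \frac{(\EE[X])^2}{\EE[X^2]}.
$$
By the algebraic property of $\cC_n(B)$, the $B$-wording of $S_\Bg^\og$ lies in a unique $B$-type for the homotopy type of $S^\og$; restricted to this type the expansion reads $\EE[X]=c_0(S_\Bg)+c_1(S_\Bg)/n+\cdots$, with $c_i(S_\Bg)=0$ for $i<\ord(S)$ and $c_{\ord(S)}(S_\Bg)>0$ (since $S_\Bg$ occurs in the model and its fibre-counting polynomial $p_0$ is positive). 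Hence for all sufficiently large $n$, $\EE[X]\ge C''\,n^{-\ord(S)}$ for some $C''>0$.

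For the second moment I would decompose
$$
\EE[X^2] = \sum_{T^\og_\Bg} N(S^\og_\Bg,T^\og_\Bg)\,\EE\bigl[\#([T^\og_\Bg]\cap G)\bigr],
$$
where $T^\og_\Bg$ ranges over isomorphism classes of ordered $B$-graphs expressible as a union $S_1\cup S_2$ of two ordered $B$-subgraphs each isomorphic to $S^\og_\Bg$, and $N(S^\og_\Bg,T^\og_\Bg)$ counts the number of such ordered decompositions. Since $\#V_T\le 2\#V_S$ and $\#E_T\le 2\#E_S$, the index set is finite and each $N(S^\og_\Bg,T^\og_\Bg)$ is a combinatorial constant independent of $n$. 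The algebraic expansion applied to each $T^\og_\Bg$ yields $\EE[\#([T^\og_\Bg]\cap G)]\le C_T\,n^{-\ord(T)}$ for $n$ large (or $0$ if $T_\Bg$ does not occur in the model), so $\EE[X^2]\le C\sum_{T} n^{-\ord(T)}$ for a constant $C$ independent of $n$.

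The main (mild) obstacle is to show $\ord(T)\ge\ord(S)$ for each $T$ appearing in the sum, which would then give $\EE[X^2]=O(n^{-\ord(S)})$. Write $T=S_1\cup S_2$ with $S_1,S_2\cong S$. If $S_1\cap S_2=\emptyset$ then $\ord(T)=2\ord(S)\ge\ord(S)$ since $\ord(S)\ge 1$. Otherwise $T$ is connected, and I would argue as follows: $S_1$ is connected, so any spanning tree of $S_1$ extends to a spanning tree of $T$ using exactly $\#V_T-\#V_{S_1}$ additional edges drawn from $E_T\setminus E_{S_1}$; consequently $\#E_T-\#E_{S_1}\ge\#V_T-\#V_{S_1}$, which is exactly $\ord(T)\ge\ord(S_1)=\ord(S)$. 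Combining the two moment estimates via Paley--Zygmund yields the desired lower bound of order $n^{-\ord(S)}$ on the probability.
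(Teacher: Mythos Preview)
Your proof is correct. The second-moment (Paley--Zygmund) argument goes through cleanly: the first-moment lower bound comes straight from condition~(2)(a) in the definition of strongly algebraic (which every algebraic model satisfies), and your decomposition of $\EE[X^2]$ over union-graphs $T_\Bg$ is sound --- each such $T$ is pruned (every vertex lies in some $S_i$ and inherits degree $\ge 2$ there), the index set is finite, and your spanning-tree extension argument correctly gives $\ord(T)\ge\ord(S)$. The result $(\EE X)^2/\EE[X^2]=\Omega(n^{-\ord(S)})$ follows.

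The paper does not prove this theorem here; it imports it from Article~III, where the method is \emph{inclusion--exclusion} rather than second moments. The idea sketched is that the probability of $G$ containing \emph{two or more} distinct copies of $S_\Bg$ is $O(n^{-1-\ord(S)})$, which combined with $\EE[\#\text{copies}]\sim c\,n^{-\ord(S)}$ forces $\Prob[\ge 1\text{ copy}]\ge c\,n^{-\ord(S)}-O(n^{-1-\ord(S)})$. That route requires the sharper inequality $\ord(T)\ge\ord(S)+1$ whenever $S_1\ne S_2$ (using that $S$ is pruned and of positive order), and in return yields a two-sided asymptotic for the probability, not just a lower bound. Your approach is more elementary and self-contained --- it needs only $\ord(T)\ge\ord(S)$, avoids the stronger combinatorial lemma, and does not rely on the heavier inclusion--exclusion machinery that Article~III develops for other purposes. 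The trade-off is that Paley--Zygmund gives only the one-sided bound stated in the theorem, whereas the paper's framework ultimately delivers more.

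One small stylistic point: your reference to ``the $B$-wording of $S_\Bg^\og$'' and to ``$p_0$ positive'' is unnecessary here --- the expansion with $c_{\ord(S)}(S_\Bg)>0$ is given directly by condition~(2)(a), without invoking the type decomposition or the polynomials $p_i$ (which are specific to \emph{strongly} algebraic models).
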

Of course, by definition of an algebraic model we know that
for any ordering $S^\og$ on $S$ we have
$$
\EE_{G\in\cC_n(B)}\Bigl[ \#[S_\Bg^\og]\cap G
\Bigr] =
n^{-\ord(S)} \bigl(c + o(1/n) \bigr)
$$
for some $c=c(S_\Bg)>0$ (and actually $c=1$ in all of our basic models);
and from this it follows that (see Article~I or III)
$$
\EE_{G\in\cC_n(B)}\Bigl[ \#[S_\Bg]\cap G
\Bigr] =
\EE_{G\in\cC_n(B)}\Bigl[ \#[S_\Bg^\og]\cap G
\Bigr] / \bigl( \# {\rm Aut}(S_\Bg) \bigr)
$$ 
is also proportional to $n^{-\ord(S)}$.
The idea behind the proof of Theorem~\ref{th_extra_needed} 
in Article~III is that by
inclusion-exclusion one can show that
the probability that a $G\in\cC_n(B)$ contains
two or more subgraphs isomorphic to $S_\Bg$ is $O(n^{-1-\ord(S)})$.
However, Article~III develops more powerful inclusion-exclusion
tools of this sort and proves
theorems that contain Theorem~\ref{th_extra_needed}
as a special case.

\subsection{The Tangle Power of a Model}

%
\begin{definition}\label{de_tau_tang}
Let $\{\cC_n(B)\}_{n\in N}$ be a model over a graph, $B$ with $\mu_1(B)>1$.
By the {\em tangle power of $\{\cC_n(B)\}$}, denoted $\tau_{\rm tang}$,
we mean the smallest order, $\ord(S)$, of any graph, $S$, that
occurs in $\{\cC_n(B)\}$ and satisfies
$\mu_1(S)>\mu_1^{1/2}(B)$.
\end{definition}
The tangle power is finite when $\mu_1(B)>1$, because if $G\in\cC_n(B)$
for some $n\in N$, then $G$ occurs in $\cC_n(B)$ 
and
$\mu_1(G)=\mu_1(B)>\mu_1^{1/2}(B)$; 
hence if $\mu_1(B)>1$, the tangle power of
$\cC_n(B)$ is at most the minimum order of such $G$.
The restriction that $\mu_1(B)>1$ is not a serious restriction, because
we are only interested in $B$ connected and pruned, and hence 
$\mu_1(B)>1$ unless $B$ is a cycle, which is not of interest to us.

The tangle power is relatively easy to bound from below.  In fact,
in Article~VI we use
the results of Section~6.3 of \cite{friedman_alon}
to prove that for any algebraic model over a $d$-regular graph, $B$,
$$
\tau_{\rm tang} \ge  m=m(d)
$$
where
$$
m(d) =
\Bigl\lfloor \bigl( (d-1)^{1/2} - 1 \bigr)/2  \Bigr\rfloor +1 ,
$$
and equality holds 
for each even $d\ge 4$ in the case where $B$ is a bouquet of $d/2$
whole loops.

\ignore{\tiny\red
The results on the $\widetilde c_i$ in the asymptotic expression for 
\eqref{eq_main_tech_result2}
in Theorem~\ref{th_main_tech_result} implies that
that for every $\nu \ge  \mu_1(B)$ and
$r\in\naturals$ we have
$$
C'(\nu,r) n^{-j}
\le \Prob_{G\in\cC_n(B)}[ 
{\rm HasTangles}(\ge\nu,<r)
 ] \le
C(\nu,r) n^{-j}
$$
where $j$ is the smallest order of any $(\ge\nu,<r)$ tangle;
if we restrict $\nu$ to satisfy the strict inequality
$\nu>\mu_1(B)$, then $j\ge\tau_{\rm tang}$, and therefore
\begin{equation}\label{eq_Has_Tangles_non_Alon_upper}
C'(\nu,r) n^{-\tau_{\rm tang}}  \le
\Prob_{G\in\cC_n(B)}[ 
{\rm HasTangles}(\ge\nu,<r)
 ] \le
C(\nu,r) n^{-\tau_{\rm tang}} 
\end{equation} 
Notice that the second inequality implies that 
\begin{equation}\label{eq_trivial_prob_combo}
\Prob_{G\in\cC_n(B)}\Bigl[ 
\bigl( G \in {\rm HasTangles}(\ge\nu,<r) \bigr)
\ \mbox{\rm and}\  %
\bigl({\rm NonAlon}_B(G;\epsilon')>0  \bigr)
\Bigr] \ge
C(\nu,r) n^{-\tau_{\rm tang}}  ,
\end{equation} 
since the probability in \eqref{eq_trivial_prob_combo}
is clearly bounded from above by that in
\eqref{eq_Has_Tangles_non_Alon_upper}.
In this article we will
prove the following a matching lower bound.
}

The following theorem explains our interest in $\tau_{\rm tang}$
regarding the relativized Alon conjecture.

\begin{theorem}\label{th_hastangles_lower_bound}
Let $\{\cC_n(B)\}_{n\in N}$ be an algebraic model of tangle
power $\tau_{\rm tang}$ over a $d$-regular graph, $B$.
Let $S$ be a connected graph that occurs in $\cC_n(B)$ with
$\ord(S)=\tau_{\rm tang}$ and $\mu_1(S)>(d-1)^{1/2}$, and set
\begin{equation}\label{eq_epsilon_0_S} 
\epsilon_0 = \mu_1(S)+\frac{d-1}{\mu_1(S)} - 2(d-1)^{1/2}.
\end{equation} 
Then there is a constant $C'$ and $n_0$ such that
for any $r\in\naturals$ and real $\nu$ with
\begin{equation}\label{eq_r_nu_for_hastangles_lower_bound}
r\ge \ord(S), \quad (d-1)^{1/2}< \nu \le \mu_1(S),
\end{equation} 
for any $n\ge n_0$ we have
\begin{equation}\label{eq_Has_Tangles_non_Alon_lower}
\Prob_{G\in\cC_n(B)}\Bigl[ 
\bigl( G \in {\rm HasTangles}(\ge\nu,<r) \bigr)
\ \mbox{\rm and}\  %
\bigl({\rm NonAlon}_B(G;\epsilon_0/2)>0  \bigr)
\Bigr] \ge
C' n^{-\tau_{\rm tang}}  .
\end{equation} 
Furthermore, for any $r,\nu$ satisfying 
\eqref{eq_r_nu_for_hastangles_lower_bound}
there is a constant $C=C(\nu,r)$ such that
\begin{equation}\label{eq_Has_Tangles_non_Alon_upper}
\Prob_{G\in\cC_n(B)}\Bigl[ 
\bigl( G \in {\rm HasTangles}(\ge\nu,<r) \bigr)
\ \mbox{\rm and}\  %
\bigl({\rm NonAlon}_B(G;\epsilon)>0  \bigr)
\Bigr] \le
C(\nu,r) n^{-\tau_{\rm tang}}  .
\end{equation} 
\end{theorem}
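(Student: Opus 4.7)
The plan is to prove the two bounds by essentially independent arguments. The upper bound \eqref{eq_Has_Tangles_non_Alon_upper} follows nearly immediately from monotonicity of probability together with the asymptotic expansion of Theorem~\ref{th_main_tech_result}, while the lower bound \eqref{eq_Has_Tangles_non_Alon_lower} combines Theorem~\ref{th_extra_needed} (to exhibit $S$ as a subgraph of $G$ with positive probability) with the Friedman--Tillich-style construction developed in Section~\ref{se_fundamental_subgr} (to convert the presence of $S$ into a new eigenvalue outside the Alon bound).

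For the upper bound, I would first observe that the joint event in \eqref{eq_Has_Tangles_non_Alon_upper} is contained in $\{G\in {\rm HasTangles}(\ge\nu,<r)\}$, so its probability is bounded by $1-\Prob_{G\in\cC_n(B)}[G\in {\rm TangleFree}(\ge\nu,<r)]$. Applying \eqref{eq_main_tech_result2} of Theorem~\ref{th_main_tech_result} with $r'=r$, this quantity is $O(n^{-j_0})$, where $j_0$ is the smallest order of a $(\ge\nu)$-tangle occurring in $\cC_n(B)$. The hypothesis $\nu>(d-1)^{1/2}=\mu_1^{1/2}(B)$ forces every $(\ge\nu)$-tangle to satisfy $\mu_1>\mu_1^{1/2}(B)$, and hence to be eligible in the definition of $\tau_{\rm tang}$; so $j_0\ge\tau_{\rm tang}$, which yields \eqref{eq_Has_Tangles_non_Alon_upper}.

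For the lower bound, since $S$ occurs in $\cC_n(B)$ I would fix any $B$-graph structure $S_\Bg$ on $S$. Theorem~\ref{th_extra_needed} then supplies constants $C_1>0$ and $n_1$ with $\Prob_{G\in\cC_n(B)}\bigl[[S_\Bg]\cap G\ne\emptyset\bigr]\ge C_1 n^{-\tau_{\rm tang}}$ for all $n\ge n_1$. It therefore suffices to show that whenever $G$ contains a copy of $S_\Bg$ and $n$ is sufficiently large, both events in \eqref{eq_Has_Tangles_non_Alon_lower} occur. Membership of $G$ in ${\rm HasTangles}(\ge\nu,<r)$ is immediate from $S\subset G$, $\mu_1(S)\ge\nu$, and $\ord(S)=\tau_{\rm tang}<r$. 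The non-Alon assertion is exactly what Section~\ref{se_fundamental_subgr} is designed to deliver: the hypothesis $\mu_1(S)>(d-1)^{1/2}$ combined with Ihara's determinantal formula from Section~\ref{se_ihara} pairs the top Hashimoto eigenvalue $\mu_1(S)$ of $S$ with the adjacency value $\mu_1(S)+(d-1)/\mu_1(S)=2(d-1)^{1/2}+\epsilon_0$, and extending this eigenfunction from $S$ to $G$ in the Friedman--Tillich manner produces, for all $n$ sufficiently large, a \emph{new} eigenvalue of $A_G$ of absolute value at least $2(d-1)^{1/2}+\epsilon_0/2$.

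The main obstacle is the Friedman--Tillich step itself: extending the eigenfunction and computing a large Rayleigh quotient is the easy part, while the delicate point is showing that the extension is sufficiently far from the pullback space so that it contributes a \emph{new} eigenvalue (rather than merely reproducing one of $A_B$), and that the $n$-dependent error in the corresponding orthogonal decomposition is eventually smaller than the buffer $\epsilon_0/2$. Once Section~\ref{se_fundamental_subgr} supplies the associated threshold $n_2$, I would set $n_0=\max(n_1,n_2)$ and absorb constants to obtain the claimed lower bound; the upper bound constant $C(\nu,r)$ is delivered directly by Theorem~\ref{th_main_tech_result}.
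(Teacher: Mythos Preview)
Your proposal is correct and follows essentially the same route as the paper's proof: the upper bound via monotonicity and the asymptotic expansion of $\Prob[G\in{\rm TangleFree}(\ge\nu,<r)]$ from Theorem~\ref{th_main_tech_result}, and the lower bound by combining Theorem~\ref{th_extra_needed} with the Fundamental Subgraph Lemma of Section~\ref{se_fundamental_subgr}. The only minor imprecision is that you should fix not ``any'' $B$-graph structure on $S$ but one inherited from a $G\in\cC_n(B)$ containing $S$, so that $S_\Bg$ is guaranteed to occur in the model; otherwise your outline matches the paper's argument exactly.
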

When we prove this theorem in Section~\ref{se_remaining_proofs};
there we will see that \eqref{eq_Has_Tangles_non_Alon_upper}
is an immediate consequence of Theorem~\ref{th_main_tech_result},
but our proof of
\eqref{eq_Has_Tangles_non_Alon_lower} requires some work, which
generalizes some of the results in
Friedman-Tillich \cite{friedman_tillich_generalized}.

Theorem~\ref{th_hastangles_lower_bound}
implies that for $\epsilon>0$ sufficiently small (namely
$\epsilon\le\epsilon_0/2$ with $\epsilon_0$ as in
\eqref{eq_epsilon_0_S}) 
we have
$$
\Prob_{G\in\cC_n(B)}\Bigl[ 
{\rm NonAlon}_B(G;\epsilon)>0  
\Bigr] \ge
C' n^{-\tau_{\rm tang}}  .
$$
We conjecture that this lower bound has a matching upper bound
to within a constant
depending on $\epsilon>0$, and in Article~VI we will prove this
for our basic models when $B$ is $d$-regular and Ramanujan.
Let us explain this in more detail.

\subsection{The Algebraic Power of a Model}

In this article we combine the results of Articles~I--IV to prove
the following main result.

\begin{theorem}\label{th_rel_Alon_regular2}
Let $\cC_n(B)$ be an
algebraic model
over a $d$-regular graph $B$.
For any $\nu$ with
$(d-1)^{1/2}<\nu<d-1$, let $\epsilon'>0$ be given by
$$
2(d-1)^{1/2}+ \epsilon' = \nu + \frac{d-1}{\nu}.
$$
Then
\begin{enumerate}
\item
there is an
integer $\tau=\tau_{\rm alg}(\nu,r)\ge 1$ such that
for any sufficiently small $\epsilon>0$ there are constants
$C=C(\epsilon),C'>0$ such that for sufficiently large $n$ we have
\begin{equation}\label{eq_rel_alon_expect_lower_and_upper}
n^{-\tau} C'
\le
\EE_{G\in\cC_n(B)}[ \II_{{\rm TangleFree}(\ge\nu,<r)}(G)
{\rm NonAlon}_d(G;\epsilon'+\epsilon) ]
\le
n^{-\tau}   C(\epsilon) ,
\end{equation}
or
\item
for all $j\in\naturals$ and $\epsilon>0$ we have
\begin{equation} 
\label{eq_rel_alon_expect_upper_infinite}
\EE_{G\in\cC_n(B)}[ \II_{{\rm TangleFree}(\ge\nu,<r)}(G)
{\rm NonAlon}_d(G;\epsilon'+\epsilon) ] \le O(n^{-j})
\end{equation} 
in which case we use the notation $\tau_{\rm alg}(\nu,r)=+\infty$.
\end{enumerate}
Moreover, if $\tau=\tau_{\rm alg}(\nu,r)$ is finite, then for
some eigenvalue, $\ell\in\reals$, of the model with $|\ell|>\nu$,
there is a real $C_\ell>0$
such that for sufficiently small $\theta>0$
\begin{equation}\label{eq_new_eigenvalues_near_ell}
\lim_{n\to\infty} 
\EE_{G\in\cC_n(B)}\bigl[\#
\bigl(\specnew_B(H_G)\cap B_{n^{-\theta}}(\ell) \bigr) 
\II_{{\rm TangleFree}(\ge\nu,<r)}(G)
\bigr]
= C_\ell n^{-\tau} + o(n^{-\tau}) .
\end{equation} 
\end{theorem}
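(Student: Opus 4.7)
The plan is to combine Ihara's determinantal formula (Section~\ref{se_ihara}) with the trace expansion from Theorem~\ref{th_main_tech_result}, and then translate trace estimates into eigenvalue counts via Markov-type bounds. For $d$-regular $B$, Ihara's formula produces a correspondence between new adjacency eigenvalues $\lambda$ of $G$ and new Hashimoto eigenvalues $\mu$ via $\lambda=\mu+(d-1)/\mu$. Under this map the circle $|\mu|=\sqrt{d-1}$ covers the Alon interval $[-2\sqrt{d-1},2\sqrt{d-1}]$, and the chosen $\nu$ and $\epsilon'$ satisfy $2\sqrt{d-1}+\epsilon'=\nu+(d-1)/\nu$; so new adjacency eigenvalues with $|\lambda|>2\sqrt{d-1}+\epsilon'+\epsilon$ correspond to new Hashimoto eigenvalues with $|\mu|\ge\nu+\delta$ for some $\delta=\delta(\epsilon)>0$. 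It therefore suffices to estimate the expected count of new Hashimoto eigenvalues outside this slightly enlarged circle on tangle-free graphs.

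Next, writing $f(k,n)=\EE_{G\in\cC_n(B)}[\II_{{\rm TangleFree}(\ge\nu,<r)}(G)\,\Trace(H_G^k)]$, Theorem~\ref{th_main_tech_result} yields the $(B,\nu)$-bounded expansion $f(k,n)=c_0(k)+\sum_{i=1}^{r-1}c_i(k)/n^i+O(c_r(k)/n^r)$, with $c_0(k)=\Trace(H_B^k)+\tilde h(k)$ where $\tilde h$ has growth $\sqrt{d-1}$. Subtracting the contribution of the old eigenvalues (those of $H_B$) expresses the expected sum of $k$-th powers of new Hashimoto eigenvalues on tangle-free $G$ as $\sum_{i\ge 1}c_i(k)/n^i+O(c_r(k)/n^r)$ plus a low-growth error. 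I then \emph{define} $\tau_{\rm alg}(\nu,r)$ to be the smallest $i\in\{1,\ldots,r-1\}$ for which the polyexponential $c_i(k)$ has a base of absolute value strictly greater than $\nu$, and set $\tau_{\rm alg}(\nu,r)=+\infty$ if no such $i$ exists. By the last sentence of Theorem~\ref{th_main_tech_result}, the larger bases of each $c_i$ lie in the finite set of eigenvalues of the model, so this is well defined.

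For the upper bounds, I apply Markov's inequality: for any even $k$, the number of Hashimoto eigenvalues with $|\mu|\ge\nu+\delta$ is at most $(\nu+\delta)^{-k}\sum_{\mu}|\mu|^{k}$, and by passing from $H_G^k$ to $(H_G H_G^{*})^{k/2}$ or to $H_G^{2k}$ one replaces $|\mu|^k$ by a non-negative quantity controllable by the trace expansion. In the case $\tau_{\rm alg}(\nu,r)=+\infty$, every $c_i$ with $i<r$ is of growth at most $\nu$, so choosing $k=C_j\log n$ with $C_j$ large (and $r$ chosen larger than $j$ so that the remainder $c_r(k)/n^r$ is also $O(n^{-j})$) yields \eqref{eq_rel_alon_expect_upper_infinite}. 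If $\tau=\tau_{\rm alg}(\nu,r)<\infty$, the $\tau$-th term dominates and one obtains the upper bound in \eqref{eq_rel_alon_expect_lower_and_upper} of the form $C(\epsilon)n^{-\tau}$ after optimizing $k$ in terms of $\epsilon$ and $\delta$.

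The matching lower bound in \eqref{eq_rel_alon_expect_lower_and_upper} and the concentration statement \eqref{eq_new_eigenvalues_near_ell} are the main obstacle. The strategy is to exploit the polyexponential structure $c_\tau(k)=\sum_j q_j(k)\ell_j^k$ and select a dominant real base $\ell$ with $|\ell|>\nu$ (using a set of eigenvalues of the model, which contains all $\ell_j$). If the expectation in \eqref{eq_rel_alon_expect_lower_and_upper} were $o(n^{-\tau})$, then a Markov-type reverse argument on tangle-free $G$ would force the expected trace contribution of new eigenvalues with $|\mu|>\nu$ to be $o(\ell^k/n^\tau)$ at every $k$, contradicting $c_\tau(k)/n^\tau\sim q(k)\ell^k/n^\tau$ at large $k$. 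For \eqref{eq_new_eigenvalues_near_ell} one must further localize the eigenvalues near $\ell$ rather than merely controlling $|\mu|$: the plan is to compare the trace at $k,k+1,k+2,\ldots$ (so that the different bases $\ell_j$ of $c_\tau$ can be separated via their characters $\ell_j^{k+s}/\ell_j^k$), and to show that any spread of eigenvalues of size exceeding $n^{-\theta}$ around $\ell$ would cause a multiplicative distortion $(1+O(kn^{-\theta}))^k$ which is incompatible with the exact polyexponential $c_\tau(k)$ for suitably chosen $k$. Throughout, the tangle-free indicator is essential, both so that Theorem~\ref{th_main_tech_result} applies and so that the large eigenvalues produced by tangles (as in Theorem~\ref{th_hastangles_lower_bound}) do not contaminate the estimate; this is why the result is stated with the $\II_{{\rm TangleFree}(\ge\nu,<r)}$ factor rather than for all $G$.
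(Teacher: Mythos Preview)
Your overall setup is right---use Lemma~\ref{le_corr_main_tech} (i.e., Theorem~\ref{th_main_tech_result} minus the old contribution) to get a $(B,\nu)$-bounded expansion for the new Hashimoto trace on tangle-free $G$, and use Ihara to pass between $A_G$- and $H_G$-eigenvalues.  But the paper does \emph{not} proceed via Markov-type inequalities.  Instead it packages the random matrices
\[
M(G)\;=\;\II_{{\rm TangleFree}(\ge\nu,<r)}(G)\,\tilde H_G
\]
into a $(\Lambda_0,\Lambda_1)$-matrix model with $\Lambda_0=\nu$, $\Lambda_1=d-1$ (Definition~\ref{de_matrix_model}): by Ihara the new $H_G$-eigenvalues outside $B_\nu(0)$ are automatically real in $[-(d-1),d-1]$, and Theorem~\ref{th_main_tech_result} supplies the required order-$r$ expansions of $\EE[\Trace(M^k)]$.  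One then applies the Sidestepping Theorem (Theorem~\ref{th_sidestep}) as a black box.  That theorem already contains the dichotomy \eqref{eq_largest_j}/\eqref{eq_mostly_near_Lambda_0_or_Ls}, the fact that $p_j$ is a \emph{pure} exponential \eqref{eq_thm_p_j_pure_exp} with strictly positive coefficients, and the concentration statement \eqref{eq_thm_C_ell_as_limit}, which become \eqref{eq_rel_alon_expect_upper_infinite}, \eqref{eq_rel_alon_expect_lower_and_upper}, and \eqref{eq_new_eigenvalues_near_ell} after translating back via Ihara.  So the paper's proof is essentially a two-line reduction to Theorem~\ref{th_sidestep}.

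Your Markov route has genuine gaps.  For the upper bound: $H_G$ is not symmetric, and you do not have a trace expansion for $(H_G H_G^{*})^{k/2}$; for $H_G^{2k}$ the complex eigenvalues on $|\mu|=\sqrt{d-1}$ contribute $2\,\mathrm{Re}(\mu^{2k})$, which is not sign-definite, so the inequality ``$\#\{|\mu|\ge\nu+\delta\}\le(\nu+\delta)^{-2k}\Trace(\tilde H_G^{2k})$'' fails as written.  (It can be repaired by separately bounding the inside-ball contribution by $O(n)\nu^{2k}$, but then you must let the expansion order grow with $j$, which your fixed-$r$ expansion does not immediately give.)  The lower bound has a sharper problem: from $\EE[\#\{|\mu|>\nu+\delta\}]=o(n^{-\tau})$ you only get $\EE[\text{large-eigenvalue trace contribution}]=o\bigl(n^{-\tau}(d-1)^k\bigr)$, since each surviving eigenvalue can be as large as $d-1$; this does \emph{not} contradict $c_\tau(k)/n^\tau\asymp \ell^k/n^\tau$ because $\ell\le d-1$.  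And you have not shown that $c_\tau$ has no polynomial prefactor or that its leading coefficient is positive, both of which are needed for \eqref{eq_rel_alon_expect_lower_and_upper} and \eqref{eq_new_eigenvalues_near_ell}.  Your sketch for \eqref{eq_new_eigenvalues_near_ell}---comparing traces at consecutive $k$ to separate the bases---is exactly the mechanism behind Theorem~\ref{th_sidestep}, but that is a substantial argument carried out in Article~IV, not something one can wave through.  The fix is simply to invoke Theorem~\ref{th_sidestep} rather than redo it.
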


[Note that $C'$ in \eqref{eq_rel_alon_expect_lower_and_upper}
is independent of small $\epsilon>0$ since as $\epsilon$
decreases
${\rm NonAlon}_d(G;\epsilon)$ is non-decreasing.]

Notice if $\nu_1\le\nu_2$ and $r_1\ge r_2$ then
$$
\II_{{\rm TangleFree}(\ge\nu_2,<r_2)}(G)  
\le \II_{{\rm TangleFree}(\ge\nu_1,<r_1)}(G) ,
$$
for the simple reason that 
$\II_{{\rm TangleFree}(\ge\nu_2,<r_2)}(G)=1$ implies that
$G$ has no $(\ge\nu_2,<r_2)$-tangles, and hence no
$(\ge\nu_1,<r_1)$-tangles;
then \eqref{eq_rel_alon_expect_lower_and_upper} and
\eqref{eq_rel_alon_expect_upper_infinite} imply that
\begin{equation}\label{eq_tau_alg_nu_r_compare}
\tau_{\rm alg}(\nu_1,r_1) \le
\tau_{\rm alg}(\nu_2,r_2).
\end{equation} 

\begin{definition}\label{de_algebraic_power}
Let $\{\cC_n(B)\}_{n\in N}$ be an algebraic model over a $d$-regular
graph $B$.
For each $r\in\naturals$ and $\nu$ with $(d-1)^{1/2}<\nu<d-1$, let
$\tau_{\rm alg}(\nu,r)$ be as in Theorem~\ref{th_rel_Alon_regular2}.
We define the {\em algebraic power} of the model $\cC_n(B)$ to be
$$
\tau_{\rm alg} =
\max_{\nu>(d-1)^{1/2},r} \tau_{\rm alg}(\nu,r) =
\limsup_{r\to\infty,\ \nu\to(d-1)^{1/2}}
\tau_{\rm alg}(\nu,r)
$$
where $\nu$ tends to $(d-1)^{1/2}$ from above
(and we allow $\tau_{\rm alg}=+\infty$ when this maximum is
unbounded or if $\tau_{\rm alg}(\nu,r)=\infty$ for some $r$ 
and $\nu>(d-1)^{1/2}$).
\end{definition}
Of course, according to Theorem~\ref{th_rel_Alon_regular2},
$\tau_{\rm alg}(\nu,r)\ge 1$ for all $r$ and all relevant $\nu$, and hence
$\tau_{\rm alg}\ge 1$.
Furthermore, since $\tau_{\rm alg}(\nu,r)$ is an integer or $+\infty$, if
$\tau_{\rm alg}$ is finite then for some $\nu_0>(d-1)^{1/2}$ 
and $r_0\in\naturals$ we have
$\tau_{\rm alg}(\nu_0,r_0)=\tau_{\rm alg}$; in this case 
\eqref{eq_tau_alg_nu_r_compare} implies that
for $(d-1)^{1/2}<\nu\le\nu_0$ and $r\ge r_0$ we have
$\tau_{\rm alg}(\nu,r)=\tau_{\rm alg}$.

Since the number of new eigenvalues of a covering map $G\to B$ of degree
$n$ is
$(\#\Edir_G)-(\#\Edir_B)=(n-1)(\#\Edir_B)$, we have that
$$
\Prob_{G\in\cC_n(B)}\Bigl[ 
\bigl( G \in {\rm TangleFree}(\ge\nu,<r) \bigr)
\ \mbox{\rm and}\  %
\bigl({\rm NonAlon}_B(G;\epsilon)>0  \bigr)
\Bigr] 
$$
is between $1$ and $1/((n-1)(\#\Edir_B)$ times
$$
\EE_{G\in\cC_n(B)}[ \II_{{\rm TangleFree}(\ge\nu,<r)}(G)
{\rm NonAlon}_d(G;\epsilon) ] .
$$

\subsection{A More Precise Form of the First Main Theorem}

Taking $\nu\to (d-1)^{1/2}$ (from above) and
$r\to\infty$ 
in~\eqref{eq_Has_Tangles_non_Alon_upper} 
and~\eqref{eq_Has_Tangles_non_Alon_lower},
it is not hard to deduce our second main result,
which refines our first.

%
\begin{theorem}\label{th_rel_Alon_regular}
Let $B$ be a $d$-regular graph, and let
$\cC_n(B)$ be an
algebraic model of tangle power $\tau_{\rm tang}$
and algebraic power $\tau_{\rm alg}$ (both of which are at least $1$).
Let
$$
\tau_1 = \min(\tau_{\rm tang},\tau_{\rm alg}), \quad
\tau_2 = \min(\tau_{\rm tang},\tau_{\rm alg}+1).
$$
Then $\tau_2\ge \tau_1\ge 1$, and
for $\epsilon>0$ sufficiently small there are
$C,C'$ such that for sufficiently large $n$ we have
\begin{equation}\label{eq_first_main}
C' n^{-\tau_2}
\le
\Prob_{G\in\cC_n(B)}\bigl[
{\rm NonAlon}_d(G;\epsilon)>0
\bigr]
\le
C n^{-\tau_1}.
\end{equation}
\end{theorem}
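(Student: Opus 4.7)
The plan is to condition on whether or not $G$ contains a small tangle, writing
$$\Prob_{G\in\cC_n(B)}[{\rm NonAlon}_d(G;\epsilon)>0] = A_n + B_n,$$
where $A_n$ and $B_n$ denote the tangle-free and tangle-containing contributions with respect to parameters $(\ge\nu, <r)$ to be chosen. Given small $\epsilon>0$, I would choose $\nu$ just above $(d-1)^{1/2}$ so that $\epsilon'=\nu+(d-1)/\nu-2(d-1)^{1/2}$ is as small as needed, and $r$ large enough that $r>\tau_{\rm tang}$ and, when $\tau_{\rm alg}$ is finite, $\tau_{\rm alg}(\nu,r)=\tau_{\rm alg}$; this is possible by \eqref{eq_tau_alg_nu_r_compare} and Definition~\ref{de_algebraic_power}. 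Since ${\rm NonAlon}_d(G;\cdot)$ is nonincreasing in its second argument, bounds on ${\rm NonAlon}_d(G;\epsilon'+\epsilon_1)$ translate to bounds on ${\rm NonAlon}_d(G;\epsilon)$ in the appropriate direction once $\epsilon_1>0$ is chosen to match the direction of the inequality needed.

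\textbf{Upper bound.} Markov's inequality yields
$$A_n \le \EE_{G\in\cC_n(B)}\bigl[\II_{{\rm TangleFree}(\ge\nu,<r)}(G)\cdot {\rm NonAlon}_d(G;\epsilon'+\epsilon_1)\bigr]$$
with $\epsilon_1$ chosen so that $\epsilon'+\epsilon_1\le\epsilon$; Theorem~\ref{th_rel_Alon_regular2} then bounds this by $Cn^{-\tau_{\rm alg}(\nu,r)}=Cn^{-\tau_{\rm alg}}$, and by any $n^{-j}$ when $\tau_{\rm alg}=+\infty$. Theorem~\ref{th_main_tech_result}, via the asymptotic expansion \eqref{eq_main_tech_result2} whose leading nonzero $\widetilde c_j$ sits at $j=\tau_{\rm tang}$ for $\nu$ just above $(d-1)^{1/2}$, gives $B_n\le \Prob[G\in{\rm HasTangles}(\ge\nu,<r)]\le C'n^{-\tau_{\rm tang}}$. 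Adding these two estimates produces the claimed upper bound of order $n^{-\tau_1}$.

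\textbf{Lower bound.} Here I bound $\Prob[{\rm NonAlon}_d(G;\epsilon)>0]$ below by the maximum of two complementary estimates. First, Theorem~\ref{th_hastangles_lower_bound}, applied to a connected graph $S$ occurring in $\cC_n(B)$ of order $\tau_{\rm tang}$ with $\mu_1(S)>(d-1)^{1/2}$, gives $B_n\ge C'n^{-\tau_{\rm tang}}$ for $\epsilon$ at most half the $\epsilon_0$ of \eqref{eq_epsilon_0_S}. Second, since the total number of new eigenvalues of $G\to B$ is $(n-1)(\#\Edir_B)=O(n)$, the elementary bound $\Prob[X>0]\ge \EE[X]/\max X$ applied to $X=\II_{{\rm TangleFree}(\ge\nu,<r)}\cdot{\rm NonAlon}_d$ yields
$$A_n \ge \frac{\EE[\II_{{\rm TangleFree}(\ge\nu,<r)}(G)\cdot{\rm NonAlon}_d(G;\epsilon'+\epsilon_1)]}{(n-1)(\#\Edir_B)} \ge C''n^{-\tau_{\rm alg}-1},$$
using the lower bound in Theorem~\ref{th_rel_Alon_regular2}, now with $\epsilon_1$ chosen so that $\epsilon'+\epsilon_1\ge\epsilon$. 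Combining and recalling $\tau_2=\min(\tau_{\rm tang},\tau_{\rm alg}+1)$ gives the claimed lower bound of order $n^{-\tau_2}$.

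The principal technical delicacy, and the origin of the $+1$ gap between $\tau_1$ and $\tau_2$, is precisely this expectation-to-probability conversion: since ${\rm NonAlon}_d$ is bounded only by $O(n)$ rather than $O(1)$, the trivial bound $\Prob\ge\EE/\max$ forces the loss of a factor $n$ in the lower bound on $A_n$. When $\tau_{\rm tang}\le\tau_{\rm alg}+1$ the HasTangles estimate dominates and $\tau_1=\tau_2=\tau_{\rm tang}$, so the $+1$ is invisible; only when $\tau_{\rm tang}>\tau_{\rm alg}+1$ does a genuine gap open. The conjecture that $\tau_{\rm alg}=+\infty$ for all algebraic models, verified in Article~VI for regular Ramanujan $B$, would eliminate the tangle-free contribution entirely and collapse both $\tau_1$ and $\tau_2$ to $\tau_{\rm tang}$.
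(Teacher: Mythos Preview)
Your proposal is correct and follows essentially the same approach as the paper: the same decomposition into the ${\rm HasTangles}$ and ${\rm TangleFree}$ contributions, the same invocations of Theorems~\ref{th_main_tech_result}, \ref{th_hastangles_lower_bound}, and \ref{th_rel_Alon_regular2}, and the same expectation-to-probability conversion costing a factor of $n$ for the lower bound on the tangle-free part. The paper organizes the lower bound as an explicit case split on whether $\tau_2=\tau_{\rm tang}$ or $\tau_2=\tau_{\rm alg}+1$, while you phrase it as taking the maximum of the two estimates, but the content is identical; one cosmetic point is that the deterministic upper bound on ${\rm NonAlon}_d(G;\epsilon)$ should be $(n-1)(\#V_B)$ (the number of new $A_G$-eigenvalues) rather than $(n-1)(\#\Edir_B)$, though this is immaterial since both are $O(n)$.
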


Since ${\rm NonAlon}_d(G;\epsilon)$ is non-increasing in $\epsilon$,
the value $C'$ in \eqref{eq_first_main} is independent of 
sufficiently small $\epsilon$; however, $C$ depends on $\epsilon$.

\subsection{The Main Theorem of Article~VI}

The following theorem will be proven in Article~VI.

\begin{definition}\label{de_Ramanujan}
We say that a $d$-regular graph $B$ is {\em Ramanujan} if all eigenvalues
of $A_B$ lie in
$$
\{d,-d\} \cup \Bigl[ -2\sqrt{d-1}, 2\sqrt{d-1} \Bigr] .
$$
\end{definition}

\begin{theorem}\label{th_second_main_theorem}
Let $\{\cC_n(B)\}_{n\in N}$ be one of our basic models
over $d$-regular Ramanujan
graph, $B$.
Then $\tau_{\rm alg}=+\infty$.
\end{theorem}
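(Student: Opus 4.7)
The plan is to establish a Ramanujan property for the eigenvalues of the model, in the sense of Section~\ref{su_B_ordered_strongly_alg}: every eigenvalue of the model of absolute value exceeding $(d-1)^{1/2}$ lies in $\Spec(H_B)$. Combined with the Ramanujan hypothesis on $B$ via Ihara's formula, this will force such eigenvalues into $\{\pm(d-1)\}$, and Theorem~\ref{th_main_tech_result} will then rule out case~(1) of Theorem~\ref{th_rel_Alon_regular2} for every $(\nu,r)$ with $\nu>(d-1)^{1/2}$, yielding $\tau_{\rm alg}=+\infty$.

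First I would use Ihara's determinantal formula (Section~\ref{se_ihara}) to pin down $\Spec(H_B)$. For a $d$-regular graph, each non-trivial eigenvalue $\lambda$ of $A_B$ produces a pair $\mu,\mu'$ of eigenvalues of $H_B$ with $\mu+\mu'=\lambda$ and $\mu\mu'=d-1$, alongside the trivial eigenvalues $\pm 1$. Under the Ramanujan bound $|\lambda|\le 2(d-1)^{1/2}$ for non-trivial $\lambda$, one has $|\mu|=|\mu'|=(d-1)^{1/2}$. Together with $|\pm 1|=1<(d-1)^{1/2}$ for $d\ge 3$, this gives $\Spec(H_B)\cap\{|\mu|>(d-1)^{1/2}\}\subseteq\{\pm(d-1)\}$, with $-(d-1)$ appearing only if $B$ is bipartite.

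The main combinatorial step, and what I expect to be the principal obstacle, is to show that for each pruned ordered graph $T^\og$ and each of our basic models (permutation, cyclic, and their involution variants), one can choose the $B$-types $(T^\og,\cR_j)$ witnessing the algebraic property so that every regular language $\cR_j(e)$ has its eigenvalues contained in $\Spec(H_B)$. Each $\cR_j(e)$ describes non-backtracking walks in $B$ subject to combinatorial constraints encoding both the homotopy type $T^\og$ and the cycle structure of the random permutations specific to the model. When $B$ is $d$-regular, the transfer matrices recognizing these languages can be realized as restrictions or block-triangular modifications of $H_B$, so their spectra inject into $\Spec(H_B)$; the Ramanujan hypothesis on $B$ then forces the model's eigenvalues with absolute value above $(d-1)^{1/2}$ into $\{\pm(d-1)\}$. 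This must be carried out case-by-case for the four basic models, using their edge independence and the explicit distributions of cycle types.

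With this Ramanujan property of the model in hand, Theorem~\ref{th_main_tech_result} gives, for any $\nu\in((d-1)^{1/2},d-1)$ and any $r$, a $(B,\nu)$-bounded expansion of $\EE_{G\in\cC_n(B)}[\II_{{\rm TangleFree}(\ge\nu,<r)}(G)\Trace(H_G^k)]$ whose coefficients $c_i(k)$ have larger bases (with respect to $\nu$) only at $\pm(d-1)$. These large bases represent contributions of the old Perron-Frobenius eigenvalues $\pm(d-1)\in\Spec(H_B)\subseteq\Spec(H_G)$, and using \eqref{eq_c_zero_roughly} together with the expansion \eqref{eq_main_tech_result2} of $\Prob[{\rm TangleFree}(\ge\nu,<r)]$, they cancel order-by-order in $1/n$ against $\Prob[{\rm TangleFree}]\cdot\Trace(H_B^k)$. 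What remains is the expansion of $\EE[\II_{{\rm TangleFree}(\ge\nu,<r)}(G)\sum_{\mu\in\specnew_B(H_G)}\mu^k]$, whose coefficients are all of growth at most $\nu$. A Markov-type argument in $k\asymp\log n$, in the spirit of Section~\ref{se_markov_bounds}, then converts this trace bound into $\EE[\II_{{\rm TangleFree}(\ge\nu,<r)}\cdot{\rm NonAlon}_d(G;\epsilon'+\epsilon)]=O(n^{-j})$ for every $j$, establishing case~(2) of Theorem~\ref{th_rel_Alon_regular2} and hence $\tau_{\rm alg}=+\infty$.
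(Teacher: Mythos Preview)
Your ``main combinatorial step'' is not the obstacle you think it is: Section~\ref{se_ours_algebraic} of this very paper already proves that for every one of our basic models a set of eigenvalues consists of the eigenvalues of $H_B$ together with possibly $1$. So once $B$ is Ramanujan and $d\ge 3$, the only model eigenvalues of absolute value exceeding $(d-1)^{1/2}$ are indeed in $\{d-1,-(d-1)\}$; no further case-by-case analysis of the regular languages is needed.

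The genuine gap is your cancellation claim. Lemma~\ref{le_corr_main_tech} already performs the subtraction you describe: the $i$-th coefficient of the expansion of $\EE[\II_{{\rm TangleFree}}(\Trace H_G^k-\Trace H_B^k)]$ equals $c_i(k)-\widetilde c_i\,\Trace(H_B^k)$. That lemma proves the resulting $c_0(k)$ is of growth $(d-1)^{1/2}$, but says nothing of the sort for $i\ge 1$; the larger bases of those coefficients may still include $\pm(d-1)$. Your assertion that ``these large bases represent contributions of the old Perron--Frobenius eigenvalues \ldots\ and cancel order-by-order'' is exactly the statement $\tau_{\rm alg}=+\infty$ in disguise, and it is not forced by the Ramanujan hypothesis alone. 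Concretely, by the ``Moreover'' clause of Theorem~\ref{th_rel_Alon_regular2}, a finite $\tau_{\rm alg}(\nu,r)$ would mean that, with probability of order $n^{-\tau}$, tangle-free $G$ have new $H_G$-eigenvalues within $n^{-\theta}$ of $d-1$ or $-(d-1)$; equivalently, new $A_G$-eigenvalues close to $\pm d$. Nothing in your argument excludes this: a $G\in\cC_n(B)$ can be disconnected, or more generally have poor expansion, and such events do occur with probability that is a power of $1/n$.

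This is precisely why the paper defers the proof to Article~VI and invokes a \emph{magnification} condition. The actual argument (as the paragraph after the theorem statement indicates) is not a spectral cancellation but a counting/expansion estimate showing that, in our basic models, tangle-free covers have new adjacency eigenvalues bounded away from $\pm d$ with probability $1-O(n^{-j})$ for every $j$; for small $d$ this requires the delicate computations of Chapter~12 of \cite{friedman_alon}. Your outline would become a valid proof only after supplying that estimate, at which point it essentially coincides with the paper's route.
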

The above theorem holds for any $d$-regular Ramanujan graph, $B$, and
for any algebraic model over $B$
that satisfies
a certain weak {\em magnification} condition;
in Article~VI we describe this magnification condition and prove that it holds
for all of our basic models (for any $B$, regular or not).
The proof uses standard counting arguments; for large values of
$d$ the argument is very easy; for
small values of $d$ our argument is a more delicate
calculation similar to those in Chapter~12 of
\cite{friedman_alon}.

We point out that in \cite{friedman_alon}, the upper and lower bounds 
on the probability of ${\rm NonAlon}_d(G;\epsilon)>0$ 
differed by a factor proportional to $n$, rather than a constant,
for random $d$-regular graphs for certain values of $d$, namely
for $d=1+m^2$ for an odd
integer $m\ge 3$, such as $d=10,26,50$.
Hence Theorems~\ref{th_rel_Alon_regular}
and \ref{th_second_main_theorem} improve this factor of $n$
to a constant depending on $\epsilon$ (for such $d$).

\subsection{Conjectures Regarding Theorem~\ref{th_second_main_theorem}}

We make the following successively strong conjectures regarding
Theorem~\ref{th_rel_Alon_regular}.

\begin{conjecture}
Let $\{\cC_n(B)\}_{n\in N}$ be one of our basic models
over $d$-regular graph, $B$.
Then 
\begin{enumerate}
\item
For $\epsilon>0$ sufficiently small there are
$C,C'$ such that for sufficiently large $n$ we have
$$
C' n^{-\tau_{\rm tang}}
\le
\Prob_{G\in\cC_n(B)}\bigl[
{\rm NonAlon}_d(G;\epsilon)>0
\bigr]
\le
C n^{-\tau_{\rm tang}}.
$$
\item
$\tau_{\rm tang} \le \tau_{\rm alg}-1$.
\item
$\tau_{\rm alg}=+\infty$.
\end{enumerate}
\end{conjecture}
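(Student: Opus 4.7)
The plan is to prove part~(3) directly and obtain parts~(1) and~(2) as automatic corollaries. Part~(3) implies part~(2) because $\tau_{\rm tang}$ is finite whenever $\mu_1(B)>1$, so with the convention $+\infty-1=+\infty$ the inequality $\tau_{\rm tang}\le\tau_{\rm alg}-1$ becomes vacuous. Part~(2) implies part~(1) via Theorem~\ref{th_rel_Alon_regular}: the inequality $\tau_{\rm tang}+1\le\tau_{\rm alg}$ forces $\tau_1=\tau_2=\tau_{\rm tang}$, collapsing the two bounds in \eqref{eq_first_main} to matching $n^{-\tau_{\rm tang}}$. Thus the entire conjecture reduces to the single statement that every one of our basic models over an arbitrary $d$-regular base graph has $\tau_{\rm alg}=+\infty$, extending Theorem~\ref{th_second_main_theorem} from the Ramanujan setting to the general $d$-regular setting.

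Unpacking Definition~\ref{de_algebraic_power}, $\tau_{\rm alg}=+\infty$ is equivalent to the assertion that for every $\nu>(d-1)^{1/2}$ and every $r\in\naturals$ all of the polyexponential coefficients $c_i(k)$ with $1\le i\le r-1$ in the $(B,\nu)$-bounded expansion of Theorem~\ref{th_main_tech_result} are $(B,\nu)$-Ramanujan: each larger base of $c_i(k)$ must come from a genuine eigenvalue of $H_B$, so that it contributes only to the \emph{old} spectrum of $H_G$ and produces no new non-Alon eigenvalue. In Article~VI, the Ramanujan hypothesis trivializes this step because $H_B$ has no eigenvalues of absolute value in the open interval $\bigl((d-1)^{1/2},d-1\bigr)$, and so a weak magnification condition combined with a direct enumeration suffices to certify that no spurious large bases appear.

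The plan for the general $d$-regular case is, first, to combine the Ihara determinantal formula from Section~\ref{se_ihara} with the algebraicity decomposition of Article~I so as to expand each $c_i(k)$ as a finite sum indexed by pruned ordered homotopy types $T^\og$ and their $B$-types $(T^\og,\cR)$; second, to realize the generating function attached to each $B$-type as a trace of a product of transfer operators acting on a finite-dimensional representation of $\pi_1(B)$ lifted to the universal cover $\widehat B$; and third, to use this representation-theoretic realization to force the eigenvalues of those operators into $\Spec(H_B)$ by construction, so that no non-$H_B$ larger base can survive in $c_i(k)$. The main obstacle, and the reason the authors leave this as a conjecture, is that the magnification-plus-counting argument of Article~VI genuinely exploits the absence of intermediate $H_B$-eigenvalues: without the Ramanujan hypothesis one has no a priori spectral gap between the trivial and nontrivial parts of $H_B$, and the cancellations among non-\'etale contributions must be tracked eigenvalue by eigenvalue rather than absorbed into a single estimate. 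A natural warm-up would be the bipartite $d$-regular case, where the $\pm$ symmetry of the spectrum collapses much of the enumeration, before attacking the fully general statement.
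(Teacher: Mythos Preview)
This statement is presented in the paper as an open \emph{conjecture}, not a theorem; the paper offers no proof and explicitly remarks only that Theorem~\ref{th_second_main_theorem} (proven in Article~VI) settles part~(3) in the special case where $B$ is Ramanujan. There is therefore no paper proof to compare your proposal against.

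Your proposal is not a proof either, and you say so yourself. The reductions (3)$\Rightarrow$(2)$\Rightarrow$(1) are correct and essentially already in the paper: Theorem~\ref{th_rel_Alon_regular} with $\tau_{\rm alg}=+\infty$ immediately gives $\tau_1=\tau_2=\tau_{\rm tang}$, which is part~(1), and part~(2) is trivial once $\tau_{\rm alg}=+\infty$. But your argument for~(3) is a research outline that terminates in the sentence ``The main obstacle, and the reason the authors leave this as a conjecture, is that\ldots''. Identifying the gap is not the same as closing it. The transfer-operator / representation-theoretic strategy you sketch contains no mechanism that actually forces the relevant eigenvalues into $\Spec(H_B)$, and without that nothing is proved.

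One substantive correction to your unpacking of $\tau_{\rm alg}=+\infty$: it is \emph{not} equivalent to the $c_i(k)$ of Theorem~\ref{th_main_tech_result} being $(B,\nu)$-Ramanujan. By Lemma~\ref{le_corr_main_tech} and the proof of Theorem~\ref{th_rel_Alon_regular2}, what is required is that the polyexponential parts (with respect to $\nu$) of the coefficients in the expansion of the \emph{new} trace $\Trace(H_G^k)-\Trace(H_B^k)$ (times the tangle-free indicator) vanish identically. These new-trace coefficients are $c_i(k)-\widetilde c_i\Trace(H_B^k)$, and the sidestep theorem is applied to the new Hashimoto matrix $\tilde H_G$, whose eigenvalues are precisely the \emph{new} eigenvalues. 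So a surviving larger base equal to some $\mu_j(B)$ would still signal new eigenvalues concentrating near $\mu_j(B)$ and would still give finite $\tau_{\rm alg}$; ``the base is an $H_B$-eigenvalue'' does not by itself make the contribution old. You need cancellation, not merely membership in $\Spec(H_B)$.
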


Theorem~\ref{th_second_main_theorem} proves the strongest conjecture
in the case where the base graph is Ramanujan.

\subsection{Results Needed from Article~IV}

We recall the main result from Article~IV; we refer to this article
and Article~I for intuition regarding this result.
This result is purely a lemma in probability theory.

\begin{definition}\label{de_matrix_model}
Let $\Lambda_0<\Lambda_1$ be positive real numbers.  
By a {\em $(\Lambda_0,\Lambda_1)$
matrix model} we mean a collection of finite probability spaces
$\{\cM_n\}_{n\in N}$ where $N\subset\naturals$ is an
infinite subset, and where the atoms of $\cM_n$ are
$n\times n$ real-valued matrices whose eigenvalues lie in the set
$$
B_{\Lambda_0}(0) \cup [-\Lambda_1,\Lambda_1]
$$
in $\complex$.
Let $r\ge 0$ be an integer and $K\from\naturals\to\naturals$
be a function such that $K(n)/\log n\to \infty$ as $n\to\infty$.
We say that this model has an
{\em order $r$ expansion} with {\em range $K(n)$} 
(with $\Lambda_0,\Lambda_1$ understood) if
as $n\to\infty$ we have that
\begin{equation}\label{eq_matrix_model_exp}
\EE_{M\in\cM_n}[\Trace(M^k)] =
c_0(k) + c_1(k)/n +\cdots+c_{r-1}(k)/n^{r-1}+ O(c_r(k))/n^r
\end{equation}
for all $k\in\naturals$ with $k\le K(n)$,
where (1) $c_r=c_r(k)$ is of growth $\Lambda_1$,
(2) the constant in the $O(c_r(k))$ is
independent of $k$ and $n$, and
(3) for $0\le i<r$, $c_i=c_i(k)$ is an approximate polyexponential with
$\Lambda_0$ error term and whose larger bases 
(i.e., larger than $\Lambda_0$ in absolute value)
lie in $[-\Lambda_1,\Lambda_1]$;
at times we speak of an {\em order $r$ expansion} without 
explicitly specifying $K$.
When the model has such an expansion,
then we use the notation $L_r$ to refer to the union of all larger
bases of $c_i(k)$ (with respect to $\Lambda_0$) over all $i$ between
$0$ and $r-1$, and call $L_r$ the {\em larger bases (of the order $r$
expansion)}.
\end{definition}
Note that in the above definition,
the larger bases of the $c_i$ are arbitrary, provided that
they lie in $[-\Lambda_1,\Lambda_1]$ (e.g., there is no bound on the
number of bases).
We also note that \eqref{eq_matrix_model_exp} implies that
for fixed $k\in\naturals$,
\begin{equation}\label{eq_c_i_limit_formula}
c_i(k) = \lim_{n\in N,\ n\to\infty}
\Bigl( \EE_{M\in\cM_n}[\Trace(M^k)] -
\bigl( c_0(k) + \cdots+c_{i-1}(k)/n^{i-1} \bigr) \Bigr) n^i
\end{equation} 
for all $i\le r-1$; we conclude that the $c_i(k)$ are uniquely determined,
and that $c_i(k)$ is independent of $r$ for any $r>i$ for which
\eqref{eq_matrix_model_exp} holds.


\begin{theorem}\label{th_sidestep}
Let $\{\cM_n\}_{n\in N}$ be a $(\Lambda_0,\Lambda_1)$-bounded matrix model,
for some real $\Lambda_0<\Lambda_1$, that
for all $r\in\naturals$ has an order $r$ expansion;
let $p_i(k)$ denote the polyexponential part of $c_i(k)$
(with respect to $\Lambda_0$) in \eqref{eq_matrix_model_exp}
(which is independent of $r\ge i+1$ by
\eqref{eq_c_i_limit_formula}).
If $p_i(k)=0$ for all $i\in\integers_{\ge 0}$, then
for all $\epsilon>0$ and $j\in\integers_{\ge 0}$
\begin{equation}\label{eq_largest_j}
\EE{\rm out}_{\cM_n}
\bigl[ B_{\Lambda_0+\epsilon}(0) \bigr]
= O(n^{-j}) .
\end{equation}
Otherwise let $j$ be the smallest integer for which
$p_j(k)\ne 0$.
Then for all $\epsilon>0$, and for all $\theta>0$ sufficiently small we have
\begin{equation}\label{eq_mostly_near_Lambda_0_or_Ls}
\EE{\rm out}_{\cM_n}
\bigl[ B_{\Lambda_0+\epsilon}(0)\cup B_{n^{-\theta}}(L_{j+1}) \bigr]
= o(n^{-j});
\end{equation} 
moreover, if $L=L_{j+1}$ is the (necessarily nonempty) set of bases of $p_j$,
then for each $\ell\in L$ there is a real $C_\ell>0$ such that
\begin{equation}\label{eq_thm_p_j_pure_exp}
p_j(k)=\sum_{\ell\in L} \ell^k C_\ell ,
\end{equation} 
and
for all $\ell\in L$
for sufficiently small $\theta>0$,
\begin{equation}\label{eq_thm_C_ell_as_limit}
\EE{\rm in}_{\cM_n}\bigl[ B_{n^{-\theta}}(\ell) \bigr]  
= n^{-j} C_\ell + o(n^{-j}) .
\end{equation} 
\end{theorem}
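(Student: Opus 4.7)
The plan is to convert the trace--moment expansion \eqref{eq_matrix_model_exp} into spectral information via Markov-style bounds applied to $\Trace(M^k)$ for even $k$ of size $\Theta(\log n)$. The starting observation is that for each real matrix $M$ in the model, real eigenvalues outside $B_{\Lambda_0+\epsilon}(0)$ contribute at least $(\Lambda_0+\epsilon)^k$ each to $\Trace(M^k)$ for even $k$, real eigenvalues of modulus at most $\Lambda_0+\epsilon$ contribute non-negatively, and the remaining (non-real) eigenvalues---which must lie in $B_{\Lambda_0}(0)$---contribute in absolute value at most $n\Lambda_0^k$ in total. Hence
$$
X_\epsilon(M) \;\le\; \bigl(\Trace(M^k) + n\Lambda_0^k\bigr)\big/(\Lambda_0+\epsilon)^k,
$$
where $X_\epsilon(M)$ counts the eigenvalues of $M$ outside $B_{\Lambda_0+\epsilon}(0)$.

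For case~(1), vanishing of all $p_i$ means $|c_i(k)|=O((\Lambda_0+\eta)^k)$ for arbitrary $\eta>0$ when $i<r$, while $|c_r(k)|=O((\Lambda_1+\eta)^k)$. Substituting the order-$r$ expansion into the Markov bound above and taking $k=\lceil\alpha\log n\rceil$ with $\alpha\ge(j+1)/\log((\Lambda_0+\epsilon)/\Lambda_0)$ renders the $n\Lambda_0^k/(\Lambda_0+\epsilon)^k$ contribution $O(n^{-j})$; choosing $r$ large enough that $r\ge j+\alpha\log((\Lambda_1+\eta)/(\Lambda_0+\epsilon))$ controls the remainder, giving \eqref{eq_largest_j}. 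The $c_i(k)/n^i$ contributions with $0\le i<r$ are each dominated by the first term of the expansion, which already decays as a positive power of $n^{-1}$ once $\eta<\epsilon$.

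For case~(2), with $j$ minimal such that $p_j\ne 0$, one first establishes \eqref{eq_thm_p_j_pure_exp} using the limit formula $p_j(k)=\lim_{n\to\infty} n^j\bigl(\EE[\Trace(M^k)]-\sum_{i<j}c_i(k)/n^i\bigr)$ taken modulo polyexponentials of $\Lambda_0$-growth: this realises $p_j(k)$ as a large-$n$ limit of an averaged sum of $\lambda^k$ over ``excess'' eigenvalues, so each base of $p_j$ must be real (the matrices are real), and positivity of the constants $C_\ell$ together with the absence of any polynomial prefactor in $k$ both follow \emph{a posteriori} from the concentration statement \eqref{eq_thm_C_ell_as_limit}. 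To prove \eqref{eq_mostly_near_Lambda_0_or_Ls}, apply Markov's inequality to $\Trace(Q(M))$ for a polynomial $Q$ of degree $\Theta(\log n)$, constructed by a Chebyshev- or product-interpolation argument so that $Q$ is $O(1)$-bounded on $B_{\Lambda_0+\epsilon}(0)\cup B_{n^{-\theta}}(L_{j+1})$ but grows super-polynomially in $n$ off this set; expanding $\EE[\Trace(Q(M))]$ term by term against the order-$r$ expansion for $r$ suitably large gives the required $o(n^{-j})$ bound.

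The remaining step \eqref{eq_thm_C_ell_as_limit} needs matching upper and lower bounds on the expected count of eigenvalues in $B_{n^{-\theta}}(\ell)$ for each $\ell\in L$. The upper bound uses Markov with a test polynomial peaked at $\ell$ and vanishing on $L\setminus\{\ell\}$ (e.g.\ a Lagrange interpolant damped by $(x/\ell)^{2k}$ to suppress eigenvalues far from $\ell$); the lower bound is obtained by contradiction---a deficit of $\Omega(n^{-j})$ at $\ell$, combined with the upper bounds at the other bases and \eqref{eq_mostly_near_Lambda_0_or_Ls}, would make the leading $p_j(k)/n^j$ term computed from the eigenvalue sum inconsistent with the expansion \eqref{eq_matrix_model_exp}. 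The main obstacle is the simultaneous juggling of three scales: the polynomial degree $\deg Q\sim\log n$, the concentration radius $n^{-\theta}$, and the expansion order $r$. These must be chosen compatibly so that $\EE[\Trace(Q(M))]$ is well-approximated by the asymptotic expansion, while $Q$ remains sharp enough both to separate the finitely many bases $\ell\in L$ at scale $n^{-\theta}$ and to penalise eigenvalues at larger distance from $\ell$. This is precisely why the hypothesis posits an order-$r$ expansion for \emph{every} $r$: once a desired $j$ and $\theta$ are fixed, $r$ is chosen last, large enough to absorb the growth of $\deg Q$ and keep all error contributions below the target $o(n^{-j})$.
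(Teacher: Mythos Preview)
This theorem is not proved in the present paper: it is the main result of Article~IV in the series and is only quoted here (see the sentence introducing Definition~\ref{de_matrix_model}). The sole hint the paper offers about its proof is the remark in Section~\ref{se_markov_bounds} that the Sidestepping Theorem ``involves the expected trace of a number of consecutive powers of random matrices,'' which is consistent with your strategy of feeding polynomial test functions $Q(M)$ into the expansion. So there is no proof here to compare against; what follows assesses your outline on its own terms.

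Your treatment of case~(1) is essentially correct: the Markov bound on $\Trace(M^k)$ for even $k\sim\alpha\log n$, with $r$ chosen after $\alpha$, does the job.

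In case~(2) the overall architecture is right, but there is a genuine circularity in the logic as written. You derive the pure-exponential form \eqref{eq_thm_p_j_pure_exp} of $p_j$ (no polynomial prefactors, positive $C_\ell$) ``\emph{a posteriori}'' from \eqref{eq_thm_C_ell_as_limit}, yet your route to \eqref{eq_thm_C_ell_as_limit} via test polynomials seems already to rely on that form. The way out is to prove \eqref{eq_mostly_near_Lambda_0_or_Ls} first, using only the hypothesis that the larger bases of $c_0,\ldots,c_j$ lie in $L_{j+1}\subset[-\Lambda_1,\Lambda_1]$; once eigenvalues are concentrated near $L_{j+1}\cup B_{\Lambda_0+\epsilon}(0)$ to order $o(n^{-j})$, the identity $p_j(k)=\lim n^j\bigl(\EE[\Trace(M^k)]-\sum_{i<j}c_i(k)/n^i\bigr)$ modulo growth-$\Lambda_0$ terms forces $p_j(k)=\sum_\ell C_\ell\,\ell^k$ with $C_\ell=\lim n^j\,\EE{\rm in}[B_{n^{-\theta}}(\ell)]\ge 0$, and $C_\ell>0$ follows because $\ell$ is by definition a base of $p_j$. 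A second, related subtlety: a priori $p_j(k)=\sum_\ell P_\ell(k)\ell^k$ with genuine polynomials $P_\ell$, so expanding $\EE[\Trace(Q(M))]$ against the asymptotic series produces terms like $\sum_k q_k P_\ell(k)\ell^k$, which involve \emph{derivatives} of $Q$ at $\ell$, not just its values. Thus ``$Q$ is $O(1)$-bounded on $B_{n^{-\theta}}(L_{j+1})$'' is not enough; you need $Q$ to vanish to sufficiently high order at each $\ell$ (e.g.\ include a factor $\prod_{\ell\in L_{j+1}}(x-\ell)^{2m}$ with $m$ exceeding the degrees of the $P_\ell$). This is fixable, but it is exactly the kind of detail your ``juggling of three scales'' paragraph gestures at without resolving.
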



\section{The Ihara's Determinantal Formula for Graphs with Half-Loops}
\label{se_ihara}

In this section we prove a generalization of what is often called 
Ihara's Determinantal Formula; our proof follows that of Bass
(see \cite{terras_zeta}, specifically Proposition~19.9, page~172,
and the references on page~43, the beginning of Part~II,
to the work of Ihara, Serre, Sunada, Hashimoto,
and Bass); our generalization allows for graphs to
have half-loops.

Recall that for a graph, $G$, we use
$A_G,H_G$ to respectively denote the adjacency matrix and Hashimoto
(or non-backtracking walk) matrix of $G$.

\begin{theorem}\label{th_Ihara}
For a graph, $G$, and an indeterminate $\mu$ we have
\begin{equation}\label{eq_Ihara_det}
\det(\mu I -H_G) = \det\bigl(\mu^2 I - \mu A_G + (D_G-I) )
(\mu+1)^{o_1(G)}
(\mu^2-1)^{o_2(G)-n} ,
\end{equation}
where $o_1(G)$ is the number of half-loops in $G$ and $o_2(G)$ is the
number of whole-loops and edges that are not whole-loops
(and where the $I$ on the left-hand-side of \eqref{eq_Ihara_det} is the
square identity matrix indexed on $\Edir_G$, and the two $I$'s on
the right-hand-side are the same indexed on $V_G$).
\end{theorem}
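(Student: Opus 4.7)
The plan is to follow Bass's classical matrix-manipulation proof of Ihara's determinantal formula (\textit{cf.}~\cite{terras_zeta}, Proposition~19.9), modified to account for the fixed points of $\iota_G$ coming from half-loops. As a preliminary step I will introduce the incidence matrices $S, T$ of size $n \times m$, where $n = \#V_G$ and $m = \#\Edir_G$, defined by $S_{v,e} = 1$ iff $t_G e = v$ and $T_{v,e} = 1$ iff $h_G e = v$, together with the permutation matrix $J$ of $\iota_G$, i.e.\ $J_{e,e'} = 1$ iff $\iota_G e = e'$. A direct check gives the identities
\begin{equation*}
A_G = S T^T, \quad D_G = S S^T = T T^T, \quad T = SJ, \quad J^2 = I_m, \quad H_G = T^T S - J;
\end{equation*}
the last one because $(T^T S)_{e_1, e_2} = \II_{[h_G e_1 = t_G e_2]}$, and subtracting $J$ removes exactly the backtracking pairs $e_2 = \iota_G e_1$ (for which $[h_G e_1 = t_G e_2]$ is automatically $1$).

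The heart of the proof will be the $(n+m) \times (n+m)$ block matrix
\begin{equation*}
M(\mu) = \begin{pmatrix} I_n & -S \\ -T^T & \mu I_m + J \end{pmatrix},
\end{equation*}
whose determinant I will compute in two ways by Schur complementation. Using the invertibility of the top-left block gives
\begin{equation*}
\det M(\mu) = \det\bigl(\mu I_m + J - T^T S\bigr) = \det(\mu I_m - H_G),
\end{equation*}
the left-hand side of \eqref{eq_Ihara_det}. Assuming for now that $\mu \neq \pm 1$ so that $\mu I_m + J$ is invertible, Schur complementation using the bottom-right block yields
\begin{equation*}
\det M(\mu) = \det(\mu I_m + J) \cdot \det\bigl(I_n - S(\mu I_m + J)^{-1} T^T\bigr).
\end{equation*}

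To simplify the right-hand factor I will use that $J^2 = I_m$ implies $(\mu I_m + J)(\mu I_m - J) = (\mu^2 - 1) I_m$, so that $(\mu I_m + J)^{-1} = (\mu I_m - J)/(\mu^2 - 1)$. Combining this with $SJ = T$, $ST^T = A_G$, and $TT^T = D_G$, a short computation gives
\begin{equation*}
S(\mu I_m + J)^{-1} T^T = \frac{\mu A_G - D_G}{\mu^2 - 1},
\end{equation*}
and hence
\begin{equation*}
\det\bigl(I_n - S(\mu I_m + J)^{-1} T^T\bigr) = \frac{\det\bigl(\mu^2 I_n - \mu A_G + (D_G - I_n)\bigr)}{(\mu^2 - 1)^n}.
\end{equation*}

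Finally I will evaluate $\det(\mu I_m + J)$ by decomposing $\Edir_G$ into $\iota_G$-orbits. Each half-loop is a fixed point of $\iota_G$, contributing a $1 \times 1$ block of determinant $\mu + 1$, while each of the $o_2(G)$ two-element orbits (whole-loops and ordinary edges alike) contributes a $2 \times 2$ block on which $J$ acts by swap, with determinant $\mu^2 - 1$. Hence
\begin{equation*}
\det(\mu I_m + J) = (\mu + 1)^{o_1(G)}(\mu^2 - 1)^{o_2(G)},
\end{equation*}
and assembling the two computations of $\det M(\mu)$ gives \eqref{eq_Ihara_det} as an identity in $\complex(\mu)$; since both sides are polynomials in $\mu$, the identity extends to all $\mu$, and the temporary restriction $\mu \neq \pm 1$ is harmless. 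The main (and only) obstacle beyond classical Bass is this last book-keeping step: the extra factor $(\mu+1)^{o_1(G)}$ in \eqref{eq_Ihara_det}, absent from the loop-free version of the formula, is produced precisely by the half-loop fixed points of $\iota_G$ in the determinant of $\mu I_m + J$.
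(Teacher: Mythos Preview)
Your proof is correct and follows the same approach as the paper: Bass's block-matrix argument, with the half-loop fixed points of $\iota_G$ contributing the extra factor $(\mu+1)^{o_1(G)}$ in $\det(\mu I_m+J)$. The only differences are cosmetic---you compute $\det M(\mu)$ via two Schur complements and work directly in $\mu$, whereas the paper writes an explicit equality of products of block-triangular matrices in the variable $u=1/\mu$ and then substitutes; the underlying computation is the same.
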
 
The origin of this formula is 
\cite{ihara}, where it is shown that
$$
\zeta_G(u) = \frac{1}{\det\bigl( I - u A_G + u^2 (D_G-I) \bigr)
(1-u^2)^{-\chi(G)}}
$$
for the Zeta function, $\zeta_G(u)$, of certain graphs, $G$, of interest in
\cite{ihara}; however, the interpretation of this formula in terms
of graph theory occurs
only later (see \cite{serre_arbres}, page~5, or \cite{serre_trees}, page~IX).
The equality
$$
\zeta_G(u)= \frac{1}{\det(I-u H_G)}
$$
(which is relatively easy to see), and the connection to graph theory, 
was made explicit by
\cite{sunada1}, for regular graphs, and \cite{hashimoto1} for
all graphs.
Our proof is a simple adaptation of
Bass's elegant proof this theorem for graphs without half-loops 
(see \cite{bass_elegant,terras_zeta}).
\begin{proof}
Let $u$ be a single indeterminate.
We set $d_h$ the $V_G\times \Edir_G$ matrix whose $(v,e)$ entry is
$1$ if $he=v$, and $0$ otherwise; we similarly define $d_t$.
Introducing an indeterminate $u$,
we easily verify the block matrix equality
$$
\begin{bmatrix} I_{V_G} & 0 \\ d_h^{\rm T} & I_{\Edir_G} \end{bmatrix}
\begin{bmatrix} I_{V_G}(1-u^2) & 0 \\ d_t & I_{\Edir_G}-H_G u \end{bmatrix}
$$
$$
=
\begin{bmatrix} I_{V_G}-A_G u+(D_G-I_{V_G}) u^2 & d_t \\ 
0 & I_{\Edir_G}+ \iota_G u  \end{bmatrix}
\begin{bmatrix} I_{V_G} & 0 \\ 
d_h^{\rm T} - d_t^{\rm T}  u & I_{\Edir_G} \end{bmatrix}
$$
We take determinants of the above, and make use of the identity
$$
\det \begin{bmatrix} M_1 & 0 \\ N_2 & M_2 \end{bmatrix}
=\det \begin{bmatrix} M_1 & N_1 \\ 0 & M_2 \end{bmatrix}
=\det(M_1)\det(M_2)
$$
(for square block matrices $M_1,M_2$ and $N_1,N_2$ of appropriate
size) to conclude that
\begin{equation}\label{eq_ihara_basic}
(1-u^2)^{\#V_G} \det(I_{\Edir_G}-H_G u) =
\det\bigl(I_{V_G}-A_G u + (D_G-I_{V_G}) u^2 \bigr)
\det(I_{\Edir_G} + \iota_G u ).
\end{equation} 
But if $G$ has $o_1$ half-loops and $o_2$ edges (i.e., $\iota_G$ orbits)
that are not half-loops, we have
\begin{equation}\label{eq_ihara_easy}
\det(I_{\Edir_G} + \iota_G u ) = (1-u^2)^{o_2} (1+u)^{o_1} .
\end{equation} 
Combining \eqref{eq_ihara_basic} and \eqref{eq_ihara_easy}, and
substituting $\mu=1/u$ and multiplying by $\mu^{nd}$ yields
\eqref{eq_Ihara_det}.
\end{proof}
The reason we write our proof with $u$ instead of $\mu=1/u$ is 
that this is the usual way the proof is written,
because one usually writes $\zeta_G(u)$,
the Ihara Zeta function \cite{ihara} of a graph, $G$,
as
$$
\zeta_G(u) = \prod_{\frakp} \bigl( 1 - u^{{\rm length}(\frakp)} \bigr)^{-1}
$$
where the product is over all ``primes'' $\frakp$ (primitive, oriented SNBC
walks in $G$)
whereupon it is not hard to see (taking logarithms and considering the
relationship between the trace of $H_G^k$ and primes of length dividing
$k$) that
$$
\zeta_G(u)=\frac{1}{\det(I-u H_G)}.
$$

\section{Our Basic Models are Algebraic}
\label{se_ours_algebraic}

In this section we prove that our basic models are
algebraic.
For ease of reading, we recall the definition of what we call
our basic models.

\subsection{Review of Our Basic Models}

\begin{definition}\label{de_model_conventions}
Let $B$ be a graph.
A {\em model over $B$}
is a family of probability spaces $\{\cC_n(B)\}_{n\in N}$ 
indexed by a parameter
$n$ that ranges over some infinite subset $N\subset\naturals$, such that
the atoms of each $\cC_n(B)$ lie in ${\rm Coord}_n(B)$;
we say that the model is {\em edge-independent} if for any orientation,
$\Eor_B$, of $B$, and each $n\in N$,
if $\{\sigma(e)\}_{e\in\Edir_B}$
are the $\Edir_B\to\cS_n$ maps associated
to the $G\in\cC_n(B)$, then the (random variables)
$\{\sigma(e)\}$ varying over $e\in\Eor_B$ are independent.
\end{definition}
An edge-independent model $\{\cC_n(B)\}_{n\in N}$ is therefore
described by specifying the distribution
of $\sigma(e)\in\cS_n$ for every $n\in N$
and every edge $e\in\Edir_B$, or equivalently, every edge $e\in\Eor_B$
where $\Eor_B\subset\Edir_B$ is some orientation of $B$.

We now describe what we call {\em our basic models}; these models are the
ones that are most convenient for our methods.

\begin{definition}\label{de_models}
Let $B$ be a graph.
By {\em our basic models} we mean one of the models 
edge-independent models $\{\cC_n(B)\}_{n\in N}$ over $B$ of degrees in $N$.
\begin{enumerate}
\item The {\em permutation model} assumes $B$ is any graph without half-loops
and $N=\naturals$: for each $n$ and $e\in\Edir_B$, 
$\sigma(e)\in\cS_n$ is a uniformly chosen permutation.
\item The {\em permutation-involution of even degrees} is defined for any $B$ 
and for $N$ being the even naturals: this is the same as the permutation,
except that if $e$ is a half-loop, then $\sigma(e)$ is a 
uniformly chosen {\em perfect matching} on $[n]$, i.e., a map $\sigma\in\cS_n$
that has no fixed points and satisfies $\sigma^2=\id$.
\item The {\em permutation-involution of odd degrees} is defined the same,
except that $e$ is a half-loop, then $\sigma(e)$ is a
uniformly chosen {\em near perfect matching} on $[n]$, i.e., 
a map $\sigma\in\cS_n$ 
with exactly one fixed point and with $\sigma^2=\id$.
\item
The {\em full cycle} model (or simply {\em cyclic} model)
is defined like the permutation model
(so $B$ is assumed to have no half-loops),
except
that when $e$ is a whole-loop then $\sigma(e)$ is a uniform
permutation whose cyclic structure consists of a single cycle of length $n$.
\item 
The {\em full cycle-involution} (or simply {\em cyclic-involution})
{\em of even degree} and {\em of odd degree} are the two models
defined for arbitrary $B$ and either $n$ even or $n$ odd, is the
full cycle model with the distributions of $\sigma(e)$ for half-loops, $e$,
as in the permutation-involution.
\end{enumerate}
\end{definition}

\subsection{Coincidences and the Order Bound for the Permutation Model}

In this subsection we prove \eqref{eq_algebraic_order_bound}
for 
all of our basic models (Definition~\ref{de_models}).  This
proof is based on the approach of Broder-Shamir to
trace methods for regular graphs; this approach is also the
basis of our asymptotic expansions, which go back to
\cite{friedman_random_graphs}.
We use the notion of {\em coincidences}
of \cite{friedman_random_graphs} (see the second displayed formula on
page~352 for the bound, or Lemma~5.7 of \cite{friedman_alon}), 
which is a straightforward generalization
of Lemma~3 of \cite{broder}; see also Lemma~2.2 of \cite{friedman_relative}.


If $B$ is a graph, then
any walk in an element of $\Coord_n(B)$
is an alternating sequence of vertices and directed edges, and therefore
an alternating sequence of elements of $V_B\times [n]$ and
$\Edir_B\times [n]$; hence such a walk is
necessarily of the form
\begin{equation}\label{eq_walk_in_G}
w
=\bigl( (v_0,i_0), (e_1,i_0), (v_1,i_1),
\ldots,(e_k,i_{k-1}),(v_k,i_k) \bigr)\ ,
\end{equation}
with $i_0,\ldots,i_k\in[n]$ and
$(v_0,\ldots,e_k,v_k)$ an alternating sequence of elements of $V_B$ and 
$\Edir_B$ which we easily verify is a walk in $B$
(via \eqref{eq_coord_def} and \eqref{eq_coord_def_graph});
if $\sigma\from \Edir_B\to\cS_n$ is the map associated to any
$G\in\Coord_n(B)$
(see \eqref{eq_coord_def}),
then $w$ above lies in $G$
iff for $j=1,\ldots,k$ we have
\begin{equation}\label{eq_i_j_trajectory}
i_j 
=\sigma(e_j) i_{j-1} .
\end{equation}
Furthermore, 
\eqref{eq_coord_def} and \eqref{eq_coord_def_graph} easily show that
$w$ above is SNBC iff $i_k=i_0$ and the walk
$(v_0,e_1,\ldots,e_k,v_k)$ is SNBC in $B$.

\begin{definition}\label{de_coincidences}
Let $\pi\from G\to B$ be a coordinatized covering map, $\sigma$
its associated map $\Edir_B\to\cS_n$.
For $w_B=(v_0,\ldots,e_k,v_k)$ and $i_0\in [n]$, 
for $j=0,\ldots,k$ let $i_j=i_j(\sigma,w_B,i_0)$ be as 
in \eqref{eq_i_j_trajectory} and let
\begin{equation}\label{eq_gamma_j}
\gamma_j=\gamma_j(\sigma,w_B,i_0)
\eqdef
\ViSu\Bigl( \bigl(v_0,i_0\bigl), \bigl(e_1,i_0\bigr), 
\ldots,\bigl(e_j,i_{j-1}\bigr),
\bigl(v_j,i_j\bigr) \Bigr) .
\end{equation}
We say that (with respect to $\sigma,i_0,w_B$) $j\in[k]$ is
\begin{enumerate}
\item a {\em forced choice} if $\gamma_j(\sigma)=\gamma_{j-1}(\sigma)$
(i.e., $\sigma(e_j)i_{j-1}$ has already be determined, i.e.,
for some $\ell<j$ either $e_j=e_\ell$ and $i_j=i_\ell$ or
$e_j=\iota_B e_\ell$ and $i_j=i_{\ell+1}$),
and
\item a {\em free choice} otherwise (i.e., $\sigma(e_j)i_{j-1}$
has not been determined by the values of $\sigma(e_\ell)i_\ell$ for 
$\ell<j$, i.e., the edge $(e_j,i_j)$ does not lie in $\gamma_{j-1}$),
and in this case
\begin{enumerate}
\item a {\em coincidence} if $\gamma_j(\sigma)$ has one more edge but the same 
number of vertices as $\gamma_j(\sigma)$, (i.e., $(v_j,i_j)$ lies in
$\gamma_{j-1}$), 
and
\item a {\em new choice}
if $\gamma_j(\sigma)$ has one more edge and one more vertex
than $\gamma_j(\sigma)=\gamma_{j-1}(\sigma)$ (i.e., $(v_j,i_j)$ does not
lie in $\gamma_{j-1}$).
\end{enumerate}
\end{enumerate}
\end{definition}
The terms {\em forced/free choice} is from \cite{broder} (page~289,
end of second paragraph before Lemma~3), and
{\em coincidence} from \cite{friedman_random_graphs} (bottom of page~335).

In other words, we view $\gamma_0,\gamma_1,\ldots,\gamma_k$ as random
graphs that evolve, beginning with $\gamma_0$ which consists of
just $(v_0,i_0)$, ending with $\gamma_k$ which is the entire visited
subgraph of $w$; for each $j\in[k]$, $\gamma_j$ either equals
$\gamma_{j-1}$ (when $\sigma(e_j)i_{j-1}$ has already been determined),
or else $\gamma_j$ consists of one new edge and possibly one new vertex.
Notice that the order of $\gamma_j$ equals the order of $\gamma_{j-1}$
except when $j$ is a coincidence, in which case 
the order of $\gamma_j$ is
one more than that of $\gamma_{j-1}$.  Hence the order $\gamma_k$ is the order
of $\gamma_0$ (i.e., $-1$) plus the number of coincidences among the
$j\in [k]$.

Notice that coincidences and forced/free choices can be viewed as purely
graph theoretic properties of the successive visited subgraphs of 
the first $j$ steps of the
walk in a graph, $j=0,\ldots,k$.

\begin{lemma}\label{le_order_bound}
Let $\cC_n(B)$ be any of our standard models (Definition~\ref{de_models}).
Then $\cC_n(B)$ satisfies \eqref{eq_algebraic_order_bound}, i.e., the
order bound.
\end{lemma}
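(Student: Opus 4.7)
The plan is to adapt the Broder--Shamir coincidence analysis, in the form developed in \cite{broder,friedman_random_graphs,friedman_alon}, using the formalism set up in Definition~\ref{de_coincidences}. First I would parametrize SNBC walks in $G\in\Coord_n(B)$ by pairs $(w_B,i_0)$ with $w_B=(v_0,e_1,\ldots,v_k)$ a walk in $B$ and $i_0\in[n]$; from \eqref{eq_walk_in_G}--\eqref{eq_i_j_trajectory}, such a pair corresponds to an SNBC walk in $G$ precisely when $w_B$ is itself SNBC in $B$ and $i_k=i_0$, and its visited subgraph is then $\gamma_k(\sigma,w_B,i_0)$. Hence
$$
\EE_{G\in\cC_n(B)}[\snbc_{\ge r}(G,k)]
=\sum_{w_B}\sum_{i_0\in[n]}\Pr_\sigma\bigl[\,i_k=i_0\ \text{and}\ \ord(\gamma_k)\ge r\,\bigr],
$$
where the outer sum runs over SNBC walks $w_B$ of length $k$ in $B$.

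Next I would translate the order condition into a coincidence count. Since $\ord(\gamma_0)=-1$ and, by Definition~\ref{de_coincidences}, forced choices and new choices preserve the order of $\gamma_j$ while each coincidence increases it by exactly one, the event $\ord(\gamma_k)\ge r$ is identical to the event that there are at least $r+1$ coincidences among the $k$ steps. Observe also that the closure $i_k=i_0$ forces step $k$ to be either a coincidence or a forced choice, since $(v_0,i_0)\in\gamma_0\subseteq\gamma_{k-1}$ precludes step $k$ from being a new choice.

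Then I would bound each coincidence pattern using a uniform near-uniformity estimate: for each of our basic models and for $k\le n/4$, conditional on all previously revealed values of $\sigma$, whenever $\sigma(e_j)i_{j-1}$ has not yet been determined its conditional distribution places probability at most $C_0/n$ on any fixed element of $[n]$. This is Broder--Shamir's classical bound for the permutation model, and the analogous bounds for uniform perfect and near-perfect matchings, and for uniformly chosen single $n$-cycles, are standard and recorded in \cite{friedman_alon}; together they handle the permutation-involution, cyclic, and cyclic-involution cases. A coincidence at step $j$ constrains $i_j$ to one of at most $k$ already-visited lifts of $v_j$, so contributes a conditional factor at most $C_0k/n$; taking a union bound over the $\binom{k}{r+1}$ choices of a coincidence subset yields
$$
\Pr_\sigma\bigl[\,i_k=i_0\ \text{and}\ \ord(\gamma_k)\ge r\,\bigr]
\le \binom{k}{r+1}(C_0k/n)^{r+1}.
$$
Multiplying by the $n$ choices of $i_0$ and by $\Trace(H_B^k)$, which bounds the number of SNBC walks of length $k$ in $B$ and is a function of growth $\mu_1(B)$, gives the asserted bound $g(k)/n^r$ with $g(k)$ of growth $\mu_1(B)$.

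The main obstacle is verifying the conditional $C_0/n$ estimate for the cyclic and cyclic-involution models, since the conditional distribution of a uniform single $n$-cycle or uniform (near-)involution after a partial exposure is genuinely different from that of a uniform permutation. This reduces to counting the extensions of a partial assignment by an appropriate cycle or matching, which is a standard computation carried out in \cite{friedman_alon}; once that estimate is in place, the remainder of the argument is elementary bookkeeping of coincidence patterns.
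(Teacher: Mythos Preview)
Your proposal is correct and follows essentially the same Broder--Shamir coincidence argument as the paper: parametrize SNBC walks by $(w_B,i_0)$, convert $\ord(\gamma_k)\ge r$ into ``at least $r+1$ coincidences,'' bound each coincidence by a $O(k/n)$ conditional hit probability, and union over the $\binom{k}{r+1}$ coincidence locations times $n$ choices of $i_0$ times $\Trace(H_B^k)$ choices of $w_B$. The paper writes the conditional bound explicitly as $j/(n-2j+2)$ rather than packaging it as $C_0/n$, but under $k\le n/4$ these are equivalent; your remark that step $k$ cannot be a new choice is true but unused in either argument.
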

\begin{proof}
If $w$ is an SNBC walk in a graph $G\in\cC_n(B)$, then $w$ is of the
form \eqref{eq_walk_in_G}, where $w_B=(v_0,\ldots,e_k,v_k)$ is
SNBC in $B$.  
Fix any such $w_B$ and an $i_0\in[n]$.
Consider the event that $\sigma\from \Edir_B\to\cS_n$ is such that
the resulting walk \eqref{eq_walk_in_G} given with
$i_j$ as in \eqref{eq_i_j_trajectory} is SNBC and has order at least $r$,
such a $\sigma$ has at least $r+1$ coincidences.
So fix any $r+1$ values, $j_1<\cdots < j_{r+1}$ in $[k]$ which are the
first $r+1$ coincidences (in fact, any particular $r+1$ coincidences
chosen for each $G\in\cC_n(B)$ would also work).
The probability that a fixed $j\in [k]$ is a coincidence given a
fixed value of $\gamma_{j-1}$
is at most $j/(n-2j+2)$, since at most $2j-2$ values of $\sigma(e_j)$ can
be determined by $\gamma_{j-1}$, and the coincidence happens when
$\sigma(e_j)i_{j-1}$ takes on one of at most $j$ values of at least
$n-2j+2$ possible values in a
uniformly 
chosen permutation or cycle or perfect matching or near perfect matching.
Hence the probability that $j_1<\cdots < j_{r+1}$ are all coincidences is
at most $k/(n-2k+1)^{r+1}$.  Since the number of
choices for $w_B$, for $i_0$, and 
for $1\le j_1<\cdots < j_{r+1}\le k$ are respectively
$$
\Trace(H_B^k), \quad n, \quad \binom{k}{r+1},
$$
the union bound implies that the expected number of SNBC walks of length
$k$ and order at least $r$ is bounded by
\begin{equation}\label{eq_first_order_upper}
\Trace(H_B^k)n\binom{k}{r+1}\left( \frac{k}{n-2k+1}\right)^{r+1}.
\end{equation}
Using the crude bounds
$$
\Trace(H_B^k) \le (\#\Edir_B)\mu_1^k(B), \quad
\binom{k}{r+1} \le k^{r+1},
$$
and, under the assumption that $2k\le n/2$, the bound
$$
\left( \frac{k}{n-2k+1}\right)^{r+1} \le k^{r+1}(n/2)^{-r-1} ,
$$
gives an upper bound on \eqref{eq_first_order_upper} of
$$
O(1) \mu_1^k(B) k^{2r+2}/n^r
$$
(where $O(1)$ depends only on $r$), and is therefore bounded by
$g(k)/n^r$ where $g$ is a function of growth $\mu_1(B)$.
\end{proof}

\subsection{The Permutation Model is Strongly Algebraic}

We now prove that the permutation model is strongly algebraic; 
of all of our basic models, the permutation model
involves
the simplest formulas; all of our other basic models
will be proved to be algebraic or strongly algebraic
in a similar fashion in the remaining subsections.

\begin{lemma}\label{le_expansions_permutation_model}
Let $B$ be a graph without half-loops, and $\cC_n(B)$ the permutation
model over $B$.  If $S_\Bg^\og$ is any ordered \'etale $B$-graph, and
$\Eor_B\subset\Edir_B$ is any orientation of $B$, then
\begin{align}
\label{eq_expected_occur_etale}
\EE_{G\in\cC_n(B)}\bigl[ \#[S_\Bg^\og]\cap G \bigr]
 &  =
\prod_{e\in \Eor_B} \frac{1}{n(n-1)\ldots\bigl(n-a_{S_\Bg}(e)\bigr)} \\
\label{eq_vertex_symmetry}
& \times
\prod_{v\in V_B} \Bigl( n(n-1)\ldots\bigl(n-b_{S_\Bg}(v)+1\bigr) \Bigr) 
\end{align}
(with $\mec a_{S_\Bg},\mec b_{S_\Bg}$ as in 
Subsection~\ref{su_B_ordered_strongly_alg})
provided that $\#V_S,\#E_S\le n$;
if $S_\Bg^\og$ is any ordered $B$-graph that is not \'etale, then
\begin{equation}\label{eq_expected_occur_not_etale}
\EE_{G\in\cC_n(B)}\bigl[ \#[S_\Bg^\og]\cap G \bigr]
=
0.
\end{equation}
\end{lemma}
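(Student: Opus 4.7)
The plan is to reduce the counting of $[S_\Bg^\og]\cap G$ to a straightforward probability computation over the random permutations $\{\sigma(e)\}_{e\in\Eor_B}$ defining $G\in\cC_n(B)$. By the identification in Subsection~\ref{su_B_ordered_strongly_alg}, $\#[S_\Bg^\og]\cap G$ equals the number of injective $B$-graph morphisms $S_\Bg\hookrightarrow G_\Bg$, which is independent of the ordering on $S_\Bg^\og$. For the non-\'etale case \eqref{eq_expected_occur_not_etale}, I would argue that any $G\in\Coord_n(B)$ comes equipped with a covering map $G\to B$; if $S_\Bg\hookrightarrow G_\Bg$ were an injection of $B$-graphs, then at each $v\in V_S$ the ``star'' of directed edges at $v$ in $S$ would inject into the star at its image in $G$ (by the subgraph condition), which bijects onto the star in $B$ (because $G\to B$ is a covering). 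Composing shows that the structure map $S\to B$ is \'etale, so if $S_\Bg$ is not \'etale, the set of embeddings is empty and the expectation vanishes.

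For the \'etale case, I would parametrize an injective $B$-graph morphism $\phi\from S_\Bg\hookrightarrow G_\Bg$ by its restrictions to vertex fibers: for each $v\in V_B$, $\phi$ restricts to an injection from the fiber $\pi_S^{-1}(v)$, of size $b_{S_\Bg}(v)$, into the fiber $\{v\}\times[n]$ of $G$ over $v$. The number of such vertex-level injections is
\[
\prod_{v\in V_B} n(n-1)\cdots\bigl(n-b_{S_\Bg}(v)+1\bigr),
\]
provided $b_{S_\Bg}(v)\le n$ for every $v$, which is ensured by $\#V_S\le n$. Next, with the vertex injections fixed, each directed lift $\tilde e$ in $S$ of an edge $e\in\Eor_B$ forces one value of $\sigma(e)$, namely $\sigma(e)\bigl(\phi(t_S\tilde e)\bigr)=\phi(h_S\tilde e)$. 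Because $S_\Bg$ is \'etale, distinct lifts $\tilde e_1,\tilde e_2$ of $e$ have distinct tails $t_S\tilde e_i$ and distinct heads $h_S\tilde e_i$ in $S$, so the images under the fixed vertex injection are distinct; thus the $a_{S_\Bg}(e)$ constraints on $\sigma(e)$ are mutually consistent and specify a partial injection from $a_{S_\Bg}(e)$ elements of $[n]$ onto $a_{S_\Bg}(e)$ elements of $[n]$. Under the permutation model the probability that a uniformly random permutation in $\cS_n$ realizes these values is $(n-a_{S_\Bg}(e))!/n!$.

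By the edge-independence of the permutation model, these probabilities multiply over $e\in\Eor_B$, and by linearity of expectation the total expectation is the sum over choices of vertex injections (which are deterministic objects not depending on $\sigma$) of this product of probabilities. This yields \eqref{eq_expected_occur_etale} directly. The main obstacle, and really the only substantive point, is verifying that the \'etale hypothesis on $S_\Bg$ prevents ``collisions'' among the constraints that the lifts impose on each $\sigma(e)$---both on the source side (distinct lifts have distinct tails in $S$) and the target side (distinct lifts have distinct heads in $S$)---so that the probability of all $a_{S_\Bg}(e)$ constraints being satisfied is the clean falling-factorial reciprocal above, rather than zero (if the constraints were inconsistent) or some more complicated expression; everything else is a routine counting argument.
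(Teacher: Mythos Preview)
Your proposal is correct and follows essentially the same approach as the paper: both arguments parametrize injective $B$-morphisms $S_\Bg\hookrightarrow G_\Bg$ by families of vertex-fiber injections $\{\mu_v\from\pi^{-1}(v)\to[n]\}_{v\in V_B}$, observe that each such family determines a partial permutation $\sigma'(e)$ on $a_{S_\Bg}(e)$ points for every $e\in\Eor_B$, and then use edge-independence to compute the probability that a uniformly random $\sigma(e)\in\cS_n$ extends $\sigma'(e)$. Your explicit emphasis on why the \'etale hypothesis guarantees that distinct lifts of $e$ have distinct tails \emph{and} distinct heads (so the constraints on $\sigma(e)$ form a genuine partial injection rather than being inconsistent or redundant) is exactly the point the paper is using when it writes the probability as the reciprocal of a falling factorial.
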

\begin{proof}
Clearly inclusions and covering maps are \'etale, and clearly the
composition of \'etale maps is \'etale.
Hence any $B$-subgraph of a $G\in\Coord_n(B)$ is an \'etale $B$-graph,
and this implies
\eqref{eq_expected_occur_not_etale} if $S_\Bg$ is not an \'etale
$B$-graph.

It remains to let $S_\Bg^\og$ be \'etale and to
prove \eqref{eq_expected_occur_etale} and \eqref{eq_vertex_symmetry};
let $\pi\from S\to B$ be the structure map of
$S_\Bg$.
Consider any $(S')^\og_\Bg$ that is an element of $S_\Bg^\og\cap G_\Bg$
for some $G\in\cC_n(B)$; then we have
$$
V_{S'} = \bigcup_{v\in V_B} \{v\}\times I_v, 
\quad
\Edir_{S'} = \bigcup_{e\in \Edir_B} \{e\}\times I_e, 
$$
where the above $I_v$ and $I_e$ are subsets of $[n]$, and
$$
t_{S'}(e,i)=(t_B e,i),\quad
h_{S'}(e,i) = \bigl(h_B e,\sigma'(e) i \bigr) ,\quad
\iota_{S'}(e,i) = \bigl(\iota_B e, \sigma'(e)i \bigr) ,
$$
where for each $e\in\Edir_B$, $\sigma'(e)\from I(e)\to I(\iota_B e)$ is
an isomorphism; since $\iota_{S'}(e,i)$ is an involution,
we must have $\sigma'(e)^{-1}=\sigma(\iota_B e)$ for all $e\in\Edir_B$.
The unique isomorphism $S_\Bg^\og\to (S')^\og_\Bg$ gives rise to an
isomorphism for each $v\in V_B$:
\begin{equation}\label{eq_mu_vs}
\mu_v \from \pi^{-1}(v) \to I_v \subset [n].
\end{equation} 
Conversely, we easily see that any other family of injections
$$
\mu''_v \from \pi^{-1}(v) \to I''_v \subset [n]
$$
gives rise (using the heads and tails maps of $S$) to a unique
ordered graph, $(S'')^\og_\Bg$, also isomorphic to $S_\Bg^\og$, and
$(S'')^\og_\Bg$ and $(S')^\og_\Bg$ are isomorphic iff
$\mu''_v=\mu_v$ for all $v$.
(Here the orderings are crucial, since $S''_\Bg$ can be isomorphic
as a $B$-graph to $S_\Bg$ without $(S'')^\og_\Bg$ and $(S')^\og_\Bg$
being isomorphic.)

Since $|I_v|=b_{S_\Bg}(v)$, the number of families $\{\mu_v\}_{V_B}$ 
of injections as in \eqref{eq_mu_vs} is
$$
\prod_{v\in V_B} \Bigl( n(n-1)\ldots\bigl(n-b_{S_\Bg}(v)+1\bigr) \Bigr) .
$$
Furthermore, clearly a $G\in\cC_n(B)$, with corresponding permutation map
$\sigma$, contains $S'_\Bg$ as above iff
for each $e\in\Eor_B$, 
$\sigma\in\cS_n$ agrees with $\sigma'$ on $I_e$; for each $e\in\Eor_B$
this occurs with probability
$$
\frac{1}{n(n-1)\ldots(n-|I_e|+1}.
$$
Since $|I_e|=a_{S_\Bg}(e)$, we conclude
\eqref{eq_expected_occur_etale} and \eqref{eq_vertex_symmetry}.

\ignore{\tiny\red

jjjjjjj 
jjjjjjj 

Consider a family of maps
$$
\{ \mu_v \}_{v\in V_B}
\quad\mbox{where $\mu_v\from\pi^{-1}(v)\to[n]$ is an injection}.
$$
Clearly the number of such families equals product in
\eqref{eq_vertex_symmetry}.
It remains to show that each such family can be viewed as
a contribution to the left-hand-side of 
\eqref{eq_expected_occur_etale} equal to the right-hand-side;
this is tedious to write out, but the intuition should be clear:
each such family, $\{\mu_v\}$, determines $a_{S_\Bg}(e)$ conditions
on the permutation associated to $e$.  Let us spell out the details.

Let us prove that there is a one-to-one correspondence between families
$\{ \mu_v \}$ and isomorphisms 
$\nu\from S^\og_\Bg \to (S')^\og_\Bg$ 
(of ordered $B$-graphs)
where $(S')_\Bg^\og$ is an ordered $B$-graph such that $S'_\Bg$
is a subgraph of at least
one element of $\Coord_n(B)$:
namely, given $\{ \mu_v \}$ we associate a graph $S'$ and morphism
$\nu$ as follows: we let
$$
V_{S'} = \{ (v,i) \ |\ v\in V_B,\ i\in {\rm Image}(\mu_v) \} ,
$$
$$
E_{S'} = \{ (e,i) \ |\ e\in\Edir_B, i\in {\rm Image}(\mu_{t_B e}) \} ,
$$
and for all $(e,i)\in E_{S'}$ we set
$$
t_{S'}(e,i)=(t_B e,i),\quad
h_{S'}(e,i) = \bigl(h_B e,\sigma'(e) i \bigr) ,\quad
\iota(e,i) = \bigl(\iota_B e, \sigma'(e)i \bigr) ;
$$
where for each $e\in\Edir_B$, $\sigma'(e)\from I(e)\to [n]$ is the map
defined on
$$
I(e) \eqdef \{ \mu_{t_B e}t_S e_S \ | \ \pi(e_S)=e_B \}\subset[n] ,
$$
and given by
$$
\sigma'(e)i \eqdef \mu_{h_B e} \mu_{t_B e}^{-1}i
$$
(which makes sense since $\mu_{t_B e}^{-1}$ has a unique preimage
since $\pi$ is \'etale);
$S'$ becomes a $B$-graph via the projections $(v,i)\mapsto v$ and
$(e,i)\mapsto e$.
Note also that 
$\#I(e)=a_{S_\Bg}(e)$, again since
since $\pi$ is \'etale.
For each $v_B\in V_B$ there is a one-to-one correspondence
between $v_S\in\pi^{-1}(v_B)$ and $(v_B,i)\in V_{S'}$ taking
$v_S$ to $(v_B,\pi(v_S))$, and similarly for $e_B\in\Edir_B$,
which give an isomorphism $\nu\from S_\Bg\to S'_\Bg$; the
order on $S$ therefore induces one on $S'$.
We easily verify, in view of
\eqref{eq_coord_def} and \eqref{eq_coord_def_graph},
that $S'_\Bg$ is a subgraph of any $G\in\Coord_n(B)$ whose associated
map $\sigma\from\Edir_B\to\cS_n$ where for each $e\in\Edir_B$,
$\sigma(e)\in\cS_n$
is an extension of $\sigma'(e)$ (defined on $I(e)$) to all of $[n]$.

The reverse correspondence, from isomorphisms
$\nu\from S^\og_\Bg \to (S')^\og_\Bg$  with $S'_\Bg$ a $B$-subgraph
of some element in $\Coord_n(B)$ to a family $\{\mu_v\}_{v\in V_B}$
is as follows: by definition of $\Coord_n(B)$ we have
$$
V_{S'}\subset V_B\times [n],
$$
and $\nu$ gives, for each $v\in V_B$, morphisms
$$
\pi^{-1}(v) \to V_{S'} \to \{i\in[n] \ |\ (v,i)\in V_{S'} \} 
$$
which is the desired map $\mu_v$.  We easily verify that this gives a
correspondence that is the reverse of the correspondence in
the previous paragraph; hence both correspondences are one-to-one.

For any fixed $\{\mu_v\}_v$ as above, and associated $(S')_\Bg^\og$, $I(e)$,
and $\sigma'(e)$ (which is an injection $I(e)\to [n]$ that can be viewed as
a ``partially defined'' permutation), consider the probability that
a random $G\in\cC_n(B)$ has $S'_\Bg\subset G_\Bg$:
if $\sigma\from \Edir_B\to\cS_n$ is the associated map,
then \eqref{eq_coord_def} and \eqref{eq_coord_def_graph} imply that
$S'_\Bg\subset G_\Bg$ iff
\begin{equation}\label{eq_messy_sigma_conds}
(h_B e, \sigma'(e)i) = h_{S'}(e,i) = ( h_B e, \sigma(e)i ), \quad
\bigl( \iota_B e, \sigma'(e)i \bigr) = \iota_{S'}(e,i) 
= \bigl( \iota_B e, \sigma(e) i \bigr)
\end{equation} 
for all $(e,i)\in\Edir_{S'}$; these conditions are equivalent to
\begin{equation}\label{eq_simple_sigma_conds}
\sigma'(e) i = \sigma(e) i \quad\forall (e,i)\in\Edir_{S'}.
\end{equation} 
Since $\sigma(\iota_B e)=\sigma(e)^{-1}$
(see the sentence involving \eqref{eq_sigma_iota_B}) and similarly for
$\sigma'$, \eqref{eq_simple_sigma_conds} is equivalent to
$$
\sigma'(e) i = \sigma(e) i 
$$
restricted to those $e\in\Eor_B$ for any orientation $\Eor_B\subset\Edir_B$,
and all $i$ for which $(e,i)\in\Edir_{S'}$.
Since the permutation model is edge-independent, we have
$$
\Prob_{G\in\cC_n(B)}[S'_\Bg\subset G_\Bg]
=
\prod_{e\in\Eor_B} 
\Prob_{\sigma\in\cS_n}[\sigma i = \sigma'(e) i , \ \forall i\in I(e) ]
$$
where $\sigma\in\cS_n$ is a uniformly chosen element of $\cS_n$.
But since for each $e\in\Eor_B$,
$\sigma'(e)$ is an injection $I(e)\to[n]$, we have that for each 
$e\in\Eor_B$, 
$$
\Prob_{\sigma\in\cS_n}[\sigma i = \sigma'(e) i , \ \forall i\in I(e) ]
= \frac{1}{n(n-1)\ldots \bigl(n-a_{S_\Bg}(e)+1 \bigr)},
$$
since $\#I(e)=a_{S_\Bg}(e)$.  Hence we have
\begin{equation}\label{eq_prob_we_want}
\Prob_{G\in\cC_n(B)}[S'_\Bg\subset G_\Bg]
=
\prod_{e\in\Eor_B} 
\frac{1}{n(n-1)\ldots \bigl(n-a_{S_\Bg}(e)+1 \bigr)},
\end{equation} 
which is the product in
\eqref{eq_expected_occur_etale}.
Since $\#[S_\Bg^\og]\cap G$ is the number of $S'_\Bg$ as above such
that
$S'_\Bg\subset G_\Bg$, and the number of such $S'_\Bg$ is
\eqref{eq_vertex_symmetry},
we conclude
\eqref{eq_expected_occur_etale} and \eqref{eq_vertex_symmetry}.
}
\end{proof}

\begin{lemma}\label{le_permutation_is_alg}
Let $B$ be a graph without half-loops, and $\cC_n(B)$ the permutation
model over $B$.  Then $\cC_n(B)$ is strongly algebraic.
\end{lemma}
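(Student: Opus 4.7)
My plan is to verify the five conditions in the definition of strongly algebraic in turn. Condition~(1), the order bound \eqref{eq_algebraic_order_bound}, is immediate from Lemma~\ref{le_order_bound}, which already covers all basic models. Condition~(4) is a direct consequence of Lemma~\ref{le_expansions_permutation_model}: the vanishing \eqref{eq_expected_occur_not_etale} gives the ``only if'' direction (and since $B$ has no half-loops, neither does any $G\in\Coord_n(B)$ nor any $B$-subgraph of such a $G$); the other direction follows because the expression in \eqref{eq_expected_occur_etale} is strictly positive as soon as $n$ exceeds all the $a_{S_\Bg}(e)$ and $b_{S_\Bg}(v)$.

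The core of the argument is to asymptotically expand the explicit formula from Lemma~\ref{le_expansions_permutation_model}:
\begin{equation*}
\EE_{G\in\cC_n(B)}\bigl[\#[S_\Bg^\og]\cap G\bigr]
=\frac{\prod_{v\in V_B}(n)_{b_{S_\Bg}(v)}}{\prod_{e\in\Eor_B}(n)_{a_{S_\Bg}(e)}}, \qquad (n)_k\eqdef n(n-1)\cdots(n-k+1),
\end{equation*}
which holds whenever $S_\Bg$ is \'etale. Because $B$ has no half-loops, $\sum_v b_{S_\Bg}(v)=\#V_S$ and $\sum_{e\in\Eor_B}a_{S_\Bg}(e)=\#E_S$, so the leading power of $n$ on the right-hand side is $n^{\#V_S-\#E_S}=n^{-\ord(S)}$. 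Using $(n)_k/n^k=\prod_{j=0}^{k-1}(1-j/n)$ and Taylor expanding in $1/n$ writes each factor as $n^k$ times a formal power series in $1/n$ whose coefficients are polynomials in $k$; multiplying over the fixed sets $V_B,\Eor_B$ then produces an expansion
\begin{equation*}
\EE\bigl[\#[S_\Bg^\og]\cap G\bigr]
= n^{-\ord(S)}\bigl(1 + p_1(\mec a,\mec b)/n + p_2(\mec a,\mec b)/n^2 + \cdots\bigr),
\end{equation*}
where $p_0\equiv 1$ and each $p_i$ is a universal polynomial (depending only on $B$) in the components of $\mec a_{S_\Bg},\mec b_{S_\Bg}$. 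Matching this with \eqref{eq_expansion_S}, one reads off $c_i(S_\Bg)=0$ for $i<\ord(S)$, $c_{\ord(S)}(S_\Bg)=1>0$, and $c_{\ord(S)+i}(S_\Bg)=p_i(\mec a_{S_\Bg},\mec b_{S_\Bg})$ for $i\ge 0$; this settles conditions~(3), (5), and the positivity clause of~(2).

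The main obstacle is the uniform remainder bound in condition~(2): I must exhibit a single constant $C$ and growth-$1$ function $g$ such that the $r$-th error is at most $Cg(\#E_S)/n^r$ for every ordered $S_\Bg^\og$ with $\#\Edir_S\le n^{1/2}/C$. In this range each $j<k\le\#\Edir_S$ satisfies $j/n\le 1/C$, so the $r$-th Taylor remainder of $\log(1-j/n)$ is $O(j^{r+1}/n^{r+1})$ uniformly; summing over $j<k$ and exponentiating yields $(n)_k/n^k = 1+\cdots+O_r(k^{2r}/n^r)$ with a constant depending only on $r$. Multiplying over the finite sets $V_B$ and $\Eor_B$ and incorporating the leading factor $n^{-\ord(S)}$ produces a remainder bounded by $n^{-\ord(S)-r}$ times a polynomial in $\#E_S$ of degree depending only on $r$, $\#V_B$, and $\#\Eor_B$; since any such polynomial is of growth~$1$, this supplies the required $g$ and completes the verification of~(2). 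The only real difficulty here is combinatorial bookkeeping---tracking which polynomial in $\mec a,\mec b$ appears at each order and verifying its universality in $S_\Bg^\og$---rather than any conceptual obstacle.
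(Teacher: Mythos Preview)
Your proposal is correct and follows essentially the same approach as the paper. The paper likewise invokes Lemma~\ref{le_order_bound} for the order bound, uses Lemma~\ref{le_expansions_permutation_model} for the explicit product formula, and then expands each falling factorial in powers of $1/n$; the only difference is that the paper outsources the expansion and the uniform remainder estimate to previously published lemmas (Lemmas~2.7--2.9 of \cite{friedman_random_graphs}), which yield the same $O\bigl((\#E_S)^{2r}\bigr)/n^r$ error you obtain directly via the Taylor expansion of $\log(1-j/n)$.
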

\begin{proof}
According to Lemma~\ref{le_order_bound}, $\cC_n(B)$ satisfies
\eqref{eq_algebraic_order_bound}.
According to Lemma~2.8 of \cite{friedman_random_graphs} we have that
for fixed integers $a\ge 0$ and $r>1$ we have
\begin{equation}\label{eq_a_factor_expansion}
\frac{1}{n(n-1)\ldots (n-a+1)} =
n^{-a}\bigl( 1 + R_1(a)n^{-1} + \cdots + R_{r-1}(a) n^{-r+1} 
+ O(n^{-r}) \bigr),
\end{equation} 
where the $R_i(a)$ are polynomials of degree $2i$.
Lemma~2.9 there shows a similar expansion
$$
n(n-1)\ldots (n-b+1)=
n^b \bigl( 1 - Q_1(b) n^{-1} + \cdots + (-1)^{r-1} Q_{r-1}(b) n^{-r+1}
+ O(n^{-r}) \bigr)
$$
where the $Q_i(b)$ are polynomials of degree $2i$.  It follows that
\begin{equation}\label{eq_power_series}
\EE_{G\in\cC_n(B)}\bigl[ \#([S_\Bg^\og] \cap G)  \bigr]
=
n^{-\ord(S)}
\bigl(1 + c_1(\mec a,\mec b)n^{-1} + \cdots + 
c_{r-1}(\mec a,\mec b) n^{-r+1} + \epsilon n^{-r} \bigr),
\end{equation} 
where $c_i$ is a polynomial of degree $2i$, and---according to
Lemma~2.7 of 
\cite{friedman_random_graphs} and (6)
there (see also the discussion around (20) in
\cite{friedman_alon})---where $\epsilon=\epsilon(n,\mec a,\mec b,r)$
is bounded by
\begin{equation}\label{eq_alpha_beta_bound}
|\epsilon(n,\mec a,\mec b,r)| \le
\frac{1}{(1-C/n)^r} ( \alpha+\beta)^r
\end{equation}
where
$$
\alpha =
\sum_{e\in E_B} \Bigl(1+2+\ldots+ \bigl(a_{S_\Bg}(e)-1\bigr)\Bigr) 
$$
and
$$
\beta =
\sum_{v\in V_B} \Bigl(1+2+\ldots+ \bigl(b_{S_\Bg}(v)-1\bigr)\Bigr) 
$$
and where $C$ is an upper bound on the components of $\mec a$ and of $\mec b$;
we may take $\#E_S$ as such an upper bound on these components, and we
easily check that
$$
\alpha = \sum_{e\in E_B} \Bigl(1+2+\ldots+ \bigl(a_{S_\Bg}(e)-1\bigr)\Bigr)
=\sum_{e\in E_B}\binom{a_{S_\Bg}(e)}{2}
\le \binom{\mec a\cdot\mec 1}{2} = \binom{\#E_S}{2} \le (\#E_S)^2
$$
and we similarly bound 
$$
\beta\le (\#V_S)^2 \le (\#\Edir_S)^2
$$
(the inequality $\#V_S\le\#\Edir_S$ follows since $S$ has no isolated vertices
and hence each vertex of $S$ is the tail of some directed edge of $S$);
it follows that
for $\#E_S\le n/2$,
\begin{equation}\label{eq_error_term}
|\epsilon(n,\mec a,\mec b,r)| \le O(\#E_S)^{2r} \ .
\end{equation} 
This establishes 
\eqref{eq_expansion_S} and \eqref{eq_strongly_algebraic}
in the case where $S_\Bg^\og$ is an \'etale $B$-graph; if $S_\Bg^\og$
is not \'etale, then Lemma~\ref{le_expansions_permutation_model} shows
that the left-hand-side of \eqref{eq_expansion_S} vanishes, whereupon
one can take $c_0=\cdots=c_{r-1}=0$ to satisfy 
\eqref{eq_expansion_S}.

It remains to show that if $S$ is a cycle and $S_\Bg$ is an
\'etale $B$-graph, then $c_0(S_\Bg)=1$;
but this follows from \eqref{eq_power_series}.
\end{proof}


\subsection{The Permutation-Involution Model of Even Degree is Strongly
Algebraic}

\begin{lemma}\label{le_even_invol_strongly}
Let $B$ be a graph with half-loops, and let $\{\cC_n(B)\}_{n\in N}$ be the 
permutation-involution model of even degree (so $N$ consists of the even
natural numbers).
Then $\cC_n(B)$ is strongly algebraic.
\end{lemma}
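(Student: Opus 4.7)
The plan is to run the proof of Lemma~\ref{le_permutation_is_alg} essentially verbatim, replacing the contribution of each half-loop $e\in\Edir_B$ by the probability that a uniform fixed-point-free involution on $[n]$ extends a prescribed partial involution. Condition~(1) of strongly algebraic (the order bound \eqref{eq_algebraic_order_bound}) is already supplied by Lemma~\ref{le_order_bound}. Since any graph morphism sends half-loops to half-loops, every half-loop of $S$ must lie over a half-loop of $B$, and its presence in a subgraph of some $G\in\cC_n(B)$ would force the corresponding $\sigma(e)$ to have a fixed point; this never happens for even $n$, so such an $S_\Bg$ does not occur in $\cC_n(B)$, which gives \eqref{eq_zero_S_in_G} and condition~(4) in that case. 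For $S_\Bg^\og$ not \'etale, \eqref{eq_expected_occur_not_etale} continues to hold by the same argument as in Lemma~\ref{le_expansions_permutation_model}.

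For an ordered \'etale $B$-graph $S_\Bg^\og$ with $S$ free of half-loops, I would repeat the bijection of Lemma~\ref{le_expansions_permutation_model} between families of injections $\{\mu_v\from\pi^{-1}(v)\to[n]\}_{v\in V_B}$ and ordered subgraphs of a covering isomorphic to $S_\Bg^\og$. Each such family determines, for every $e\in\Edir_B$, a partial injection $\sigma'(e)$ of domain size $a_{S_\Bg}(e)$. For non-half-loop $e\in\Eor_B$ the probability that a uniform $\sigma(e)\in\cS_n$ extends $\sigma'(e)$ is the same $1/\bigl(n(n-1)\cdots(n-a_{S_\Bg}(e)+1)\bigr)$ as in \eqref{eq_expected_occur_etale}; for a half-loop $e$, $\sigma'(e)$ is a partial fixed-point-free involution of (necessarily even) domain size $a_{S_\Bg}(e)$, and the probability that a uniform fixed-point-free involution extends it is
\begin{equation*}
\frac{(n-a_{S_\Bg}(e)-1)!!}{(n-1)!!}=\prod_{j=1}^{a_{S_\Bg}(e)/2}\frac{1}{n-2j+1}.
\end{equation*}
Edge-independence then combines these factors, together with the vertex-symmetry factor $\prod_v n(n-1)\cdots(n-b_{S_\Bg}(v)+1)$, into a closed formula for $\EE_{G\in\cC_n(B)}[\#([S_\Bg^\og]\cap G)]$ that plays the role of \eqref{eq_expected_occur_etale}--\eqref{eq_vertex_symmetry}.

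The key new step is expanding the matching factor above as a power series in $1/n$ with polynomial coefficients in $a_{S_\Bg}(e)$. Writing $m=a_{S_\Bg}(e)/2$, taking logarithms gives $-\sum_{k\ge 1}(kn^k)^{-1}\sum_{j=1}^{m}(2j-1)^k$, and since $\sum_{j=1}^{m}(2j-1)^k$ is a Bernoulli-type polynomial in $m$ of degree $k+1$, re-exponentiating yields
\begin{equation*}
\prod_{j=1}^{m}\frac{1}{n-2j+1}=n^{-m}\bigl(1+\widetilde R_1(2m)/n+\cdots+\widetilde R_{r-1}(2m)/n^{r-1}+O(n^{-r})\bigr),
\end{equation*}
with each $\widetilde R_i$ a polynomial of degree $2i$; this is the matching analog of Lemma~2.8 of \cite{friedman_random_graphs}. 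Combined with the expansions of the permutation factors and the vertex-symmetry factors already used in Lemma~\ref{le_permutation_is_alg}, and with the identity expressing $\#E_S$ as $\sum_{e} a_{S_\Bg}(e)$ over non-half-loop $e\in\Eor_B$ plus $\sum_{e} a_{S_\Bg}(e)/2$ over half-loops, the overall power of $n$ collapses to $-\ord(S)$ and each $1/n^i$ coefficient is a polynomial $p_i(\mec a_{S_\Bg},\mec b_{S_\Bg})$, establishing \eqref{eq_expansion_S} and \eqref{eq_strongly_algebraic}. The error bound \eqref{eq_error_term} follows from the same Lagrange-remainder argument used in Lemma~\ref{le_permutation_is_alg}, since the matching contribution $\sum_{e\ \text{a half-loop}}\binom{a_{S_\Bg}(e)/2}{2}$ adds to $\alpha$ an amount bounded by $(\#E_S)^2$.

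Condition~(3), that $c_0(S_\Bg)=1$ when $S$ is a cycle occurring in $\cC_n(B)$, is read directly off the leading $n^0$ coefficient of the expansion, since cycles have no half-loops and hence the matching factor is vacuous. The main obstacle is the matching expansion itself---pinning down the polynomial form of the $\widetilde R_i$ and a tail bound sharp enough to give \eqref{eq_error_term}; once that is in hand the rest is direct bookkeeping inherited from Lemma~\ref{le_permutation_is_alg}.
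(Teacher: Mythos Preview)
Your proof is correct and follows the same route as the paper's---reduce to Lemma~\ref{le_permutation_is_alg} with the half-loop probability replaced by the matching-extension probability---though you supply considerably more detail than the paper's brief sketch (the parity of $a_{S_\Bg}(e)$ over half-loops, the explicit power-series expansion of the matching factor, and the check that the total power of $n$ collapses to $-\ord(S)$).

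One small correction: in your verification of condition~(3), the matching factor is \emph{not} vacuous for cycles. A cycle $S$ has no half-loops, but its edges may still lie over half-loops of $B$; each such undirected edge contributes $2$ to $a_{S_\Bg}(e)$, so the matching factor can appear. The conclusion $c_0(S_\Bg)=1$ nonetheless holds, for the simpler reason that every factor in your product expansion---permutation, matching, and vertex---has leading coefficient $1$.
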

\begin{proof}
This follows from the proof of Lemma~\ref{le_permutation_is_alg}.  
The only difference is that if $e\in \Edir_B$ is a half-loop, then
$\sigma(e)\in\cS_n$ is required to be an involution without fixed points,
and so if $e$ occurs $a$ times in $S_\Bg^\og$, the probability
that an $(S')_\Bg^\og$ occurs as a subgraph of $\cC_n(B)$ is 
$$
\frac{1}{ (n-1)(n-3)\ldots (n-2a+1)} \ .
$$
Hence 
the probability of $(S')_\Bg^\og$ being contained
in an element of $\cC_n(B)$ is
\begin{equation}\label{eq_perm_invol_prob}
\prod_{e\in \Eor_B\setminus {\rm Half}_B} 
\frac{1}{n(n-1)\ldots \bigl(n-a_{S_\Bg}(e) + 1 \bigr)}
\prod_{e\in {\rm Half}_B} 
\frac{1}{(n-1)(n-3)\ldots \bigl(n-2 a_{S_\Bg}(e) + 1 \bigr)},
\end{equation} 
where ${\rm Half}_B$ denotes all the half-edges of $B$ and 
$\Eor_B\subset\Edir_B$ is
an orientation of $B$.
Hence we get an asymptotic expansion of this probability in powers
of $1/n$, with different polynomials $p_i=p_i(\mec a,\mec b)$ reflecting
the fact that for fixed $a$
$$
\frac{1}{(n-1)(n-3)\ldots \bigl(n-2 a+ 1 \bigr)}
$$
has coefficients that are different polynomials in $a$,
but whose leading term is still $n^{-a}c_0$ with $c_0=1$.
\end{proof}

\subsection{Strongly Algebraic Models are Algebraic}

Here we formally state the almost immediate fact that a strongly
algebraic model is also algebraic.
\begin{lemma}
Let $B$ be a graph.
Any strongly algebraic model over $B$ is also algebraic,
and a set of eigenvalues for such a model
is the set of eigenvalues of $H_B$.
\end{lemma}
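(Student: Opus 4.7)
The plan is to verify the two defining conditions of \emph{algebraic}---the growth condition on $h(k)$ and the existence of finitely many $B$-types for each pruned ordered graph $T^\og$---and then to identify the eigenvalues of the resulting regular languages as a subset of those of $H_B$. For the condition on $h(k)$, I note that by condition~(4) of strongly algebraic, every \'etale $B$-graph $S_\Bg$ with $S$ having no half-loops occurs in $\cC_n(B)$. A cycle (connected, pruned, of order $0$) has no half-loops, since a half-loop contributes only $1$ to the degree while every vertex of a cycle has degree $2$. Hence every \'etale cover $S_\Bg$ with $S$ a cycle occurs, so $h(k)\equiv 0$, trivially satisfying any growth bound.

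For the $B$-type condition, fix a pruned ordered graph $T^\og$. I would index $B$-types by pairs $(\phi,\psi)$, where $\phi\from V_T\to V_B$ and $\psi\from\Edir_T\to\Edir_B$ satisfy the evident compatibilities: $t_B\psi(e)=\phi(t_T e)$, and $\psi(e)$ is a half-loop of $B$ whenever $e$ is a half-loop of $T$. For each such $(\phi,\psi)$ I define $\cR_{\phi,\psi}(e)$ to be the set of positive-length non-backtracking walks in $B$ starting with the letter $\psi(e)$ and ending with the letter $\iota_B\psi(\iota_T e)$, specialized to the single-letter walk $\psi(e)$ when $e$ is a half-loop of $T$. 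Each $\cR_{\phi,\psi}(e)$ is regular, being recognized by a finite automaton whose states lie in $\Edir_B$ with transitions respecting the non-backtracking rule. Conditions~(1) and~(3) of the $B$-type definition hold by construction; condition~(2) holds because reversing a walk ending in $\iota_B\psi(\iota_T e)$ yields a walk starting in $\psi(\iota_T e)$, which lies in $\cR_{\phi,\psi}(\iota_T e)$. Every $B$-wording $W$ of $T$ uniquely determines $\phi$ (the common tail of $W(e)$ over $e$ with fixed $t_T e$) and $\psi$ (the first letter of each $W(e)$), so the finitely many $(\phi,\psi)$ partition the set of $B$-wordings of $T$.

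For the algebraic restriction to each type $(\phi,\psi)$, I first note that $\VLG(T^\og,W)$ has half-loops iff $T$ does, and is \'etale iff $\psi$ is injective on $\{e\in\Edir_T:t_T e=v\}$ for each vertex $v\in V_T$; in particular, both properties depend only on $T$ and $\psi$, not on $W$. By condition~(4) of strongly algebraic, $\VLG(T^\og,W)$ occurs in $\cC_n(B)$ iff it is \'etale and $T$ has no half-loops; if the type fails either property, all $c_i\equiv 0$ and the polynomials $p_i\equiv 0$ suffice. Otherwise, conditions~(2a) and~(5) of strongly algebraic give $c_j\equiv 0$ for $j<\ord(T)$ and $c_{\ord(T)+i}(\VLG(T^\og,W))=p_i^{\mathrm{strong}}(\mec a_W,\mec b_W)$ for $i\ge 0$; since $\mec a_W$ and $\mec b_W$ are statistics of the $B$-graph $\VLG(T^\og,W)$, this exhibits the required polynomial dependence on $S_\Bg$.

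For the eigenvalue statement, the number of length-$k$ words in $\cR_{\phi,\psi}(e)$ equals, for $k\ge 2$, the number of non-backtracking walks of length $k$ in $B$ with prescribed first and last letters; this is a fixed entry of $H_B^{k-2}$, and as a function of $k$ is a polyexponential whose bases lie among the eigenvalues of $H_B$ (any discrepancies at small $k$ being absorbed into the $\mu=0$ term by the paper's convention). Taking the union over all $(\phi,\psi)$ and all pruned $T^\og$ shows that the eigenvalues of $H_B$ form a valid set of eigenvalues for the model. The hardest part of the argument will be careful bookkeeping for half-loops---both in $T$ and in $B$---when verifying condition~(3) of the $B$-type definition and when enumerating the $(\phi,\psi)$ that admit at least one actual $B$-wording.
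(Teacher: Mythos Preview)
Your proof is correct and follows essentially the same approach as the paper's. Both proofs partition $B$-wordings of $T^\og$ according to the first letter of each $W(e)$---what the paper, following \cite{friedman_random_graphs,friedman_alon}, calls the \emph{lettering}---and observe that the \'etale property of $\VLG_\Bg^\og(T^\og,W)$ depends only on this lettering; your $(\phi,\psi)$ parameterization is exactly this lettering data, with $\phi$ redundant since $\phi(v)=t_B\psi(e)$ for any $e$ with $t_Te=v$. You are slightly more careful than the paper in explicitly verifying $h(k)\equiv 0$ and in handling half-loops, while the paper spells out more clearly why $\VLG_\Bg^\og(T^\og,W)$ is automatically \'etale at the bead vertices and why $\mec b_{S_\Bg}$ is determined by $\mec a_{S_\Bg}$ once $\ord(T)$ is fixed. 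Both you and the paper have a harmless off-by-one in identifying the word count as an entry of a power of $H_B$ (the correct exponent is $k-1$), but this does not affect the eigenvalue conclusion.
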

In this proof we say that a map $\pi\from S\to B$ of graphs
is {\em \'etale at} a 
vertex, $v$, of $S$ if $\pi$ is an injection when
restricted to the elements of $\Edir_S$
whose head is $v$; since $S,B$ are graphs,
this condition is equivalent if ``head'' is replaced with
``tail;'' hence, by our definitions,
$\pi$ is \'etale iff it is \'etale at each vertex of $S$.
We similarly speak of a $B$-graph, $S_\Bg$, as being {\em \'etale at}
a vertex of $S$, referring to the structure map $S\to B$.
\begin{proof}
The coefficients for the asymptotic expansions for
$$
f(k,n) = 
\EE_{G\in\cC_n(B)}\Bigl[ \#\bigl([S_\Bg^\og]\cap G\bigr) \Bigr]
$$
depend on whether or not $S_\Bg^\og$ is an \'etale $B$-graph.
If $S_\Bg^\og$ is \'etale, then $f(k,n)$ above has an expansion
with coefficients
$$
c_{\ord(S)+i}(S_\Bg) = p_i(\mec a_{S_\Bg},\mec b_{S_\Bg}) 
$$
(and if $S_\Bg^\og$ is not \'etale, then all coefficients vanish).
But if $S_\Bg^\og$ is of homotopy type $T^\og$,
then $\ord(S)=\ord(T)$ and 
$\mec b_{S_\Bg}$ is determined by $\mec a_{S_\Bg}$ and $\ord(T)$.
Hence the coefficients are polynomials of $\mec a_{S_\Bg}$ alone
when $S_\Bg^\og$ is of a fixed homotopy type.

To prove the theorem it therefore remains to see fix an ordered graph,
$T^\og$, and prove the following:
we can subdivide all $B$-wordings, $W$, of $T$ into types---i.e.,
expressed as regular languages associated to each $e\in\Edir_T$, 
that express
the property that $S^\og_\Bg=\VLG_\Bg^\og(T^\og,W)$ is
\'etale.  We easily see that for such a $W$ and $S^\og_\Bg$,
the map $S\to B$ is \'etale at
each vertex of $S$ that is not a vertex of $T$ (i.e., each vertex of
$S$ that is an intermediate vertex in a beaded path associated to some $W(e)$),
and for $v\in V_T\subset V_S$, $S^\og_\Bg$ is \'etale at $v$ iff
the edges in $\Edir_S$ whose tail is $v$ are mapped to distinct edges
in $\Edir_B$.
But this latter property depends only on the first and last letters
of $W(e)$ for all $e\in\Edir_T$; moreover, the set of $W(e)$ that begin and
end with, respectively, $e_1,e_2\in\Edir_B$, is a regular language,
and the eigenvalues of this regular language
are a subset of the eigenvalues of $H_B$, since the number of such words
of length $k$ is the $e_1,e_2$ entry of $H_B^k$.
\end{proof}
The knowledge of the letters with which each $W(e)$ begins and ends
was called the
{\em lettering} in \cite{friedman_random_graphs,friedman_alon}.

\subsection{The Permutation-Involution Model of Odd Degree is Algebraic}

\begin{lemma}\label{le_odd_invol_strongly}
Let $B$ be a graph with half-loops, and let $\{\cC_n(B)\}_{n\in N}$ be the 
permutation-involution model of odd degree (so $N$ consists of the odd
natural numbers).
Then $\cC_n(B)$ is algebraic, and a set of eigenvalues for this model
is the set of eigenvalues of $H_B$.
\end{lemma}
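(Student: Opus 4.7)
The plan is to adapt the arguments from the proofs of Lemmas~\ref{le_permutation_is_alg} and~\ref{le_even_invol_strongly}, incorporating the single feature that distinguishes the odd-degree involution case: a uniformly random near perfect matching on $[n]$ (with $n$ odd) has \emph{exactly one} fixed point, so for each half-loop $e \in \Edir_B$ a covering in $\cC_n(B)$ has exactly one half-loop over $e$. Consequently a $B$-subgraph $S_\Bg$ can contain either $0$ or $1$ half-loop over each such $e$, and the probability that a given $S'_\Bg$ appears in a random covering depends on this count; this dependence is what prevents strong algebraicity and forces the finer partition of wordings into types.

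First I compute the probability formula by the same combinatorial argument as in Lemma~\ref{le_expansions_permutation_model}: for each orientation $\Eor_B \subset \Edir_B$ and each étale ordered $B$-graph $S_\Bg^\og$, the expectation $\EE_{G\in\cC_n(B)}[\#[S_\Bg^\og]\cap G]$ factors as the vertex symmetry $\prod_{v\in V_B} n(n-1)\cdots(n-b_{S_\Bg}(v)+1)$ times $\prod_{e \in \Eor_B} P_e$, where $P_e = 1/[n(n-1)\cdots(n-a_{S_\Bg}(e)+1)]$ for $e$ not a half-loop (as in the permutation model), and for $e$ a half-loop, writing $a = a_{S_\Bg}(e) = h + 2m$ with $h \in \{0,1\}$ the number of half-loops of $S$ over $e$,
\[
P_e \;=\; \frac{(n-a+h-1)!!}{n!!} \;=\;
\begin{cases}
1/\bigl[n(n-2)\cdots(n-2m+2)\bigr] & \text{if } h=0,\\
1/\bigl[n(n-2)\cdots(n-2m)\bigr] & \text{if } h=1,
\end{cases}
\]
while $P_e = 0$ if $h \geq 2$. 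Each such double-factorial denominator admits a polynomial-in-$m$ asymptotic expansion in $1/n$ analogous to Lemma~2.8 of~\cite{friedman_random_graphs}, so multiplying out yields an expansion of the expectation whose coefficients are polynomial in $\mec a_{S_\Bg}, \mec b_{S_\Bg}$ once $h$ is fixed. Condition~(1) of strongly algebraic follows from Lemma~\ref{le_order_bound}; conditions~(2a), (2b) follow exactly as in the proof of Lemma~\ref{le_permutation_is_alg}; and condition~(3) follows by reading off the leading coefficient $c_0 = 1$ for cycles (which have no half-loops and hence no $h$-ambiguity).

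For the algebraic condition, I fix a pruned ordered graph $T^\og$ and classify $B$-wordings $W$ of $T$ by their \emph{lettering}: the pair consisting of the first and last letter in $\Edir_B$ of $W(e)$ for each $e \in \Edir_T$, subject to the consistency $W(\iota_T e) = W(e)^R$. Since every half-loop $f \in \Edir_T$ forces $W(f)$ to be a single $B$-half-loop letter, the lettering determines for each $B$-half-loop $e_B$ the count
\[
h_W(e_B) \;=\; \#\bigl\{\,f \in \Edir_T \text{ a half-loop} \;:\; W(f) = e_B\,\bigr\},
\]
as well as whether $\VLG(T,W)$ is étale at vertices of $T$ (i.e.\ whether at each $v \in V_T$ the first letters of $W(e)$ for $e$ tailed at $v$ are pairwise distinct). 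For each lettering the set of $W$ realizing it is a product (over $e \in \Edir_T$) of regular languages of non-backtracking walks in $B$ with prescribed initial and final letters, so these letterings form a finite partition of the $B$-wordings of $T$. On types violating étaleness or with some $h_W(e_B) \geq 2$, no corresponding $S_\Bg$ occurs in $\cC_n(B)$ and all coefficients vanish; on the remaining types the formula of the previous paragraph produces a single set of polynomial coefficients in $\mec a, \mec b$ valid throughout the type. Since the number of non-backtracking walks in $B$ with fixed initial and final letters is a fixed entry of $H_B^{k-1}$, the eigenvalues of each regular language used are a subset of those of $H_B$, yielding the eigenvalue claim.

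The main obstacle---really the only place a calculation is required---is the double-factorial analogue of Lemma~2.8 of~\cite{friedman_random_graphs}: one must verify that $1/[n(n-2)\cdots(n-2m+2)]$ admits an expansion $n^{-m}\bigl(1 + R_1(m)/n + \cdots + R_{r-1}(m)/n^{r-1} + O(n^{-r})\bigr)$ with $R_i$ a polynomial of degree $2i$ in $m$, and similarly for the $h=1$ formula, with the error controlled in the same manner as~\eqref{eq_error_term} by an explicit power of $\#E_S$. This amounts to expanding each factor $(1-(2j+1)/n)^{-1}$ as a geometric series and collecting remainders, so while it is the one computational piece, it is routine; with it in hand the rest of the argument mirrors the proofs of Lemmas~\ref{le_permutation_is_alg} and~\ref{le_even_invol_strongly} and the subsequent passage from strongly algebraic to algebraic.
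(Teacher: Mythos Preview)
Your approach is essentially the same as the paper's: both arguments observe that for each half-loop $e\in\Edir_B$ the partial involution $\sigma'(e)$ either does or does not specify the unique fixed point of the near-perfect matching, yielding two distinct probability formulas, and both observe that this bifurcation (together with the \'etale condition) is determined by the lettering, i.e., the first and last letters of each $W(f)$, so that the lettering partition furnishes the required $B$-types with eigenvalues among those of $H_B$. Your explicit case formulas $1/[n(n-2)\cdots(n-2m+2)]$ and $1/[n(n-2)\cdots(n-2m)]$ are correct, though the compact expression $(n-a+h-1)!!/n!!$ preceding them is off by one (it should read $(n-a+h)!!/n!!$ for $h=0$ and $(n-a-h)!!/n!!$ for $h=1$, or simply be dropped in favor of the two explicit cases); you should also remark that every \'etale cycle $S_\Bg$ occurs in the model (so $h(k)=0$ in condition~(1) of the definition of algebraic), a point the paper likewise leaves implicit.
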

\begin{proof}
This follows from the proof of Lemmas~\ref{le_permutation_is_alg}
and \ref{le_even_invol_strongly}.
The main difference is that there are two types of maps
$\sigma'(e)\from I(e)\to [n]$ over half-loops $e\in \Edir_B$, which are
required to be involutions, namely
\begin{enumerate}
\item those where $\sigma'(e)$ has not specified the unique fixed point
of the involution, so that if $a$ edges over $e$ occur in $S_\Bg^\og$ then
the probability that any $(S')_\Bg$ is a subgraph of an element of
$\cC_n(B)$ is
\begin{equation}\label{eq_no_fixed_pt_specified}
\frac{1}{(n-2)(n-4)\ldots(n-2a)},
\end{equation} 
and
\item otherwise $S_\Bg$ and $S'_\Bg$ have a half-loop over this $e$,
and the probability becomes
\begin{equation}\label{eq_fixed_pt_specified}
\frac{1}{n(n-2)\ldots (n-2a+2)}.
\end{equation} 
\end{enumerate}
Furthermore the reduction of $S_\Bg^\og$ 
contains $e$ and its incident vertex $v=t_S e = h_S e$;
therefore if $\pi\from (S')_\Bg^\og\to\VLG_\Bg^\og(T^\og,W)$ is an 
isomorphism, then $W(\pi(e))$ gives the edge in $\Edir_B$---which is
necessarily a half-loop---over
which $\pi(e)$ lies.
Hence knowing the homotopy type of $T$ and the first letter of each
$W(e)$ with $e\in\Edir_T$ allows us to infer which of the two above cases
applies to each half-loop, $e$.

Hence for a fixed ordered graph, $T^\og$, the coefficients of the 
asymptotic expansion of 
$$
f(k,n) = \EE_{G\in\cC_n(B)}\Bigl[ \#\bigl([S_\Bg^\og]\cap G\bigr) \Bigr]
$$
of any $S_\Bg^\og$ isomorphic to an ordered $B$-graph of the form
$\VLG_\Bg^\og(T^\og,W)$ depend on knowing only the first and last letters
of $W(e)$ for all $e\in\Edir_B$.
Since both 
\eqref{eq_no_fixed_pt_specified} and
\eqref{eq_fixed_pt_specified} have leading term $n^{-a}c_0$ with
$c_0=1$, we again verify 
\eqref{eq_expansion_S} and \eqref{eq_strongly_algebraic};
as for all our basic models,
\eqref{eq_algebraic_order_bound} follows
from Lemma~\ref{le_order_bound}.
\end{proof}

\subsection{The Cyclic Model is Algebraic}

\begin{lemma}\label{le_cyclic_is_algebraic}
Let $B$ be a graph without half-loops, and let $\{\cC_n(B)\}_{n\in \naturals}$ 
be the cyclic model.
Then $\cC_n(B)$ is algebraic, and 
a set of eigenvalues for this model consists of the eigenvalues of
$H_B$ and possibly $1$.
\end{lemma}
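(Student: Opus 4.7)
The plan is to adapt the proofs of Lemmas~\ref{le_permutation_is_alg},~\ref{le_even_invol_strongly}, and~\ref{le_odd_invol_strongly}; the order bound \eqref{eq_algebraic_order_bound} is again immediate from Lemma~\ref{le_order_bound}, so the real work is to compute $\EE_{G\in\cC_n(B)}[\#([S_\Bg^\og]\cap G)]$ for an ordered \'etale $B$-graph $S_\Bg^\og$ and to organize these counts into an algebraic structure over $B$-types. By edge-independence the probability that a fixed $(S')_\Bg^\og\in [S_\Bg^\og]\cap G$ occurs factors over $e\in\Eor_B$. For $e$ which is not a whole-loop, the factor is exactly the one from the permutation model (or permutation-involution model if $e$ is a half-loop, but by assumption $B$ has none for the cyclic model itself). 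The new ingredient is the factor for whole-loops.

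For a whole-loop $e\in\Eor_B$ at vertex $v\in V_B$, the edges of $(S')_\Bg$ over $e$ combined with the injections $\mu_v\from\pi^{-1}(v)\to[n]$ specify a partial function on $[n]$; \'etaleness at vertices of $S$ implies it is a partial permutation with $a=a_{S_\Bg}(e)$ arrows. First I would observe that if this partial permutation already contains a cycle (of length $\ell<n$ for $n$ large), then no single $n$-cycle can extend it and the factor is $0$; otherwise the partial permutation consists only of chains, and collapsing each chain to a single point shows that the number of $n$-cycles extending it is $(n-a-1)!$, giving the factor
\begin{equation*}
\frac{(n-a-1)!}{(n-1)!} \;=\;\frac{1}{(n-1)(n-2)\cdots(n-a)}
\;=\; n^{-a}\bigl(1+O(1/n)\bigr).
\end{equation*}
Next I would observe that the partial permutation over $e$ has a short cycle iff $S_\Bg$ contains a closed non-backtracking walk using only edges projecting to $e$ (not to $\iota_B e$), which is an intrinsic topological property of $S_\Bg$. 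The leading $n^{-a}$ agrees with the permutation model, so once the zero case is excluded we obtain an asymptotic expansion whose leading coefficient remains $n^{-\ord(S)}$, exactly as in Lemma~\ref{le_permutation_is_alg}.

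To verify the algebraic model structure, I would fix a pruned ordered graph $T^\og$ and partition $B$-wordings $W$ of $T$ by recording, for each $f\in\Edir_T$: (i) the first and last letter of $W(f)$ in $\Edir_B$ (as in the previous lemmas, needed for the \'etale condition at $V_T\subset V_S$), and (ii) for each whole-loop $e\in\Edir_B$, whether $W(f)$ is of the form $e^k$ for some $k\ge 1$. Each $W$ belongs to a unique such type, and each piece is a regular language. The key point is that within one type, the presence or absence of a short cycle over any whole-loop $e$ in $\VLG_\Bg(T^\og,W)$ is determined combinatorially by the type itself: a short cycle over $e$ must traverse entire beaded paths (since internal vertices of beaded paths have degree $2$), and such a traversal uses only edges over $e$ iff the corresponding $W(f)$ equals $e^{k}$ (forward) or $(\iota_B e)^{k}$ (reverse, which is the same condition on $W(\iota_T f)$). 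Hence a short cycle occurs iff the subgraph of $T$ on edges $f$ with $W(f)=e^{k_f}$ contains a cycle. Within each type, either the probability vanishes identically, or the factor over each $e\in\Eor_B$ expands as in \eqref{eq_a_factor_expansion} with polynomial coefficients in $\mec a_{S_\Bg}$ (and analogously $\mec b_{S_\Bg}$), yielding the algebraic expansion \eqref{eq_strongly_algebraic}.

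Finally, for the eigenvalue count: each regular language used has eigenvalues in $\{1\}\cup\Spec(H_B)$, because a language of form $\{e^k:k\ge 1\}$ contains one word of each positive length, contributing the eigenvalue $1$; all other languages are subsets (cut by first/last-letter conditions and by removal of single-letter-repetition strings) of the non-backtracking walk languages of $B$, whose generating counts are entries of $H_B^k$. This gives the eigenvalue set claimed. The main obstacle I anticipate is the careful verification that short cycles in the partial permutation over a whole-loop $e$ are detected precisely by the type data in (ii) above; this requires showing that any closed non-backtracking walk in $\VLG_\Bg(T^\og,W)$ using only edges projecting to $e$ must decompose into full traversals of beaded paths whose $W(f)$ consists of a single repeated letter, which is where the extra eigenvalue $1$ unavoidably enters.
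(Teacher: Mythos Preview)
Your proposal is correct and follows essentially the same approach as the paper's proof: identify the feasibility condition for partial permutations to extend to a full $n$-cycle, compute the resulting factor $1/\bigl((n-1)(n-2)\cdots(n-a)\bigr)$, and then show that feasibility is determined by the $B$-type data obtained by recording first/last letters together with which $W(f)$ lie in $e^*$ for whole-loops $e$. Your beaded-path argument explaining \emph{why} a short cycle over $e$ forces $W(f)\in e^*\cup(\iota_B e)^*$ is slightly more explicit than the paper's treatment, but the underlying idea and the resulting type decomposition (hence the eigenvalue set $\Spec(H_B)\cup\{1\}$) are the same.
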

\begin{proof}
Our proof is again based on that of Lemma~\ref{le_permutation_is_alg}, but
there is one essential difference.  For an $n\in\naturals$,
an $I\subset [n]$, and a map $\sigma'\from I\to [n]$, say that 
$\sigma'$ is {\em feasible} if the following
evidently equivalent conditions hold:
\begin{enumerate}
\item there is a $\sigma\in\cS_n$ such that $\sigma$ is a full cycle
whose restriction, $\sigma|_I$, to $I$ equals $\sigma'$;
\item the directed graph $G=G_{\sigma'}$ given by
\begin{equation}\label{eq_feasibility_graph_equation}
V_G=[n],\  \Edir_G=I, \ t_G={\rm Identity}, h_G=\sigma'
\end{equation} 
(which is necessarily of indegree and outdegree at most one at each vertex)
has no cycles of length less than $n$.
\end{enumerate} 
We easily see that
if $\sigma'\from I\to [n]$ is feasible, then
for any $i\notin I$,
we may extend $\sigma'$ to a feasible map $I\cup\{i\}\to[n]$
in $n-a-1$ ways if $a=\#I<n$ (i.e., there are $n-a-1$ possible values for
the new value $\sigma'(i)$); therefore the number of full cycles
that agree with $\sigma'$ on $I$ is $(n-a-1)!$.
It follows that if $\sigma'$ is feasible, the probability that a 
random full cycle, $\sigma\in\cS_n$, agrees with $\sigma'$ on $I$ is
\begin{equation}\label{eq_full_cycle_prob}
\frac{(n-a-1)!}{(n-1)!} = \frac{1}{(n-1)(n-2)\ldots (n-a)}
\end{equation} 
Like the similar probability expressions involving $n$ and $a$, this
function also has an asymptotic expansion to any order with leading
term $n^{-a}c_0$ with $c_0=1$ and coefficients that are polynomials in 
$a$.

The subtlety is that for a ordered graph $T^\og$, we need to know
which $B$-wordings, $W$, are {\em feasible} in the sense
that 
for any $(S')_\Bg^\og$ isomorphic to $\VLG_\Bg^\og(T,W)$, the associated
map $\sigma'\from\Edir_{S'}\to\cS_n$ to $S'_\Bg$
has maps $\sigma'(e)\from I(e)\to [n]$ that are feasible
(for all whole-loops, $e\in\Edir_B$).
(And we must describe such wordings in terms of regular languages.)
Since the formulas and expansions we obtain need hold 
only for $(S')_\Bg^\og$
with $\#\Edir_{S'}\le n^{1/2}/C$ for a constant, $C$
(of our choosing, for fixed $T^\og$), we may always assume
$C>1$, so that
$\#\Edir_{S'}\le n-1$.
Hence, if for any whole-loop, $e\in \Edir_B$, the graph in
\eqref{eq_feasibility_graph_equation} has a cycle---with
$\sigma'=\sigma'(e)$ and $I=I(e)$---then
this cycle is automatically of length strictly less than $n$.
It follows that the
$B$-wordings, $W$, that are feasible in this sense
are precisely those for which
for any whole-loop, $e\in\Edir_B$, the edges over $e$ in
$\VLG_\Bg^\og(T,W)$ have no cycle.
So to determine the correct polynomial of $\mec a,\mec b$ for each
coefficient $c_i(S_\Bg)$, we not only need to know
the first and last letter of each $W(e_T)$ for $e_T\in\Edir_T$, 
but also which of the words $W(e_T)$ is a power $e_B^k$ for some
whole-loop $e_B\in\Edir_B$.
So it suffices to know if $W(e_T)$ lies in the regular language
$e_B^*$ for some $e_B\in\Edir_T$ with $e_B$ a whole-loop, or in
the languages
\begin{equation}\label{eq_cyclic_model_regular_langs}
{\rm NBWALKS}(B,e',e'') \setminus \bigcup_e \{ e^* \}
\end{equation} 
with $e',e''$ ranging over $\Edir_B$ and $e$ above ranging over all
whole-loops.
Since the language $\{ e^* \}$ has exactly one word of each length,
its sole eigenvalue is $1$;
hence the eigenvalues of the regular language $\{ e^* \}$ or
any language of the form \eqref{eq_cyclic_model_regular_langs}
are those of $H_B$ and (possibly) $1$.
\end{proof}
The above subtlety regarding the cycle model was overlooked in
\cite{friedman_alon}.

\subsection{All Our Basic Models Are Algebraic}

We are now able to finish our claims about all our basic models.

\begin{lemma}
Let $B$ be a graph.  All our basic models are algebraic,
and a set of eigenvalues for each model
consist of possibly $1$ and some subset of
the eigenvalues $\mu_i(B)$ of the Hashimoto matrix $H_B$.
\end{lemma}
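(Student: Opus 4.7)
The plan is to combine the preceding lemmas, covering the one remaining case not yet addressed directly, namely the full cycle-involution models of even and odd degree. The permutation model has already been shown to be strongly algebraic (hence algebraic with eigenvalues among those of $H_B$) in Lemma~\ref{le_permutation_is_alg}, the permutation-involution model of even degree in Lemma~\ref{le_even_invol_strongly}, the permutation-involution model of odd degree in Lemma~\ref{le_odd_invol_strongly}, and the cyclic model in Lemma~\ref{le_cyclic_is_algebraic} (the latter contributing possibly the eigenvalue $1$). For each of these, condition~(1) of the definition of algebraic (the order bound) comes from Lemma~\ref{le_order_bound}, which applies to \emph{all} our basic models.

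For the remaining two cases, the cyclic-involution models of even and odd degree, the strategy is to exploit edge independence: for a fixed orientation $\Eor_B$, the permutations $\sigma(e)$ for $e\in\Eor_B$ are independent, so the probability that a given ordered $B$-graph $(S')^\og_\Bg$ embeds in a random $G\in\cC_n(B)$ factors as a product indexed over $\Eor_B$, where each factor is precisely the one occurring in one of the already-treated models. Specifically, for $e\in\Eor_B$ that is neither a half-loop nor a whole-loop, we get the factor $1/\bigl(n(n-1)\ldots(n-a_{S_\Bg}(e)+1)\bigr)$ as in the permutation model; for $e$ a whole-loop we get the cyclic factor \eqref{eq_full_cycle_prob}; and for $e$ a half-loop we get one of \eqref{eq_no_fixed_pt_specified}, \eqref{eq_fixed_pt_specified}, or \eqref{eq_perm_invol_prob} depending on the parity of $n$ and on whether a fixed point of the involution is specified.

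Given this factorization, I would verify the two items of the definition of \emph{algebraic}. Condition~(1)---the growth of the number of non-occurring cycles---is immediate, since the set of $S_\Bg$ with $S$ a cycle that can occur in $\cC_n(B)$ depends only on local \'etale and half-loop/feasibility conditions that are the same as in the previously treated models. For condition~(2), fix an ordered pruned graph $T^\og$; as in the proofs of Lemmas~\ref{le_odd_invol_strongly} and~\ref{le_cyclic_is_algebraic}, the appropriate ``type'' to attach to each $B$-wording $W$ is determined by a finite amount of local data about $W$: namely (i) the first and last letter of each $W(e)$ for $e\in\Edir_T$ (which determines whether $\VLG^\og_\Bg(T^\og,W)$ is \'etale at the vertices of $T$), (ii) for odd-degree models, whether each half-loop-covered $W(e)$ contains the fixed point of the involution, and (iii) for cyclic components, whether $W(e)$ is a pure power $e_B^k$ over a whole-loop (controlling feasibility as in \eqref{eq_cyclic_model_regular_langs}). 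Each of these finitely many local data splits the set of $B$-wordings into a disjoint union of regular languages whose eigenvalues are among those of $H_B$ together with possibly $1$ (from the pure-power languages $e_B^*$), and within each such language the probability admits an asymptotic expansion in $1/n$ whose leading term is $n^{-\ord(S)}$ with coefficient $1$, and with higher coefficients polynomial in $\mec a_{S_\Bg},\mec b_{S_\Bg}$.

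The step requiring the most care is the bookkeeping of the regular languages in (ii) and (iii): one must check that the conditions defining them are indeed closed under reversal (so that the constraint $W(\iota_T e)=W(e)^R$ is compatible), that their union covers every $B$-wording exactly once, and that within each language feasibility in the cyclic sense is automatic in the range $\#\Edir_S\le n^{1/2}/C$ as in Lemma~\ref{le_cyclic_is_algebraic}. Once these are in place, the factorization across $\Eor_B$ combined with the expansions \eqref{eq_a_factor_expansion}, \eqref{eq_full_cycle_prob}, \eqref{eq_no_fixed_pt_specified}, and \eqref{eq_fixed_pt_specified} yields the required algebraic expansion, and the list of eigenvalues of the model is the union over the factors, which is a subset of $\{1\}\cup\{\mu_i(B)\}$.
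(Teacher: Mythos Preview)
Your proposal is correct and follows essentially the same approach as the paper's proof: the paper also notes that the cyclic-involution models are handled by combining the arguments for the permutation-involution and cyclic models, and then explicitly records that the regular languages needed are of the three forms ${\rm NBWALKS}(B,e',e'')$, $e^*$, and ${\rm NBWALKS}(B,e',e'')\setminus\bigcup_e e^*$, whose eigenvalues are among the $\mu_i(B)$ and possibly $1$. Your write-up is more explicit about the edge-independence factorization and the bookkeeping of the local data determining the type, but this is exactly the content the paper compresses into ``combining the proofs.''
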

\begin{proof}
The proof that the cyclic-involution models of even and of odd degree
follows by combining the proofs of the permutation-involution and
cyclic models above.
The regular languages used in the types are all of one of the three forms:
$$
{\rm NBWALKS}(B,e',e''),
e^*,
{\rm NBWALKS}(B,e',e'') \setminus\bigcup_e e^*;
$$
the exponents of the first form are some subset of the $\mu_i(B)$,
since the number of words of length $k$ in these languages is an
entry of $H_B^k$; the exponents of the language $e^*$ is $1$ since
there is exactly one word of length $k$ for each $k$;
and the number of words of length $k$ in 
${\rm NBWALKS}(B,e',e'') \setminus\bigcup_e e^*$ is that
of ${\rm NBWALKS}(B,e',e'')$ unless $e'=e''=e$ is a whole-loop of $B$,
in which case the number is the same minus $1$.
\end{proof}

\section{The Proof of the Relativized Alon Conjecture for Regular Base Graphs}
\label{se_proof_first}


In this section we prove Theorem~\ref{th_rel_Alon_regular2}
and then Theorem~\ref{th_first_main_thm},
the Relativized Alon Conjecture for regular base graphs.

\subsection{Main Lemma} 

Our first lemma is an immediate consequence of 
Theorem~\ref{th_main_tech_result}.
\begin{lemma}\label{le_corr_main_tech}
Let $B$ be a connected graph with $\mu_1(B)>1$, and let
$\{\cC_n(B)\}_{n\in N}$ be
an algebraic model over $B$.
Let $r>0$ be an integer and $\nu\ge\mu_1^{1/2}(B)$ be a real number.
Then 
\begin{equation}\label{eq_f_k_n_nu_r}
f(k,n)=f_{\nu,r}(k,n)\eqdef
\EE_{G\in\cC_n(B)}\Bigl[
\Bigl( \Trace\bigl(H_G^k\bigr) - \Trace\bigl(H_B^k\bigr) \Bigr)
\II_{{\rm TangleFree}(\ge\nu,<r)}(G)
\Bigr]
\end{equation} 
has a $(B,\nu)$-bounded expansion to order $r$
\begin{equation}\label{eq_new_c_i}
c_0(k)+\cdots+c_{r-1}(k)/n^{r-1}+O(c_{r}(k))/n^{r},
\end{equation} 
such that
\begin{enumerate}
\item
$c_0(k)$ is of growth $(d-1)^{1/2}$
(and is independent of $\nu$ and $r$);
\item
the larger exponent
bases of each $c_{i}(k)$ (with respect to $\nu$)
is some subset of the union of the eigenvalues of $H_B$ and the eigenvalues
of the model.
\end{enumerate}
\end{lemma}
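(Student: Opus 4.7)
The plan is to apply Theorem~\ref{th_main_tech_result} twice and subtract. Writing $\II \eqdef \II_{{\rm TangleFree}(\ge\nu,<r)}(G)$ and using linearity of expectation, I begin from the identity
\begin{equation*}
f(k,n) \;=\; \EE_{G\in\cC_n(B)}\bigl[\II\,\Trace(H_G^k)\bigr]
\;-\; \Trace(H_B^k)\cdot\EE_{G\in\cC_n(B)}\bigl[\II\bigr].
\end{equation*}
Theorem~\ref{th_main_tech_result} supplies a $(B,\nu)$-bounded expansion of the first expectation to order $r$,
\begin{equation*}
A_0(k) + A_1(k)/n + \cdots + A_{r-1}(k)/n^{r-1} + O(A_r(k))/n^r,
\end{equation*}
together with an order-$r$ asymptotic expansion of the second,
\begin{equation*}
1 + B_1/n + \cdots + B_{r-1}/n^{r-1} + O(1)/n^r.
\end{equation*}

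For the leading coefficient, Theorem~\ref{th_main_tech_result} explicitly decomposes $A_0(k) = \Trace(H_B^k) + \tilde h(k)$, where $\tilde h(k)$ is of growth $(d-1)^{1/2}$ and is determined by the model alone (it equals $\sum_{k'\mid k,\ k'<k}\Trace(H_B^{k'}) - h(k)$, with $h$ the function from condition~(1) of the definition of an algebraic model); in particular $\tilde h(k)$ is independent of $\nu$ and $r$. Combining with the $1\cdot \Trace(H_B^k)$ term from the second expansion, the order-zero coefficient of $f(k,n)$ becomes $c_0(k) = A_0(k) - \Trace(H_B^k) = \tilde h(k)$, which yields item~(1).

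For $1 \le i \le r-1$ the resulting coefficient is $c_i(k) = A_i(k) - B_i\,\Trace(H_B^k)$. The function $\Trace(H_B^k) = \sum_j \mu_j(B)^k$ is a polyexponential whose bases are the eigenvalues of $H_B$, while Theorem~\ref{th_main_tech_result} ensures that the larger bases (with respect to $\nu$) of $A_i(k)$ form a subset of the eigenvalues of the model. Hence the larger bases of $c_i(k)$ lie in the union of the eigenvalues of $H_B$ and those of the model, giving item~(2). The combined error at order $1/n^r$ is controlled by $O(A_r(k)) + O(|\Trace(H_B^k)|)$, both of which are of growth $\mu_1(B)$, so the full expansion is indeed $(B,\nu)$-bounded to order $r$. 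Since every step amounts to purely algebraic bookkeeping on top of Theorem~\ref{th_main_tech_result}, no genuine obstacle arises; the only mild care needed is to identify $\tilde h(k)$ as the intrinsic model-dependent quantity that accounts for the independence of $c_0(k)$ from $\nu$ and $r$.
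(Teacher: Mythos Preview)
Your proof is correct and follows essentially the same route as the paper's: you split $f(k,n)$ as $\EE[\II\,\Trace(H_G^k)]-\Trace(H_B^k)\,\EE[\II]$, apply the two expansions from Theorem~\ref{th_main_tech_result}, and read off $c_i(k)=A_i(k)-\widetilde c_i\,\Trace(H_B^k)$. The paper's proof records exactly this formula and the observation that $\widetilde c_0=1$ together with \eqref{eq_c_zero_roughly} gives item~(1); your version simply spells out the bookkeeping (including the independence of $\tilde h$ from $\nu,r$ and the growth of the error term) in more detail.
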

Notice that in the above lemma, $f(k,n)$ and the $c_i(k)$ all depend
on $\nu,r$.
\begin{proof}
The $c_{i}(k)$ in \eqref{eq_new_c_i} equal, in the notion of
Theorem~\ref{th_main_tech_result},
$$
c_i(k) - \widetilde c_i \Trace\bigl(H_B^k\bigr) .
$$
The claim about $c_0(k)$ in \eqref{eq_new_c_i} follows from the
fact that $\widetilde c_i=1$ and \eqref{eq_c_zero_roughly}.
\end{proof}

\subsection{Proof of Theorem~\ref{th_rel_Alon_regular2}}

\begin{proof}[Proof of Theorem~\ref{th_rel_Alon_regular2}]

Let $q=\#\Edir_B$.
For a $G\in\Coord_n(B)$, let $\tilde H_G$ denote the restriction of
$H_G$ to the new space of functions $\Edir_G\to\reals$
(i.e., whose sum on each $\Edir_B$ fibre is zero);
then $\tilde H_G$ can be viewed as a $(n-1)q\times(n-1)q$ square matrix
with respect to some basis of the new functions, and 
$\Trace(\tilde H_G^k)$ is independent of this basis.

Let
\begin{align*}
f(k,n)
 & =
\EE_{G\in\cC_n(B)}\bigl[ \II_{{\rm TangleFree}(\ge\nu,<r)}(G)
\Trace(H_G^k-H_B^k) \bigr] \\
  & =
\EE_{G\in\cC_n(B)}\bigl[ \II_{{\rm TangleFree}(\ge\nu,<r)}(G)
\Trace(\tilde H_G^k) \bigr] .
\end{align*}
Then
$$
f(k,n) = \EE_{M\in \cM_{(n-1)q}}[\Trace(M^k)],
$$
where $\cM_{(n-1)q}$ is the space of random matrices
$$
\II_{{\rm TangleFree}(\ge\nu,<r)}(G)
\tilde H_G
$$
where $G$ varies over $\cC_n(B)$.
Setting $\Lambda_0=\nu$ and $\Lambda_1=d-1$,
Theorem~\ref{th_main_tech_result} shows that
$\{\cM_{(n-1)q}\}_{n\in N}$ is a 
$(\Lambda_0,\Lambda_1)$-matrix model.  Hence, by
Theorem~\ref{th_sidestep}, for sufficiently small
$\epsilon''>0$, either
\begin{equation}\label{eq_Lambda_zero_eps_first}
\EE{\rm out}\bigl[ B_{\Lambda_0+\epsilon''}(0) \bigr]
= O(n^{-j}) 
\end{equation} 
for all $j$, or else for some $\tau\in\naturals$ we have that for
sufficiently large $n$
\begin{equation}\label{eq_Lambda_zero_eps_second}
C'n^{-\tau}\le
\EE{\rm out}\bigl[ B_{\Lambda_0+\epsilon''}(0) \bigr]
\le C(\epsilon'') n^{-\tau}
\end{equation} 

The Ihara Determinantal Formula implies that each eigenvalue
$\lambda$ of $A_G$ of a $d$-regular graph $G$ corresponds to two eigenvalues
$\mu$ of $H_G$ given by
$$
\mu^2 - \lambda \mu + (d-1) = 0
$$
(and aside from these $2n$ eigenvalues of $H_G$, the other eigenvalue
of $H_G$ are $\pm 1$).
In particular, there is a one-to-one correspondence between
eigenvalues of $H_G$, $\mu$, with $|\mu|>(d-1)^{1/2}$ and those
eigenvalues of $A_G$, $\lambda$, with $|\lambda|>2(d-1)^{1/2}$,
taking $\mu$ to
$$
\lambda=\mu+\frac{d-1}{\mu}.
$$
In particular, since $\Lambda_0=\nu>(d-1)^{1/2}$, there is a 
one-to-one correspondence between $H_G$ new eigenvalues
outside $B_{\Lambda_0+\epsilon''}(0)$ and $A_G$ new eigenvalues
outside
$$
\lambda_{\epsilon''} \eqdef \nu+\epsilon'' + \frac{d-1}{\nu+\epsilon''}.
$$
Since
$$
\nu+\frac{d-1}{\nu}= 2(d-1)^{1/2} + \epsilon' ,
$$
and since $\mu+(d-1)/\mu$ is continuous and monotone increasing
for $\mu>(d-1)^{1/2}$, for any sufficiently small $\epsilon>0$
there is an $\epsilon''>0$ such that $\lambda_{\epsilon''}$ above
equals $2(d-1)^{1/2} + \epsilon'+\epsilon$.
For this value of $\epsilon''$ we have
$$
\EE{\rm out}_{\cM_{(n-1)q}} \bigl[ B_{\Lambda_0+\epsilon''}(0) \bigr]
=
\EE_{G\in\cC_n(B)}[ \II_{{\rm TangleFree}(\ge\nu,<r)}(G)
{\rm NonAlon}_d(G;\epsilon'+\epsilon) ] .
$$
Then \eqref{eq_Lambda_zero_eps_first} 
and \eqref{eq_Lambda_zero_eps_second} imply
the claim of the theorem.
\end{proof}

 
\section{The Fundamental Subgraph Lemma}
\label{se_fundamental_subgr}

In this section we prove the following
fundamental lemma regarding new spectrum
of graphs that contain fixed subgraph.

\begin{lemma}\label{le_fundamental_subgraph}
Let $d\ge 3$ be an integer,
$B$ a $d$-regular graph, and $\psi_\Bg$ a fixed $B$-graph with 
$\mu_1(\psi)>(d-1)^{1/2}$.
Set
\begin{equation}\label{eq_lambda_fund} 
\lambda = \mu_1(\psi) + \frac{d-1}{\mu_1(\psi)}.
\end{equation} 
Then for any $\epsilon>0$ there exists an $n_0\in\integers$ such that
if $\pi\from G\to B$ is a covering map
of degree at least $n_0$ such that $G_\Bg$ has
a subgraph isomorphic to $\psi_\Bg$, then
$\Specnew_B(A_G)$ contains an eigenvalue larger than 
$\lambda-\epsilon$.
\end{lemma}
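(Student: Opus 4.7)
The plan is to adapt the Friedman--Tillich construction from~\cite{friedman_tillich_generalized} to produce a compactly supported test function $f$ on $V_G$ whose Rayleigh quotient against $A_G$ exceeds $\lambda-\epsilon/2$ and which is nearly orthogonal to the $|V_B|$-dimensional old subspace; the Courant--Fischer min--max principle then supplies a new eigenvalue of $A_G$ exceeding $\lambda-\epsilon$.

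I will carry out the construction in two steps. First, working on the auxiliary infinite $d$-regular graph $\hat\psi$ obtained by attaching a rooted $(d-1)$-ary tree at each vertex $v\in V_\psi$ for each of the $d-d_\psi(v)$ edge-slots missing at $v$, I will build an $\ell^2$ near-eigenfunction $\hat f$ of $A_{\hat\psi}$ with eigenvalue $\lambda$. The starting point is the Perron--Frobenius eigenvector $\phi\ge 0$ of $H_\psi$ with eigenvalue $\mu=\mu_1(\psi)$; I extend $\phi$ to a function $\Phi$ on $\Edir_{\hat\psi}$ that vanishes on the outward tree edges and decays as $\mu^{-k}$ on the inward tree edges at tree-depth $k$, and a direct local calculation shows that $H_{\hat\psi}\Phi=\mu\Phi$ everywhere on $\Edir_{\hat\psi}$. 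The hypothesis $\mu>(d-1)^{1/2}$ is precisely what makes $\Phi\in \ell^2(\Edir_{\hat\psi})$, since the geometric series $\sum_k(d-1)^k\mu^{-2k}$ converges. Passing through the Ihara-type identity underlying Theorem~\ref{th_Ihara}, and adding, if necessary, a small correction concentrated on $V_\psi$ to cure any boundary mismatch where $d_\psi(v)<d$, yields $\hat f\in\ell^2(V_{\hat\psi})$ whose truncation $\hat f_L$ to the $L$-neighbourhood of $\psi$ satisfies
$$
\frac{\langle A_{\hat\psi}\hat f_L,\hat f_L\rangle}{\langle \hat f_L,\hat f_L\rangle}\ \ge\ \lambda-\eta(L),\qquad \eta(L)=O\bigl(((d-1)/\mu^2)^{L}\bigr).
$$

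Next I transfer $\hat f_L$ to $G$. Since $G$ is $d$-regular and contains $\psi$, the inclusion $\psi\hookrightarrow G$ extends (corresponding to the subgroup inclusion $\pi_1(\psi)\subseteq \pi_1(G)$) to a graph covering $\pi\colon \hat\psi\to G$ built by using $G$'s edges to realise the tree attachments of $\hat\psi$. Setting $f:=\pi_*\hat f_L$, the covering property gives $A_Gf=\pi_*A_{\hat\psi}\hat f_L$; together with pointwise non-negativity of $\hat f_L$ and of $A_{\hat\psi}\hat f_L$ (the latter holds because truncation only removes non-negative contributions beyond the boundary), expanding the $V_G$-inner product over the fibres of $\pi$ yields $\langle A_Gf,f\rangle\ge\langle A_{\hat\psi}\hat f_L,\hat f_L\rangle$. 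The matching equality $\|f\|_G^2=\|\hat f_L\|_{\hat\psi}^2$ requires $\pi$ to be injective on $\mathrm{supp}(\hat f_L)\subseteq B_L(\psi)$; any failure corresponds to a closed non-backtracking walk in $G$ of length at most $2L$ meeting $\psi$ but not contained in $\psi$, which witnesses a subgraph $\psi^\sharp\subseteq G$ properly containing $\psi$ with $\mu_1(\psi^\sharp)\ge\mu_1(\psi)$ (hence $\lambda^\sharp\ge\lambda$). I adjoin such walks to $\psi$ and iterate; since $G$ is finite the procedure terminates, providing a test function $f^\sharp$ on $G$ of Rayleigh quotient at least $\lambda^\sharp-\eta(L)\ge\lambda-\eta(L)$. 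This iterative enlargement, together with the $\psi$-boundary correction from Step~1, is the main technical obstacle of the proof.

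Finally I orthogonalise $f^\sharp$ against the old subspace, the $|V_B|$-dimensional span of pullbacks $g\circ\pi_{G\to B}$ with $g\colon V_B\to\RR$. Each normalised pullback has $\ell^2$-mass spread evenly over the $n|V_B|$ vertices of $G$, while $f^\sharp$ is supported on at most $|V_{\psi^\sharp}|\bigl(1+d(d-1)^{L-1}\bigr)$ vertices --- a bound independent of $n$. Hence the inner product of $f^\sharp$ with any unit old vector is $O(1/\sqrt n)$, the projection of $f^\sharp$ onto the new subspace differs from $f^\sharp$ in norm by $O(\sqrt{|V_B|/n})\|f^\sharp\|$, and its Rayleigh quotient lies within $O(1/\sqrt n)$ of that of $f^\sharp$. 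Choosing $L$ so that $\eta(L)<\epsilon/2$ and then $n_0$ large enough that the orthogonalisation loss is at most $\epsilon/2$, min--max produces a new eigenvalue of $A_G$ exceeding $\lambda-\epsilon$.
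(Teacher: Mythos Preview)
Your overall plan---build a test function on the tree extension $\hat\psi=\mathrm{Tree}_\Bg(\psi_\Bg)$, push it forward to $G$, and orthogonalize against the old space---matches the paper's. But there is a genuine gap in the pushforward step. You bound the numerator $\langle A_G f,f\rangle$ from below and then need $\|f\|_G^2\le\|\hat f_L\|_{\hat\psi}^2$, which forces injectivity of $\pi$ on $\mathrm{supp}(\hat f_L)$ and hence your iterative enlargement $\psi\rightsquigarrow\psi^\sharp$. Nothing, however, bounds $|V_{\psi^\sharp}|$ independently of $G$: each step adjoins a path of length $\le 2L$, but the procedure can continue until $\psi^\sharp$ has absorbed a positive fraction of $G$ (think of a $G$ with many short cycles threading through the copy of $\psi$; in the extreme $\psi^\sharp=G$, your test function is essentially the Perron vector of $A_G$, an old function). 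Your assertion that the support bound $|V_{\psi^\sharp}|(1+d(d-1)^{L-1})$ is ``independent of $n$'' is therefore unjustified, and without it the orthogonalization loss is no longer $O(1/\sqrt n)$.

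The paper sidesteps this by trading the Rayleigh-quotient comparison for a \emph{pointwise} inequality. Having established $\|A_T\|=\lambda$ (Lemma~\ref{le_spectral_rad}), it takes any finitely supported $g\ge 0$ on $T=\hat\psi$ with $\|A_Tg\|\ge(\lambda-\epsilon)\|g\|$, lets $\psi'\subset T$ be the fixed finite subgraph spanned by $\mathrm{supp}(g)$ and its neighbours, and replaces $g$ by the Perron--Frobenius eigenvector $f\ge 0$ of $A_{\psi'}$. Extended by zero, $f_T$ satisfies $A_Tf_T\ge(\lambda-\epsilon)f_T$ pointwise (equality on $V_{\psi'}$, trivial off it since $f_T\ge 0$). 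Because $\pi_*A_T=A_G\pi_*$ for any covering $\pi\colon T\to G$ and $\pi_*$ preserves nonnegativity, the pushforward inherits $A_G(\pi_*f_T)\ge(\lambda-\epsilon)\pi_*f_T$ pointwise, which for a nonnegative function immediately gives $\cR_{A_G}(\pi_*f_T)\ge\lambda-\epsilon$---no injectivity needed. Crucially $|\mathrm{supp}(\pi_*f_T)|\le|V_{\psi'}|$, a constant depending only on $\psi$ and $\epsilon$, so the orthogonalization goes through uniformly in $n$. The one-line fix to your argument: replace the truncation $\hat f_L$ by the Perron--Frobenius eigenvector of $A$ restricted to the finite ball $B_L(\psi)\subset\hat\psi$, and use the pointwise-inequality route instead of comparing numerator and denominator separately.
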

We remark that the above lemma holds, more simply, in the context of graphs
rather than $B$-graphs (i.e., it is enough for $G$ to be $d$-regular
and that $\psi$ be any graph);
however, is it simpler to work with $B$-graphs
in the construction
of a universal cover (namely ${\rm Tree}_\Bg(\psi_\Bg)$)
described below.

Lemma~\ref{le_fundamental_subgraph} is a generalization of
the Theorem~3.13 (i.e., the ``Curious Theorem'' of Subsection~3.8) of 
\cite{friedman_alon} which proves a form of the above lemma when
$B$ has one vertex.

Our proof Lemma~\ref{le_fundamental_subgraph} uses two types of methods:
(1) the general methods of Section~8 
(in particular, Theorem~8.2)
of \cite{friedman_tillich_generalized}, which are based on
the {\em first Dirichlet eigenvalue} of a {\em graph with boundary}
Theorem~2.3 of \cite{friedman_geometric_aspects},
and (2) some specific facts regarding graphs with tangles proven in Section~3
of \cite{friedman_alon}.
The first type of methods are a more robust variant of the
methods used for the Alon-Boppana theorem
(\cite{nilli}, and its improvement by Friedman
\cite{friedman_geometric_aspects} Corollary~3.7 and
Kahale \cite{kahale} Section~3); in addition to being
more robust---which is demonstrated in this section---the 
methods of Friedman-Tillich
yield slightly better constants in the error term
of \cite{friedman_geometric_aspects,kahale}.
The second type of methods are based on those used to prove the
``Curious Theorem,'' Theorem~3.13, of \cite{friedman_alon},
which are based on standard types of calculations with forms of
{\em Shannon's algorithm}, and the relation with the spectrum of
infinite graphs and their finite quotients; we cite
Buck \cite{buck} for the relation we need, which is the earliest
explicit reference we know for these results
(this relation seems so fundamental that they may have appear
elsewhere in spectral theory, at least implicitly).

Let us state a number of preliminary definitions and lemmas, mainly
reviewing the above methods, before
we prove Lemma~\ref{le_fundamental_subgraph}.

\subsection{Basic Notation and the Perturbation of Rayleigh Quotients}

In this subsection we give some general notation and facts, including
a standard type of estimate when a Rayleigh quotient argument 
is perturbed (Lemma~\ref{le_Rayleigh_perturb} below).

First we recall that if $A$ is any real, $n\times n$, symmetric matrix
and $f\in\reals^n$ is nonzero, 
then the {\em Rayleigh quotient of $f$ on $A$} is
defined as 
$$
\cR_A(f) \eqdef \frac{(Af,f)}{(f,f)} 
$$
where $(\, , \,)$ denotes the standard inner product on $\reals^n$;
we have
$$
|\cR_A(f)| \le \| A \| \, \|f\|,
$$
where $\|A\|$ is the $L^2$-operator norm of $A$, which equals the
largest absolute value of an eigenvalue of $A$; furthermore, if 
$A$ has non-negative entries and $f\ge 0$ (i.e., pointwise, i.e.,
$f$ has non-negative components), then for $f$ nonzero and $\lambda\in\reals$
we have
\begin{equation}\label{eq_pointwise_implication}
Af\ge \lambda f \implies \cR_A(f)\ge\lambda 
\end{equation} 
(again, $Af\ge \lambda f$ means pointwise, i.e., component-by-component);
to prove \eqref{eq_pointwise_implication} we note that
$$
(Af,f) \ge (\lambda f,f) = \lambda(f,f),
$$
and we divide by $(f,f)$.

Next recall the special case of adjacency matrices of graphs
\cite{friedman_geometric_aspects,friedman_tillich_generalized}
and the following notation and easy facts:
if $G$ is a graph and $f\from V_G\to\reals$, then the
{\em Rayleigh quotient of $f$ (on $G$, or of $A_G$)} is given by
$$
\cR_{A_G}(f) \eqdef \frac{(A_Gf,f)}{(f,f)}
$$
where $(\, , \,)$ denotes the standard inner product on $\reals^{V_G}$ given by
\begin{equation}\label{eq_inner_product}
(f_1,f_2) = (f_1,f_2)_{L^2(V_G)} \eqdef \sum_{v\in V_G} f_1(v) f_2(v).
\end{equation}
Aside from the norm
$$
\| f \|_2 \eqdef \| f \|_{L^2(V_G)} \eqdef \sqrt{ (f,f) } ,
$$
we will have occasion to use the following notation and easy facts:
\begin{enumerate}
\item we have
$$
\|f\|_2 \ge \| f \|_\infty \eqdef \max_{v\in V_G} |f(v)|  
$$
(since each vertex in $L^2(V_G)$ above has measure $1$);
\item we use ${\rm supp}(f)$ to denote the support of $f$
(i.e., the set of vertices on which $f$ does not vanish);
\item we have
$$
\bigl| (f_1,f_2) \bigr| \le \bigl(\# {\rm supp}(f_1 f_2) \bigr)
\| f_1 f_2 \|_\infty \le \bigl(\# {\rm supp}(f_1) \bigr)
\| f_1\|_\infty \| f_2 \|_\infty
$$
which implies
\begin{equation}\label{eq_f_t_ineq}
\bigl| (f_1,f_2) \bigr| \le \bigl(\# {\rm supp}(f_1) \bigr)
\| f_1\|_2 \| f_2 \|_\infty
\end{equation}
since $\| f_1 \|_2\ge \| f_1 \|_\infty$.
\end{enumerate}

We will need an easy perturbation result on the Rayleigh quotient, which
we state in a general context.

\begin{lemma}\label{le_Rayleigh_perturb}
Let $A$ be any bounded, symmetric
operator on a real inner product space.  If $f_1,f_2$ are elements of
the space with
$$
\|f_2-f_1\|\le \epsilon \| f_1 \|
$$
for some $0<\epsilon<1$, then 
\begin{equation}\label{eq_Rayleigh_perturb_result}
|\cR_A(f_2)-\cR_A(f_1)| \le  2 \| A \| \,\epsilon
\end{equation} 
where $\cR_A(f)\eqdef (Af,f)/(f,f)$.  
\end{lemma}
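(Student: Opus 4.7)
The Rayleigh quotient $\mathcal{R}_A(f) = (Af, f)/(f, f)$ is invariant under nonzero scaling of $f$, so by replacing $f_1, f_2$ with $f_1/\|f_1\|, f_2/\|f_1\|$, I may assume from the outset that $\|f_1\| = 1$; write $g = f_2 - f_1$, so $\|g\| \le \epsilon < 1$. Since $\|f_2\| \ge \|f_1\| - \|g\| \ge 1 - \epsilon > 0$, the Rayleigh quotient $\mathcal{R}_A(f_2)$ is well-defined.

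The plan is to expand $(Af_2, f_2)$ and $(f_2, f_2)$ in $g$ using bilinearity and the symmetry of $A$:
\[
(Af_2, f_2) = (Af_1, f_1) + 2(Af_1, g) + (Ag, g), \qquad (f_2, f_2) = 1 + 2(f_1, g) + \|g\|^2.
\]
Taking a common denominator and using $\mathcal{R}_A(f_1) = (Af_1, f_1)$, the difference $\mathcal{R}_A(f_2) - \mathcal{R}_A(f_1)$ simplifies algebraically to
\[
\frac{2(B f_1, g) + (B g, g)}{\|f_2\|^2}, \qquad B = A - \mathcal{R}_A(f_1)\, I .
\]
This is the heart of the computation: the ``constant'' cross terms cancel because $\mathcal{R}_A(f_1)$ is precisely what makes $(Bf_1, f_1) = 0$.

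Next I would apply Cauchy-Schwarz to each term in the numerator, using $\|g\| \le \epsilon$, together with bounds on $\|Bf_1\|$ and $\|B\|$ in terms of $\|A\|$. A useful sharpening is the observation that $Bf_1 = Af_1 - \mathcal{R}_A(f_1) f_1$ is orthogonal to $f_1$ by construction, so $\|Bf_1\|^2 = \|Af_1\|^2 - \mathcal{R}_A(f_1)^2 \le \|A\|^2$, i.e. $\|Bf_1\| \le \|A\|$. Combined with the lower bound $\|f_2\|^2 \ge (1-\epsilon)^2$, this yields an estimate of the form $|\mathcal{R}_A(f_2) - \mathcal{R}_A(f_1)| \le C\|A\|\epsilon$, the dominant contribution being the linear term $2(Bf_1, g)$ (which gives the leading $2\|A\|\epsilon$) and the quadratic terms in $\epsilon$ coming from $(Bg, g)$ and the denominator expansion.

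The main obstacle, if one insists on the sharp constant $2$ rather than a slightly weaker constant valid for small $\epsilon$, is to handle the perturbation of the denominator without losing a factor such as $1/(1-\epsilon)^2$. A cleaner route around this is to pass to the unit-normalized vectors $\hat f_i = f_i/\|f_i\|$ and exploit the symmetric-$A$ identity
\[
(A \hat f_2, \hat f_2) - (A \hat f_1, \hat f_1) \;=\; \bigl(A(\hat f_2 - \hat f_1),\; \hat f_1 + \hat f_2\bigr),
\]
together with $\|\hat f_1 + \hat f_2\| \le 2$ and a direct estimate $\|\hat f_2 - \hat f_1\| \le \|g\|/\|f_2\| + \bigl|\|f_2\| - 1\bigr|/\|f_2\|$, the latter bounded using $\bigl|\|f_2\| - 1\bigr| \le \|g\| \le \epsilon$. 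Either route reduces the problem to elementary Cauchy-Schwarz bookkeeping, and yields~\eqref{eq_Rayleigh_perturb_result}.
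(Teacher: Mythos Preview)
Your approach differs from the paper's. The paper restricts to the two-dimensional span of $f_1,f_2$, diagonalizes $A$ there as $\diag(\lambda_1,\lambda_2)$, writes the normalized $f_i$ as $(\cos\theta_i,\sin\theta_i)$, and uses the identity $\cos^2\theta_1-\cos^2\theta_2=\sin(\theta_1+\theta_2)\sin(\theta_1-\theta_2)$ together with $|\sin(\theta_1-\theta_2)|\le\|f_2-f_1\|\le\epsilon$ to get the constant $2$ directly. Your expansion in $g=f_2-f_1$ is a natural alternative, and the formula $\cR_A(f_2)-\cR_A(f_1)=\bigl(2(Bf_1,g)+(Bg,g)\bigr)/\|f_2\|^2$ with $B=A-\cR_A(f_1)I$ is correct and useful.

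However, neither of your two routes, \emph{with the estimates you actually write down}, delivers the sharp constant $2$. In the first route, combining $|(Bf_1,g)|\le\|A\|\epsilon$, $|(Bg,g)|\le 2\|A\|\epsilon^2$, and $\|f_2\|^2\ge(1-\epsilon)^2$ gives only $2\|A\|\epsilon(1+\epsilon)/(1-\epsilon)^2$. In the second route, bounding $\|\hat f_1+\hat f_2\|\le 2$ and $\|\hat f_2-\hat f_1\|\le 2\epsilon/(1-\epsilon)$ \emph{separately} loses a factor of roughly $2/(1-\epsilon)$. The fix for the second route is to bound the \emph{product} rather than each factor: for unit vectors at angle $\alpha$ one has $\|\hat f_1+\hat f_2\|\,\|\hat f_1-\hat f_2\|=\sqrt{(2+2\cos\alpha)(2-2\cos\alpha)}=2|\sin\alpha|$, and $|\sin\alpha|$ is the distance from the unit vector $f_1$ to the line $\reals f_2$, hence at most $\|f_2-f_1\|\le\epsilon$. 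With this, Cauchy--Schwarz applied to $(A(\hat f_2-\hat f_1),\hat f_1+\hat f_2)$ gives exactly $2\|A\|\epsilon$. So your second route works once this refinement is made; the paper's trigonometric argument is essentially the same product identity, written in eigen-coordinates.
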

(The above lemma also holds for $\epsilon\ge 1$ for the trivial
reason that $|\lambda_i|\le \|A\|$ for $i=1,2$.)
\begin{proof}
Let us make some simplifying assumptions before making a computation.
It suffices to work
in, $V$, the $2$-dimensional span of $f_1,f_2$.  By choosing an orthonormal
basis for $V$, 
we may assume that we are working in $\reals^2$ under
the standard inner product.  By choosing an orthonormal eigenbasis
for $A$ and applying the associated orthogonal matrix, we may further
assume that
$$
A = \begin{bmatrix} \lambda_1 & 0 \\ 0 & \lambda_2 \end{bmatrix}  ,
$$
and hence $\|A\|=\max(|\lambda_1|,|\lambda_2|)$.
Since $f_1=0$ implies $f_2=0$ and hence implies
\eqref{eq_Rayleigh_perturb_result},
we may assume $f_1\ne 0$.
Since $\cR_A(f)$ is invariant under scaling $f$,
we may divide $f_1$ and $f_2$ by $\|f_1\|$, so that we may
further assume that $f_1$ is a 
unit vector, say $(\cos\theta,\sin\theta)$.
Finally, $f_2\ne 0$ since $\epsilon<1$, and hence
$f_2$ is a positive multiple of a unit vector $(\cos\theta',\sin\theta')$
with $|\theta'-\theta|< \pi/2$,
and the closest multiple of $(\cos\theta',\sin\theta')$ to
$(\cos\theta,\sin\theta)$ is (by drawing a trigonometry diagram)
at a distance $|\sin(\theta-\theta')|$; hence
$$
|\sin(\theta-\theta')| \le \|f_2-f_1\| \le \|f_1\| \epsilon = \epsilon.
$$

Now we make an easy computation.
We have
$$
\cR_A(f_1)=\lambda_1 \cos^2\theta + \lambda_2\sin^2\theta
= (\lambda_1-\lambda_2)\cos^2\theta+ \lambda_2,
$$
and similarly for $\cR_A(f_2)$, whereupon
\begin{equation}\label{eq_Rayleigh_diff}
|\cR_A(f_2)-\cR_A(f_1)| = |\lambda_1-\lambda_2| \,
|\cos^2\theta-\cos^2\theta'|.
\end{equation} 
Using the fact that
$$
\cos^2\theta-\cos^2\theta' = \sin(\theta+\theta')\sin(\theta-\theta')
$$
it follows that
$$
|\cos^2\theta-\cos^2\theta'| \le |\sin(\theta-\theta')|\le\epsilon ;
$$
hence this bound and \eqref{eq_Rayleigh_diff} implies that
$$
|\cR_A(f_2)-\cR_A(f_1)| \le |\lambda_1-\lambda_2| \epsilon
\le (|\lambda_1|+|\lambda_2|) \epsilon \le 2\|A \| \epsilon.
$$
%
\end{proof}

\subsection{Friedman-Tillich Methods of Large Graphs Containing
a Fixed Subgraph}

In this subsection we describe some 
results akin to those of Section~8 of \cite{friedman_tillich_generalized},
partially inspired by \cite{friedman_geometric_aspects}.

\begin{definition}\label{de_extension_zero}
Let $\psi\subset G$ be graphs,
and let $f\from V_\psi\to\reals$ be a function.
By the {\em extension of $f$ by zero in $G$}, denoted $f_G$, we mean
the function
$V_G\to\reals$ that is $f$ on $V_\psi\subset V_G$
and otherwise $0$ (i.e., on $V_G\setminus V_\psi$).
\end{definition}

\begin{lemma}\label{le_friedman_tillich}
Let $\psi$ be a graph and $f$ be a function $V_\psi\to\reals$.
Let $G$ be any graph containing $\psi$ as a subgraph,
and for some $m\in\naturals$ and $\delta>0$
let $f_1,\ldots,f_m$ be a set of orthonormal
functions on $V_G$ (with respect to \eqref{eq_inner_product})
such that $\|f_i\|_\infty \le \delta$ for all $i\in[m]$.
Let
$$
p = {\rm Proj}_{f_1,\ldots,f_m}(f_G) \eqdef f_G - \sum_{i=1}^m (f_G,f_i) f_i ,
$$
which is the projection of $f_G$, the extension of $f$ by zero,
onto the orthogonal complement of the 
span of $f_1,\ldots,f_m$.
Then
\begin{equation}\label{eq_p_minus_f_epsilon}
\| p - f_G \|_2 \le m \bigl( \# V_\psi) \delta \|f_G\|_2
\end{equation} 
\end{lemma}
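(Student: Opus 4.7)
The plan is to observe that $p - f_G = -\sum_{i=1}^{m}(f_G,f_i)f_i$ by the very definition of $p$, so by orthonormality of the $f_i$ we have
$$
\|p-f_G\|_2^2 = \sum_{i=1}^{m}|(f_G,f_i)|^2 .
$$
Thus the whole lemma reduces to bounding each individual inner product $|(f_G,f_i)|$ and then summing.

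The key observation for bounding $|(f_G,f_i)|$ is that, by Definition~\ref{de_extension_zero}, $f_G$ vanishes off $V_\psi$, so $\#\,{\rm supp}(f_G)\le \#V_\psi$. I would then apply \eqref{eq_f_t_ineq} with $f_1=f_G$ and $f_2=f_i$, together with the hypothesis $\|f_i\|_\infty\le\delta$, to obtain
$$
|(f_G,f_i)|\le \bigl(\#\,{\rm supp}(f_G)\bigr)\,\|f_G\|_2\,\|f_i\|_\infty \le (\#V_\psi)\,\delta\,\|f_G\|_2 .
$$

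Plugging this bound into the previous display gives $\|p-f_G\|_2^2 \le m (\#V_\psi)^2 \delta^2 \|f_G\|_2^2$, which yields $\|p-f_G\|_2\le \sqrt{m}\,(\#V_\psi)\,\delta\,\|f_G\|_2$; since $\sqrt{m}\le m$ for $m\ge 1$, this is stronger than the stated bound \eqref{eq_p_minus_f_epsilon}. (Alternatively, one can simply apply the triangle inequality directly to $p-f_G=-\sum_i (f_G,f_i)f_i$, using $\|f_i\|_2=1$, and obtain the factor $m$ instead of $\sqrt{m}$; this is what the statement is calibrated for.)

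There is really no obstacle here: the lemma is a clean Bessel-type inequality combined with the trivial support bound $\#\,{\rm supp}(f_G)\le \#V_\psi$ and the uniform bound on $\|f_i\|_\infty$. The only mild subtlety is that one must be careful to invoke \eqref{eq_f_t_ineq} (which exploits the small support of $f_G$) rather than Cauchy--Schwarz (which would only give $\|f_G\|_2\|f_i\|_2=\|f_G\|_2$ and lose the crucial factor of $\delta$). This is precisely the reason the hypothesis is stated in terms of $\|f_i\|_\infty$ rather than $\|f_i\|_2$, and it is the content of the lemma: functions supported on the small set $V_\psi$ are nearly orthogonal to any orthonormal system whose $L^\infty$ norm is small.
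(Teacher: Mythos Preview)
Your proof is correct and essentially matches the paper's approach: the paper applies the triangle inequality to $p-f_G=-\sum_i(f_G,f_i)f_i$ (exactly the alternative you mention) and then invokes \eqref{eq_f_t_ineq} to bound each $|(f_G,f_i)|$. Your use of orthonormality (Pythagoras/Bessel) in place of the triangle inequality yields the slightly sharper constant $\sqrt{m}$ rather than $m$, but the key step---exploiting the small support of $f_G$ via \eqref{eq_f_t_ineq} together with $\|f_i\|_\infty\le\delta$---is identical.
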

\begin{proof}
The triangle inequality and $\|f_i\|_2=1$ for all $i$ implies
$$
\| p - f_G \|_2 \le \sum_{i=1}^m |(f_G,f_i)|\, \|f_i\|_2
\le m |(f_G,f_i)| ,
$$
so the result follows from \eqref{eq_f_t_ineq}.
\end{proof}

The methods of \cite{friedman_tillich_generalized} also require the
following type of results, which we gather in two lemmas.

\begin{lemma}\label{le_f_t_pushing_methods}
Let $\psi\subset T$ be two graphs.  Then there exists a 
$f\from V_{\psi}\to\reals$ that maximizes $\cR_{A_{\psi}}$ 
over all functions
$V_{\psi}\to\reals$;
by scaling $f$ one can assume $f\ge 0$ (i.e., $f(v)\ge 0$ for all
$v\in V_{\psi}$).
If $f_T$ denotes the extension by zero of $f$ to $T$, and setting
$\lambda=\cR_{A_\psi}(f)$, we have
$A_T f_T\ge \lambda f_T$ pointwise, i.e., 
\begin{equation}\label{eq_f_t_superharm}
\bigl(A_T f_T\bigr) (v)\ge \lambda f_T(v), 
\ \forall v\in V_T .
\end{equation} 
\end{lemma}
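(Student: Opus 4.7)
The plan is to carry out three essentially independent steps, none of which should be genuinely difficult.

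First, I would establish existence of a maximizer by a standard compactness argument: $\cR_{A_\psi}$ is continuous and scale-invariant on $\reals^{V_\psi}\setminus\{0\}$, so it attains its supremum on the compact unit sphere of $\reals^{V_\psi}$, and that supremum equals $\lambda_1(\psi)$; any maximizer $f$ is then a $\lambda_1(\psi)$-eigenvector of $A_\psi$ by the usual Lagrange-multiplier (or spectral-decomposition) argument.

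Second, to arrange $f\ge 0$, I would use that $A_\psi$ has non-negative integer entries---each entry counts directed edges between two vertices, with a whole-loop contributing $2$ and a half-loop $1$ to the diagonal. Hence for any $g\in\reals^{V_\psi}$, writing $|g|$ for the componentwise absolute value, one has the entrywise inequality $A_\psi(u,v)|g(u)||g(v)|\ge A_\psi(u,v)g(u)g(v)$, so summing yields $(A_\psi|g|,|g|)\ge (A_\psi g,g)$; since $\||g|\|_2=\|g\|_2$, we get $\cR_{A_\psi}(|g|)\ge\cR_{A_\psi}(g)$, and I may replace any maximizer $f$ by $|f|$. (Alternatively, Perron--Frobenius applied to the connected component of $\psi$ on which $\lambda_1(\psi)$ is attained, extended by zero elsewhere, produces such an $f$ directly.)

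Third, the heart of the proof is the pointwise inequality \eqref{eq_f_t_superharm}. For $v\in V_\psi$, since $f_T$ vanishes off $V_\psi$, only vertices in $V_\psi$ contribute to $(A_Tf_T)(v)$:
\[
(A_Tf_T)(v)=\sum_{u\in V_T} A_T(v,u)\,f_T(u)=\sum_{u\in V_\psi} A_T(v,u)\,f(u).
\]
Now $\psi\subset T$ means $\Edir_\psi\subset\Edir_T$, so $A_T(v,u)\ge A_\psi(v,u)$ for all $u,v\in V_\psi$; combined with $f\ge 0$ this gives
\[
(A_Tf_T)(v)\ge \sum_{u\in V_\psi} A_\psi(v,u)\,f(u)=(A_\psi f)(v)=\lambda f(v)=\lambda f_T(v),
\]
using that $f$ is a $\lambda$-eigenvector of $A_\psi$. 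For $v\in V_T\setminus V_\psi$ we have $f_T(v)=0$ while $(A_Tf_T)(v)\ge 0$ (both $A_T$ and $f_T$ being non-negative), so the desired inequality holds trivially.

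The only subtlety---such as it is---will be the observation that $\psi$ need not be an induced subgraph of $T$; but the pointwise bound $A_T(v,u)\ge A_\psi(v,u)$ on $V_\psi\times V_\psi$ depends only on the inclusion of directed-edge sets, so the argument goes through verbatim in the setting of this paper (multi-edges, whole-loops, and half-loops allowed). I do not anticipate any genuine obstacle.
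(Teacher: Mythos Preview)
Your proposal is correct and follows essentially the same approach as the paper: invoke Perron--Frobenius (or equivalently your compactness-plus-$|f|$ argument) to produce a non-negative eigenvector $f$ with $A_\psi f=\lambda f$, then verify \eqref{eq_f_t_superharm} by splitting into $v\in V_\psi$ and $v\notin V_\psi$. In fact your treatment of the $v\in V_\psi$ case is slightly more careful than the paper's, which writes $(A_T f_T)(v)=\lambda f_T(v)$ rather than $\ge$; your observation that $A_T(v,u)\ge A_\psi(v,u)$ is exactly what is needed when $\psi$ is not an induced subgraph of $T$.
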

\begin{proof}
The existence of $f\ge 0$ is implied by the Perron-Frobenius theorem
and the symmetry of $A_\psi$ (or Theorem~2.3 of 
\cite{friedman_geometric_aspects} for a more general context, such
as $G$ infinite and weights on $V_G$ in the Rayleigh quotient) and it satisfies
$$
A_{\psi} f = \lambda f
$$
on $V_\psi$.  It follows that $(A_G f_G)(v) = \lambda f_G(v)$ on $V_\psi$;
for $v\notin V_\psi$ we have $f_G(v)=0$ and
$$
(A_G f_G)(v) \ge 0 
$$
since $f_G\ge 0$ everywhere; this implies
that $(A_G f_G)(v) \ge \lambda f_G(v)$ for $v\notin V_\psi$, and hence
we deduce \eqref{eq_f_t_superharm}.
\end{proof}

\begin{definition}\label{de_push_forward}
If $\pi\from T\to G$ is any covering map
(of possibly infinite graphs), for any finitely supported
function $g\from V_T\to\reals$ we define the {\em push forward of $f$
along $\pi$}, denoted
$\pi_* g$, to be the function $V_G\to\reals$ given by
\begin{equation}\label{eq_pi_lower_star}
(\pi_* g)(v) = \sum_{\pi(u)=v} g(u) .
\end{equation} 
\end{definition}
(We easily see that if 
$\pi^*f \eqdef f \circ \pi$
then $\pi_*$ is the adjoint of $\pi^*$ 
with respect to the inner products
\eqref{eq_inner_product} on $L^2(V_T)$ and $L^2(V_G)$ provided that
$\pi$ is finite-to-one or we restrict to
finitely supported functions; 
however this is not particularly important
here.)

\begin{lemma}\label{le_f_t_pushing_forward}
Let 
$\pi\from T\to G$ be a covering map (of possibly infinite graphs),
and $g\from V_T\to\reals$ a finitely supported function such that
$A_T g \ge \lambda g$ (pointwise, i.e., at each $v\in V_T$)
for some $\lambda\in\reals$.  Then
\begin{equation}\label{eq_f_t_superharm_push}
A_G (\pi_* g ) \ge \lambda\, \pi_* g 
\end{equation} 
(pointwise); furthermore if $g\ge 0$ and is nonzero, then
\begin{equation}\label{eq_f_t_Rayleigh_push}
\cR_{A_G}(\pi_* g) \ge \lambda.
\end{equation} 
\end{lemma}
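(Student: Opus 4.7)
The plan is to establish the key intertwining identity
$$A_G \pi_* = \pi_* A_T$$
on finitely supported functions, and then deduce both claims as essentially immediate consequences.

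First I would verify the intertwining identity by comparing entries. For $v\in V_G$ and a finitely supported $h\from V_T\to\reals$, expanding from the definitions gives
$$(A_G\pi_* h)(v) = \sum_{e_G:\, t_G(e_G)=v} (\pi_* h)\bigl(h_G(e_G)\bigr) = \sum_{e_G:\, t_G(e_G)=v}\ \sum_{u'\in\pi^{-1}(h_G(e_G))} h(u'),$$
while
$$(\pi_* A_T h)(v) = \sum_{u\in\pi^{-1}(v)}\ \sum_{e_T:\, t_T(e_T)=u} h\bigl(h_T(e_T)\bigr).$$
The covering property at each $u\in\pi^{-1}(v)$ says that $\pi_E$ restricts to a bijection between directed edges of $T$ with tail $u$ and directed edges of $G$ with tail $v$. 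Reindexing the right-hand sum over $(u,e_T)$ by the pair $(e_G,u')=(\pi(e_T),h_T(e_T))$ — which ranges bijectively over pairs consisting of an edge $e_G$ out of $v$ together with a lift $u'\in\pi^{-1}(h_G(e_G))$ of its head — matches the two expressions.

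Next, given the hypothesis $A_T g\ge\lambda g$, I would set $h\eqdef A_T g-\lambda g\ge 0$, which is finitely supported since $g$ is. Because $\pi_*$ is defined by the sum \eqref{eq_pi_lower_star} of non-negative quantities, $\pi_*$ preserves pointwise non-negativity. Combined with $\reals$-linearity of $\pi_*$ and the intertwining identity, this yields
$$A_G(\pi_* g)-\lambda(\pi_* g) = \pi_*(A_T g - \lambda g) = \pi_* h \ge 0,$$
which is exactly \eqref{eq_f_t_superharm_push}.

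Finally, if $g\ge 0$ and is nonzero, say $g(u_0)>0$, then $(\pi_* g)(\pi(u_0))\ge g(u_0)>0$, so $\pi_* g$ is a nonzero non-negative function. Inequality \eqref{eq_f_t_superharm_push} together with the implication \eqref{eq_pointwise_implication} (applied to $A=A_G$, $f=\pi_* g$) gives $\cR_{A_G}(\pi_* g)\ge\lambda$, establishing \eqref{eq_f_t_Rayleigh_push}. There is no real obstacle here — the only point that requires any care is the bookkeeping in the intertwining identity, which is where the covering hypothesis on $\pi$ is actually used; once that is in hand, positivity of $\pi_*$ does the rest.
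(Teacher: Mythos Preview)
Your proof is correct and follows the same approach as the paper: establish the intertwining identity $A_G\pi_*=\pi_*A_T$ on finitely supported functions, apply it to the hypothesis $A_Tg\ge\lambda g$ (using positivity of $\pi_*$), and then invoke \eqref{eq_pointwise_implication} for the Rayleigh quotient bound. The paper simply asserts the intertwining identity as ``easy to see,'' whereas you have spelled out the edge-by-edge bijection that makes it work; this is a welcome elaboration, not a different method.
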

\begin{proof}
It is easy to see that $\pi_* A_T = A_G \pi_*$ on finitely supported 
functions, so that applying $\pi_*$ on the left 
to \eqref{eq_f_t_superharm} we get \eqref{eq_f_t_superharm_push}.
Furthermore, if $g\ge 0$ and is nonzero, then $f=\pi_* g\ge 0$ and
$f$ is nonzero, and hence
\eqref{eq_pointwise_implication} implies that 
$\cR_{A_G}(\pi_* g) \ge \lambda$.
\end{proof}

\subsection{Spectral Results on Infinite Graphs}

Let us quote some standard spectral results about 
infinte graphs used in \cite{friedman_alon}.
If $G$ is a {\em locally finite graph}, meaning that $V_G,\Edir_G$ may
be infinite but the degree of each vertex is finite, 
we use $\|A_G\|$ to denote the norm of the
adjacency matrix, $A_G$, of $G$, viewed as operator
on $L^2(V_G)$ where each vertex in $V_G$ has measure $1$.
It is a standard result in operator theory that $\|A_G\|$ is also
the spectral radius of $G$, and that if $G$ is connected then
for every $v\in V_G$ we have
\begin{equation}\label{eq_buck}
\| A_G \| = \lim_{r\to\infty} c(v,2r)^{1/(2r)}
\end{equation} 
where $c(v,k)$ denotes the number of closed walks in $G$ from $v$ (to $v$)
of length $k$
(e.g., \cite{buck}, Proposition~3.2).
It follows from standard spectral theory 
(see, for example, Theorem~3.11 of \cite{friedman_alon}) that 
for any $\epsilon>0$ there is a finitely supported function 
$f\from V_G\to\reals$, i.e., where
$$
{\rm supp}(f) \eqdef \{ v \ | \  f(v)\ne 0 \}
$$
is finite, for which 
\begin{equation}\label{eq_epsilon_finite_supp}
\| A_G f \| \ge \bigl( \| A_G \| - \epsilon \bigr) \|f\|.
\end{equation}

\subsection{Relative Trees}

Now we review some facts from Section~3 of \cite{friedman_alon}, stated
in our context.
First we describe the analogue of ${\rm Tree}_d(\psi)$ of
Section~3.8 of \cite{friedman_alon}; we state it as a lemma.

\begin{lemma}\label{le_universal}
Let $B$ be a graph and $\psi_\Bg$ a connected \'etale $B$-graph.  Then
there exists a graph
$T={\rm Tree}_\Bg(\psi_\Bg)$ (on infinitely many vertices) that is
a {\em universal cover of $B$-graphs extending $\psi_\Bg$} in the following
sense:
if $G_\Bg$ is a covering $B$-graph (possibly on infinitely many
vertices) that contains $\psi_\Bg$
as a subgraph, then there exists a unique morphism
$T_\Bg\to G_\Bg$ that fixes $\psi_\Bg$.
The graph $T={\rm Tree}_\Bg(\psi_\Bg)$ is unique up to unique isomorphism.
\end{lemma}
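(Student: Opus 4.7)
The plan is to construct $T = {\rm Tree}_\Bg(\psi_\Bg)$ explicitly by starting with $\psi_\Bg$ and iteratively attaching ``pendant trees'' at the vertices where the structure map fails to be surjective on local edges, then verify the universal property by induction on this construction.

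\textbf{Construction.} Let $\pi_0\from\psi\to B$ denote the structure map; since $\psi_\Bg$ is \'etale, for each $v\in V_\psi$ the map $\pi_0$ is injective on the set of directed edges of $\psi$ with tail $v$ into those of $B$ with tail $\pi_0(v)$. Call $e\in\Edir_B$ with $t_B e = \pi_0(v)$ \emph{missing at $v$} if $e$ is not in the image of this map. I build $T$ by transfinite (or ordinary countable) iteration: in stage $0$ start with $\psi$; in each subsequent stage, for each vertex $v$ at which some $e\in\Edir_B$ is missing, adjoin a new vertex $v_e$ lying over $h_B e$ together with a new directed edge $\tilde e$ over $e$ from $v$ to $v_e$, declaring $\iota\tilde e$ to lie over $\iota_B e$ and to have tail $v_e$ (if $e$ is a half-loop with $\iota_B e=e$, I instead adjoin a half-loop at a new vertex in the evident way). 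At each newly created vertex only one directed edge is present, so the procedure propagates; taking the union over all stages gives a graph $T$ with a natural map $\pi\from T\to B$. By construction $\pi$ is \'etale at every vertex and every $e\in\Edir_B$ with $t_B e=\pi(v)$ has exactly one preimage at $v$, so $\pi$ is covering. Moreover $\psi_\Bg\subset T_\Bg$ tautologically.

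\textbf{Universal property.} Suppose $G_\Bg$ is any covering $B$-graph containing $\psi_\Bg$ as a subgraph. Define $\phi\from T_\Bg\to G_\Bg$ inductively: on $\psi\subset T$, let $\phi$ be the given inclusion $\psi\hookrightarrow G$. For a vertex $v_e$ added at some stage, adjacent via $\tilde e$ to an already-mapped vertex $v$, the covering property of $G\to B$ at $\phi(v)$ provides a \emph{unique} edge $e_G\in\Edir_G$ with $t_G e_G=\phi(v)$ and projecting to $e$; set $\phi(\tilde e)=e_G$ and $\phi(v_e)=h_G e_G$ (and $\phi(\iota\tilde e)=\iota_G e_G$). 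This simultaneously proves existence and forces uniqueness of $\phi$ as a $B$-graph morphism fixing $\psi$, since any such morphism must make the same choice by the covering property. Uniqueness of $T$ up to unique isomorphism then follows formally: given another $T'$ with the same property, the universal property yields unique $B$-morphisms $T\to T'$ and $T'\to T$ fixing $\psi$, whose compositions must agree with the unique endomorphisms of $T$ and $T'$ fixing $\psi$, namely the identities.

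\textbf{Main obstacle.} The only delicate point is bookkeeping around half-loops and ensuring that the iterative construction is well-defined as a graph (with its involution $\iota_T$) rather than merely as a digraph; once one verifies that each new adjoined edge pair $(\tilde e,\iota\tilde e)$ correctly projects under $\pi$ to $(e,\iota_B e)$, the covering property of $\pi$ and the verification of the universal property become straightforward inductive checks. Connectedness of $\psi$ guarantees that $T$ is connected, which is needed to ensure that the morphism $\phi$ defined by induction on distance from $\psi$ actually reaches every vertex of $T$.
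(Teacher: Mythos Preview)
Your construction and verification are essentially identical to the paper's: the paper builds ${\rm Tree}_\Bg(\psi_\Bg)$ by the same iterative procedure $\psi_\Bg\subset\psi_\Bg^1\subset\psi_\Bg^2\subset\cdots$, adding at each stage a new leaf for every missing directed edge, and then remarks that the universal property follows easily; your inductive definition of $\phi$ makes that last step explicit. One small wrinkle: your parenthetical on half-loops (``adjoin a half-loop at a new vertex'') is garbled---when $e\in\Edir_B$ is a half-loop missing at $v$, the correct move is still to add a new leaf $v_e$ (also lying over $t_B e$) joined to $v$ by an ordinary edge pair $(\tilde e,\iota_T\tilde e)$ both projecting to $e$, not to add a half-loop; but this is a minor bookkeeping fix and the paper's informal description does not spell it out either.
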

This lemma is illustrated by 
an example in
Subsection~3.8 of \cite{friedman_alon} 
in Figure~\ref{fi_dtree} (which we reproduce here): 
in this example we work with graphs rather
than $B$-graphs, and ${\rm Tree}_d(\psi)$ is the case of
${\rm Tree}_\Bg(\psi_\Bg)$ where $B$ consists
of a single vertex of degree $d$ (in any combination of whole-loops and
half-loops) and where we otherwise forget the $B$-structure.
\begin{figure}[h]
    \centering
         \includegraphics[width=3.5in]{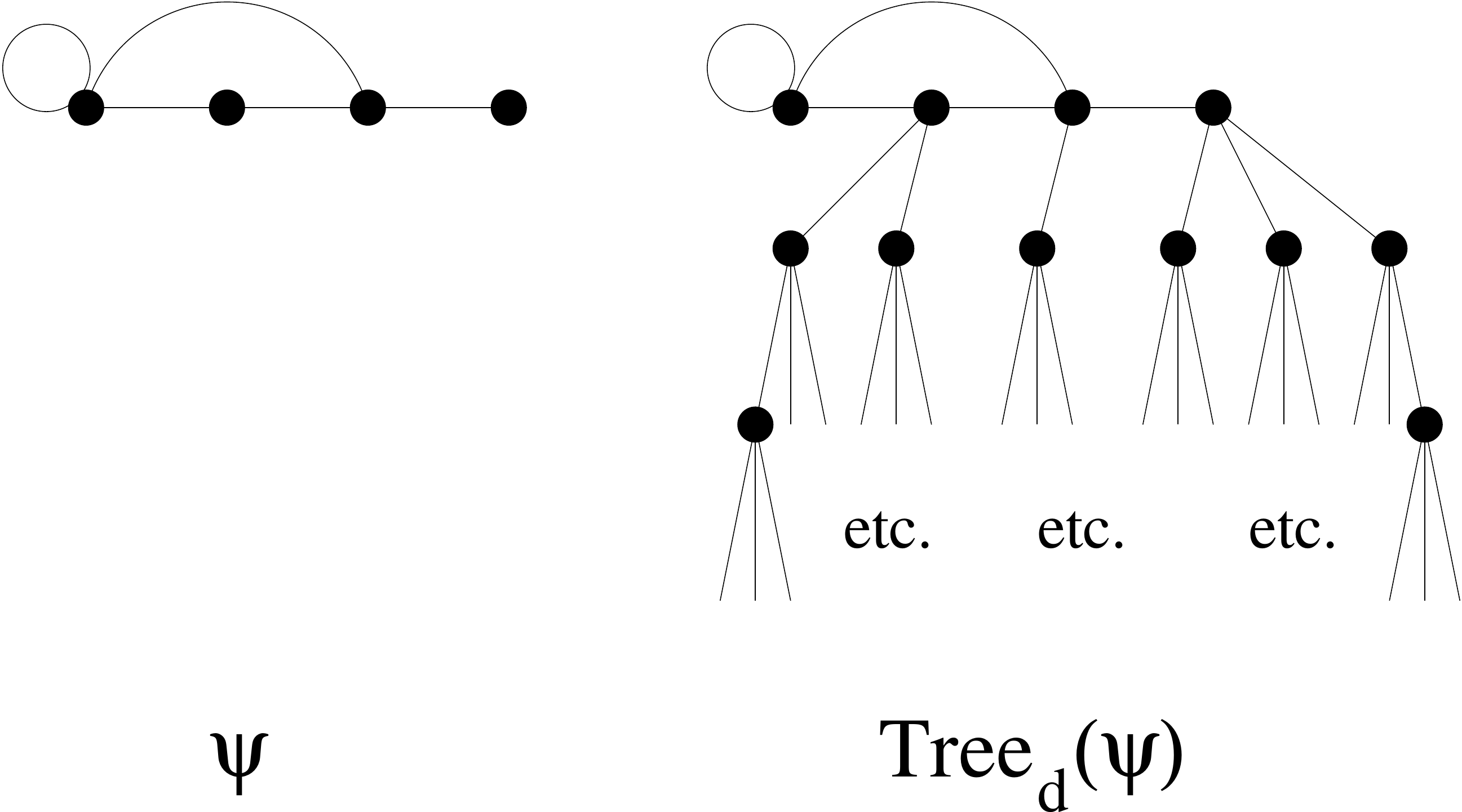}
    \caption{A graph, $\psi$, and ${\rm Tree}_d(\psi)$
        with $d=4$.}
    \label{fi_dtree}
\end{figure}
Let us describe the
proof of Lemma~\ref{le_universal} informally: the proof
is a standard adaptation of the notion of a universal cover:
we build ${\rm Tree}_\Bg(\psi_\Bg)$ by taking any vertex $v\in V_\psi$
where $\psi\to B$ is not a local isomorphism, and to each such $v$ we
add new edges, each with its own new vertex; this gives us a graph
$\psi_\Bg^1$ containing $\psi_\Bg$ such that $\psi_\Bg^1\to B$ is a local
isomorphism
at all vertices of $\psi_\Bg$; then $\psi_\Bg^1\to B$ can only fail to be a local
isomorphism at the newly created leaves of $\psi_\Bg^1$, and we similarly
create $\psi_\Bg^2$ by adding edges and leaves to $\psi_\Bg^1$;
we continue to build $\psi_\Bg^3,\psi_\Bg^4,\ldots$, with
$$
\psi_\Bg\subset \psi_\Bg^1 \subset \psi_\Bg^2\subset \cdots
$$
and we set
$T_\Bg={\rm Tree}_\Bg(\psi_\Bg)$ to be the limit (i.e., the union).  
Then we easily show that
$T_\Bg$ has the universal property claimed.

\begin{definition}
Given a graph, $B$ and a connected \'etale $B$-graph, $\psi_\Bg$, the 
{\em relative $\psi_\Bg$ tree} refers to any $B$-graph
isomorphic to ${\rm Tree}_\Bg(\psi_\Bg)$.
\end{definition}

\begin{lemma}\label{le_spectral_rad}
Let $B$ be a $d$-regular graph for some $d\ge 3$, and let 
$\psi_\Bg$ be a $B$-graph with $\mu_1(\psi)>(d-1)^{1/2}$.
Setting $T_\Bg={\rm Tree}_\Bg(\psi_\Bg)$, we have
$$
\|A_T\| = \mu_1(\psi) + (d-1)/\mu_1(\psi)
$$
where $A_T$ is the adjacency operator/matrix on the infinite graph, $T$
and $\|A_T\|$ denotes the norm of $A_T$ as an operator on
$L^2(V_T)$.
\end{lemma}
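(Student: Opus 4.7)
The plan is to establish matching bounds $\|A_T\| \geq \lambda$ and $\|A_T\| \leq \lambda$, setting $\mu := \mu_1(\psi)$ and $\lambda := \mu + (d-1)/\mu$.

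A first observation, used throughout, is that $T$ is itself a $d$-regular (infinite) graph: the iterative construction $\psi_\Bg \subset \psi_\Bg^1 \subset \psi_\Bg^2 \subset \cdots$ in the proof of Lemma~\ref{le_universal} adds at each step precisely the edges needed to make the current graph-to-$B$ morphism a local isomorphism at every vertex already present, so every vertex of $T$ ends up with degree equal to the degree of its image in $B$, namely $d$. Consequently each connected component of $T \setminus V_\psi$ is a rooted $(d-1)$-ary infinite tree joined to $V_\psi$ by a single ``entry edge'', one per unit of degree deficiency at each vertex of $\psi$.

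For the lower bound $\|A_T\| \geq \lambda$, I would adapt the test-function construction of Subsection~3.8 of \cite{friedman_alon} (the ``Curious Theorem''). Let $\phi : \Edir_\psi \to \reals_{>0}$ be the positive Perron--Frobenius eigenvector with $H_\psi \phi = \mu \phi$. Extend $\phi$ to $\tilde\phi : \Edir_T \to \reals$ by propagating into the tree components so that $H_T \tilde\phi = \mu \tilde\phi$ holds at every edge: on each $(d-1)$-ary tree component, the downward (away-from-$\psi$) edge at depth $k$ carries the value $(\mu/(d-1))^{k-1}$ times an entry-edge parameter $\alpha$, and the upward-edge values and the entry-edge parameters $\alpha$ are determined by solving the finite linear system enforcing $H_T \tilde\phi = \mu \tilde\phi$ both at all tree vertices and at every edge in $\Edir_\psi$. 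Since $T$ is $d$-regular, the standard Ihara--Bass calculation on $T$ then turns $\tilde\phi$ into a vertex function $f(v) := \sum_{e : h(e) = v} \tilde\phi(e)$ satisfying $A_T f = \lambda f$ everywhere. Truncate $f$ to depth $R$ in each tree to obtain a finitely supported $f_R$. Because $\mu^2 > d - 1$, the $L^2$-mass $\|f_R\|_2^2$ grows like $(\mu^2/(d-1))^R$, dominating the bounded pointwise error in $A_T f_R - \lambda f_R$ at the depth-$R$ boundary; Lemma~\ref{le_Rayleigh_perturb} then gives $\cR_{A_T}(f_R) \to \lambda$ as $R \to \infty$, and \eqref{eq_epsilon_finite_supp} yields $\|A_T\| \geq \lambda$.

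For the upper bound $\|A_T\| \leq \lambda$, I would apply Buck's formula \eqref{eq_buck}: $\|A_T\| = \lim_{r \to \infty} c(v_0, 2r)^{1/(2r)}$ for any fixed $v_0 \in V_\psi$. Each closed walk at $v_0$ decomposes uniquely into an alternating sequence of maximal in-$\psi$ segments and ``tree excursions'' entering and returning from a $(d-1)$-ary tree component. Tree excursions are counted by the closed-walk generating function on an infinite rooted $(d-1)$-ary tree, whose radius of convergence is $(2\sqrt{d-1})^{-1}$. The in-$\psi$ portion is controlled by the non-backtracking spectral structure of $\psi$ combined with the $d$-regular Ihara formula applied to $T$ locally near $\psi$, giving exponential growth at rate $\lambda = \mu + (d-1)/\mu$. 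Since $\mu > (d-1)^{1/2}$ implies $\lambda > 2\sqrt{d-1}$, the in-$\psi$ contribution dominates, and a transfer-matrix/convolution argument yields $\limsup_r c(v_0, 2r)^{1/(2r)} \leq \lambda$.

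The hard part will be the lower bound, specifically verifying the solvability of the finite linear system for the entry-edge parameters $\alpha$ (one equation per vertex of $V_\psi$, expressing that the contribution of the tree parts cancels in the $H_T$-equation at each edge of $\Edir_\psi$) and producing a quantitative bound on the depth-$R$ boundary error that is negligible against $\|f_R\|_2^2$. The driving identity $(d-1)/\mu + \mu = \lambda$ powers the Ihara--Bass step, and the strict inequality $\mu > (d-1)^{1/2}$ ensures the Rayleigh quotient converges from below to $\lambda$. The upper bound is a standard walk-decomposition argument in the spirit of Alon--Boppana/Nilli, requiring care but no new ideas.
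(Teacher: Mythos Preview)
Your two-sided approach differs substantially from the paper's, which computes $\|A_T\|$ in a single stroke via Shannon's algorithm (Lemma~\ref{le_shannon_infinite}): closed walks in $T$ based at $V_\psi$ are encoded by the matrix power series $Z(z)=zA_\psi+\frac{S_d(z)}{d-1}(dI-D_\psi)$, and the key factorization \eqref{eq_main_curious_result},
\[
I-Z(z)=\bigl(1-S_d(z)\bigr)\bigl(I-yA_\psi+y^2(D_\psi-I)\bigr),\qquad y=\frac{z}{1-S_d(z)},
\]
reduces the valence computation to finding the smallest $z>0$ with $y(z)=1/\mu_1(\psi)$; a short algebraic manipulation then gives $1/z_0=\mu_1(\psi)+(d-1)/\mu_1(\psi)$. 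No separate upper and lower bounds are needed. Your upper-bound sketch (walk decomposition into in-$\psi$ segments and tree excursions, then a ``transfer-matrix/convolution argument'') is heading toward exactly this, but without the factorization identity it remains vague.

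Your lower bound has a genuine gap. You fix $\tilde\phi|_{\Edir_\psi}=\phi$ and then demand $H_T\tilde\phi=\mu\tilde\phi$ on every edge of $\Edir_\psi$. For $e\in\Edir_\psi$ with $h(e)=w$ one has
\[
(H_T\tilde\phi)(e)=(H_\psi\phi)(e)+\sum_{\substack{e'\ \text{outgoing entry}\\ \text{edge at }w}}\tilde\phi(e')
=\mu\phi(e)+\sum \alpha_{e'},
\]
so the constraint forces the outgoing entry-edge values at each $w\in V_\psi$ to \emph{sum to zero}. In particular you cannot take the $\alpha$'s positive, and with $\alpha=0$ the downward-edge values vanish identically, so the claimed growth $\|f_R\|_2^2\sim(\mu^2/(d-1))^R$ does not occur. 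The construction is salvageable in a way you did not anticipate: with all $\alpha=0$ the upward-edge values $u_k=u_1\mu^{-(k-1)}$ alone give a nontrivial $\tilde\phi$ satisfying $H_T\tilde\phi=\mu\tilde\phi$ everywhere, and the vertex function $f=d_h\tilde\phi$ lies in $L^2(V_T)$ (precisely because $\mu^2>d-1$) and satisfies $A_Tf=\lambda f$ exactly, with no truncation needed. So your plan can be repaired, but the repair changes its character, and the paper's generating-function route is both shorter and yields the exact equality directly.
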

This lemma is a more precise version of Theorem~3.13 of \cite{friedman_alon},
and will be proven in a similar fashion;
the special case of this lemma where $\psi$ is a bouquet of whole
loops appears in \cite{puder}, Table 2 (in the column ``The Growth Rate
for the Bouquet $B_{d/2}$'').
For sake of completeness we review facts about Shannon's algorithm
proven in Section~3 of \cite{friedman_alon} so that we can precisely
quote the results in the proof of Theorem~3.13 there.

\subsection{Shannon's Algorithm}

To prove Lemma~\ref{le_spectral_rad} we need to recall some facts 
from Section~3.4 of \cite{friedman_alon}
regarding {\em Shannon's algorithm} to compute the Perron-Frobenius
$\lambda_1(G)$
where $G=\VLG(T,\mec k)$ for a directed graph, $T$ and
$\mec k\from\Edir_T\to\naturals$.

First, if $Z=Z(z)$ is a $n\times n$ matrix whose entries are formal
power series
in a single indeterminate $z$ with non-negative coefficients, then we 
easy verify that the following are equal under the assumption that
the radius of convergence of each entry of $Z(z)$ is positive:
\begin{enumerate}
\item the smallest positive real solution, $z_0>0$,
of the equation $\det(I-Z(z))=0$,
\item the smallest radius of convergence, $z_0>0$, among the radii of
convergence of the $n^2$ entries
of the matrix of power series
$$
I + Z(z) + Z^2(z) + \cdots 
$$
\end{enumerate}
(allowing for $z_0=\infty$, i.e., $\det(I-Z(z))=0$ has no positive
real solution and, equivalently, the above matrix power series
converging for all $z\in\complex$).

\begin{definition}
Let $Z=Z(z)$ be an square matrix whose entries are formal power series
in a single indeterminate $z$ with non-negative coefficients, and assume
that each entry has a nonzero radius of convergence.
We call $1/z_0$ the {\em valence of $Z(z)$}
(if $z_0=\infty$ we say the valence is zero).
\end{definition}

If $T$ is a directed graph, (possibly with a countable number of
vertices and directed edges) and $\mec k\from \Edir_T\to\naturals$,
we define a square matrix 
$Z_{T,\mec k}=Z_{T,\mec k}(z)$ indexed on
$V_T$ and whose entries are formal power series in an indeterminate $z$ whose
$v_1,v_2$ entry (with $v_1,v_2\in V_T$) given by
$$
Z_{v_1,v_2}=\bigl(Z_{T,\mec k}(z)\bigr)_{v_1,v_2} = 
\sum_{h(e)=v_1,t(e)=v_2} z^{k(e)}.
$$
Shannon noted that if $T$ is finite, then
$\lambda_1(\VLG(T,\mec k))$ is the valence of $Z_{T,\mec k}(z)$.

Shannon's algorithm is remarkably robust; for example, it generalizes well for
$\mec k\from\Edir_T\to\reals_{>0}$, and when $\Edir_T$ is infinite.
The
situation where $\Edir_T$ arises naturally, such as when one wishes to
``eliminates vertices'' in the same way and for the same reason that one
eliminates states in a GNFA to produce a regular expression from a finite
automaton (see \cite{sipser}, Chapter~1); it also arises when approximating
an infinite graph by an equivalent graph with a finite subset of vertices,
as we now explain.

\subsection{An Infinite Version of Shannon's Algorithm}

Let us formally state the infinite version of Shannon's algorithm
needed in \cite{friedman_alon}.

\begin{definition}\label{de_digraph_suppression}
Let $G$ be a possibly infinite directed graph, and $V'\subset V_G$.
By the {\em directed suppression of $V'$ in $G$}, denoted
$G/V'$, we mean the graph $H$ given by
\begin{enumerate}
\item $V_H=V_G\setminus V'$;
\item $\Edir_H$ is the set of (finite length) walks in $G$
$$
w=(v_0,e_1,\ldots,e_k,v_k)
$$
such that $v_0,v_k\in V_H$, and $v_1,\ldots,v_{k-1}\in V'$;
we set $h_H w=v_k$, $t_H w=v_0$.
\end{enumerate}
We define the {\em edge-lengths of $G/V'$} to be the
function $\Edir_{G/V'}\to\naturals$
taking $w\in\Edir_{G/V'}$ as above to its length, $k$.
\end{definition}
This notion of suppression is akin to the notion of bead suppression
used to define homotopy type; however, even if $G$ is finite, $G/V'$
will generally have infinitely many directed edges
(unless no element of $\Edir_H$ traverses
a vertex of $V'$ twice).

If $G$ above is the underlying directed graph of a graph with involution
$\iota_G$, then $G/V'$ becomes a graph with the evident involution
$\iota_H$ taking $w=(v_0,e_1,\ldots,e_k,v_k)$ to its reverse walk
$$
w^R \eqdef (v_k,\iota_G e_k,\ldots,\iota_G e_1,v_0) .
$$

\begin{definition}
For a digraph $G$, we define {\em edge-lengths on $G$} to be
any function $\bec\ell\from\Edir_G\to\naturals$.  
If $w=(v_0,e_1,\ldots,e_k,v_k)$ is a walk in $G$, we define
the {\em $\bec\ell$-length of $w$} to be
$$
\bec\ell(w) \eqdef \ell(e_1)+\cdots+\ell(e_k)
$$
(the notation $\bec\ell(w)$ is different from the values, $\ell(e)$,
of $\bec\ell$ on $e$ and is unlikely to cause confusion).
If, in Definition~\ref{de_digraph_suppression},
$\bec\ell$ is a set of edge-lengths on $G$, we define the
{\em restriction of $\bec\ell$ to $G/V'$}
to be the {\em edge-lengths} on $G/V'$,
i.e., the function $\bec\ell'\from\Edir_{G/V'}\to\naturals$,
taking $w=(v_0,e_1,\ldots,e_k,v_k)$ to
$$
\bec\ell'(w) \eqdef \bec\ell(w)=\ell(e_1) + \cdots + \ell(e_k).
$$
\end{definition}

It is easy to see that if $G$ is any connected digraph that is
the underlying digraph of some graph, and $G$ is endowed with edge-lengths
$\bec\ell\from\Edir_G\to\naturals$, 
and $V'\subset V_G$ is any proper subset,
then for any $v\in V_{G/V'}$ we have that
the $v,v$ entry of
$$
I+ Z_{G/V',\bec\ell'}(z) + Z^2_{G/V',\bec\ell'}(z) + \cdots
$$
is just the sum 
$$
\sum_{k\ge 0} c_{G,\bec\ell}(v,k) z^k
$$
where $c_{G,\bec\ell}(v,k)$, 
akin to \eqref{eq_buck},
denotes the number of walks, $w$ in $G$ from $v$ to
itself of length $k=\bec\ell(w)$.

If the degree of each vertex of $G$ is bounded above, and the values
of $\bec\ell$ are bounded above,
then it is easy to see that
this power series has the same radius of convergence as
$$
\sum_{k \ {\rm even}} c_{G,\bec\ell}(v,k) z^k ,
$$
and that this radius is independent of $v$ if $G$ is connected. 
Hence if $G$ is a digraph (with unit edge-lengths $\bec\ell=\mec 1$) 
that is the
underlying digraph of a graph,
then \eqref{eq_buck} implies that this radius of convergence equals
$1/\|A_G\|$.
Let us formally record this fact, since it is fundamental to our methods;
it is a sort of ``infinite version'' of Shannon's algorithm.

\begin{lemma}\label{le_shannon_infinite}
Let $G$ be a connected graph, possibly infinite, with vertex degree 
bounded from above.
Then if $V'\subset G$ is a proper subset, and $\mec k$ are the edge
lengths of $G/V'$ (i.e., induced from unit length on $\Edir_G$)
then the valence 
of $\VLG(G/V',\mec k)$ equals $\|A_G\|$.
\end{lemma}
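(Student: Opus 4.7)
The plan is to interpret the valence of $\VLG(G/V',\mec k)$ as the valence of the associated matrix $Z_{G/V',\mec k}(z)$ (generalizing Shannon's identification to the possibly infinite setting), and then to compare the power-series entries of $I + Z + Z^2 + \cdots$ directly to the closed-walk generating function of $G$ itself.

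First I would set up a length-preserving walk correspondence: by Definition~\ref{de_digraph_suppression}, each directed edge of $G/V'$ is a walk in $G$ whose two endpoints lie in $V_{G/V'}=V_G\setminus V'$, whose interior vertices lie in $V'$, and whose $\mec k$-length equals its ordinary length in $G$. Concatenation therefore yields, for any $v_1,v_2\in V_{G/V'}$, a length-preserving bijection between walks in $G/V'$ from $v_1$ to $v_2$ measured by $\mec k$-length and walks in $G$ from $v_1$ to $v_2$ measured by ordinary length; the inverse is obtained by cutting each $G$-walk at its successive visits to $V_G\setminus V'$.

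Next, fix $v\in V_{G/V'}$. By the observation recalled immediately before the lemma, the $(v,v)$-entry of $I+Z_{G/V',\mec k}(z)+Z^2_{G/V',\mec k}(z)+\cdots$ equals $\sum_{k\ge 0}c_{G/V',\mec k}(v,k)\,z^k$, which by the bijection above equals the closed-walk generating function $\sum_{k\ge 0}c_G(v,k)\,z^k$ of $G$ at $v$. Since vertex degrees in $G$ are bounded by some $d$, each coefficient is at most $d^k$, so the series has positive radius of convergence; by \eqref{eq_buck} together with the Cauchy--Schwarz bound $c_G(v,2k+1)^2\le c_G(v,2k)\,c_G(v,2k+2)$ (which shows that odd indices share the same growth rate as even ones), this radius equals exactly $1/\|A_G\|$. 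Applying the same walk-correspondence off the diagonal and using the connectedness of $G$ (to compare walks from $v_1$ to $v_2$ with closed walks at $v$ via concatenation with a fixed $G$-path joining these vertices), one sees that every entry of $\sum_{k\ge 0}Z^k$ has the same radius of convergence $1/\|A_G\|$. Therefore the valence, defined as the reciprocal of the smallest such radius, equals $\|A_G\|$.

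The main technical point will be that when $V_{G/V'}$ is infinite the matrix $Z_{G/V',\mec k}(z)$ has infinitely many rows and columns, so the phrase ``smallest radius of convergence among the $n^2$ entries'' in the definition of valence must be interpreted as an infimum over all entries; the connectivity comparison in the preceding paragraph shows this infimum is attained (indeed every entry has the same radius), which is what makes the argument go through essentially unchanged from the finite case treated by Shannon. The only other subtlety worth checking is positivity of each entry's radius of convergence, which follows immediately from the degree bound $d$, since the $(v_1,v_2)$-entry of $Z_{G/V',\mec k}(z)$ has coefficients bounded by $d^k$.
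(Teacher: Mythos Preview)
Your proposal is correct and follows essentially the same approach as the paper: the paper's ``proof'' is the discussion immediately preceding the lemma, which identifies the diagonal entry of $\sum_k Z_{G/V',\mec k}^k$ with the closed-walk generating function $\sum_k c_G(v,k)z^k$, notes (as you do, via Cauchy--Schwarz) that the odd-indexed terms do not affect the radius of convergence, and then invokes \eqref{eq_buck} and connectedness to conclude that this radius is $1/\|A_G\|$ independently of $v$. Your treatment is somewhat more detailed---you spell out the walk bijection, handle the off-diagonal entries explicitly, and flag the infinite-index-set issue---but the underlying argument is the same.
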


\subsection{The Curious Theorem of \cite{friedman_alon}}

In this subsection we recall some results from the proof of
Theorem~3.13 (the ``Curious Theorem'') of \cite{friedman_alon}.

If $T_d$ is the infinite undirected 
rooted tree each of whose interior node has
$d-1$ children,
and for $k\in\naturals$
we use $a_k$ to denote the number of
walks of length $k$ from the root to itself (which is zero when $k$ is 
odd), then we set
$$
S_d(z) = \sum_{k=2}^\infty a_k z^k
$$
which is a power series with non-negative integer coefficients which we
easily see is given by (see \cite{friedman_alon}, equation (13) after
the statement of Theorem~3.13)
\begin{equation}\label{eq_S_d}
S_d(z)= \frac{1 - \sqrt{1-4(d-1)z^2} }{2},
\end{equation} 
in the sense that we understand
$$
\sqrt{1-4(d-1)z^2} = 1 - (1/2) 4(d-1)z^2 
+ \binom{1/2}{2} \bigl( 4(d-1)z^2 \bigr)^2
- \binom{1/2}{3} \bigl( 4(d-1)z^2 \bigr)^2 + \cdots
$$
so that $S_d(z)=(d-1)z^2+\cdots$ is a power series with
non-negative coefficients.
Of course, $S_d(z)/(d-1)$ represents the same series for a rooted
tree where the root has one child and each other node has $d-1$ 
children.

If $\psi_\Bg$ is a $B$-graph of a $d$-regular graph, then if $D_\psi$ is 
the diagonal degree counting of $\psi$, then
\begin{equation}\label{eq_Z_T}
Z(z) = Z_\psi(z) + \frac{S_d(z)}{d-1} (dI-D_\psi) , \quad
Z_\psi(z)=zA_\psi 
\end{equation} 
is the power series representing Shannon's algorithm matrix
$Z_\psi(z)=zA_\psi$ plus the addition at each $v\in V_\psi$ a
series for the number of walks from $v$ to $v$ along an additional
$d-\deg(v)$ edges each of which is grown into a $d$-regular tree.
Hence the power series in $z$ representing
all of walks in $T={\rm Tree}_\Bg(\psi_\Bg)$
from any vertex of $V_\psi$ to another is given by the corresponding
entry of
$$
I+Z(z)+\bigl( Z(z) \bigr)^2+ \cdots
$$
It follows from Theorem~3.10 of \cite{friedman_alon} (which refers to
\cite{buck} for a proof) that the valence of $Z(z)$ is the spectral radius of
$A_T$.
Then an easy computation shows that \eqref{eq_Z_T} implies that
\begin{equation}\label{eq_main_curious_result}
I - Z(z) = \bigl( 1 - S_d(z) \bigr) 
\bigl( I - y A_\psi + y^2(D_\psi-I) \bigr),
\quad\mbox{where}
\ y=y(z) = \frac{z}{1-S_d(z)}.
\end{equation} 

\subsection{Proof of Lemma~\ref{le_spectral_rad}}

\begin{proof}
By the Ihara Determinantal Formula, Theorem~\ref{th_Ihara}
(or \cite{godsil}, Exercise~13,
page~72), we have that $\mu_1(\psi)=1/y_0$ where $y_0>0$ is the smallest
positive root of 
$$
\det \bigl(I - y_0 A_\psi + y_0^2 (D_\psi-I) \bigr).
$$
Since $\mu_1(\psi)>(d-1)^{1/2}$, we have $y_0<1/(d-1))^{1/2}$.  According to
\eqref{eq_main_curious_result}, $\lambda_1(T_\Bg)=1/z_0$ where $z_0>0$
is the radius of convergence at $z=0$ of the power series
\begin{equation}\label{eq_rad_conv_we_need}
\bigl( I - Z(z) \bigr)^{-1}
=
\Bigl(
\bigl( 1 - S_d(z) \bigr) 
\bigl( I - y(z) A_\psi + y^2(z) (D_\psi-I) \bigr)  \Bigr)^{-1}
\end{equation} 
where 
$$
y=y(z) = \frac{z}{1-S_d(z)}.
$$
The radius of convergence $S_d(z)$ is easily seen to be
$z=1/(2\sqrt{d-1})$.  The radius of convergence of
$$
\bigl( I - y(z) A_\psi + y^2(z) (D_\psi-I) \bigr)^{-1}
$$
is $1/z_0$, where $z_0$ the smallest value of $z$ where 
$y(z)=y_0=1/\mu_1(\psi)$, i.e., with
$$
\frac{z}{1-S_d(z)} =  1/\mu_1 = 1/\mu_1(\psi).
$$
Solving for $z$ we get
$$
z\mu_1 = 1 - S_d(z) = 
\frac{1 + \sqrt{1-4(d-1)z^2}}{2}
$$
where $\mu_1=\mu_1(\psi)$, so
$$
2z\mu_1-1 = \sqrt{1-4(d-1)z^2}
$$
so 
$$
4z^2\mu_1^1 - 4z\mu_1 + 1 = 1 - 4(d-1)z^2
$$
so
$$
z(d-1+\mu_1^2) = \mu_1 .
$$
Hence
$$
1/z_0 = \frac{d-1}{\mu_1} + \mu_1
$$
which is no smaller than $2\sqrt{d-1}$.  Hence
the radius of convergence of \eqref{eq_rad_conv_we_need} is
this $z_0$, and hence 
$$
\lambda_1( T )  = \frac{d-1}{\mu_1}+ \mu_1
$$
where $\mu_1=\mu_1(\psi)$.
\end{proof}

\subsection{Proof of Lemma~\ref{le_fundamental_subgraph}}

\begin{proof}[Proof of Lemma~\ref{le_fundamental_subgraph}]
According to Lemma~\ref{le_spectral_rad},
$\lambda$ in \eqref{eq_lambda_fund} is just $\| A_T\|$.
Fix an $\epsilon>0$.
According to \eqref{eq_epsilon_finite_supp}
with $T_\Bg={\rm Tree}_\Bg(\psi_\Bg)$,
there is a finitely supported $g\ge 0$ in $L^2(V_T)$ such that
$$
\| A_T g\| \ge \bigl( \| A_T \| - \epsilon \bigr) \| g \|
=  \bigl( \lambda  - \epsilon \bigr) \| g \| .
$$
Let $\psi$ be the subgraph of $T$ induced on the vertices either
in support of $g$ or connected to the support by an edge.
Then $\psi$ is finite and contains both the support of $g$ and $A_T g$;
it follows that
$$
\| A_\psi g\|_{L^2(V_\psi)} \ge
\bigl( \lambda  - \epsilon \bigr) \| g \|_{L^2(V_\psi)} .
$$
It follows that if $f\ge 0$ is a the Rayleigh quotient maximizer on $\psi$,
then $\cR_{A_\psi}(f)\ge\cR_{A_\psi}(g)$, and
hence Lemma~\ref{le_f_t_pushing_methods} implies that
$$
A_T f_T  \ge (\lambda-\epsilon) f_T 
$$
(i.e., pointwise).

Now assume that 
$G_\Bg$ is a $B$-graph that contains a $B$-subgraph $\psi'_\Bg$
isomorphic to $\psi_\Bg$; there is a covering morphism
$\mu\from T\to G$ taking $\psi_\Bg$ to $\psi'_\Bg$.
Hence Lemma~\ref{le_f_t_pushing_methods} implies that
$$
A_G (\mu_* f_T) \ge (\lambda-\epsilon) \mu_* f_T,
$$
and hence 
$\mu_* f_T$ is a finitely supported function on $G$ and
(by \eqref{eq_pointwise_implication})
\begin{equation}\label{eq_mu_f_T}
\cR_{A_G}( \mu_* f_T ) \ge \lambda-\epsilon.
\end{equation} 

Now for $v\in V_B$ let $\II_v\from V_B\to\{0,1\}$ be the Dirac delta
function (i.e., indicator function) 
of $v$.
Then
for any covering map $\pi\from G\to B$ of degree $n$, for $v$ varying
over $V_B$,
$$
f_v \eqdef \II_v\circ\pi/\sqrt{n}
$$
are orthonormal functions on $V_G$, and $\| f\|_\infty = 1/\sqrt{n}$.
Of course, the projection of any function onto the subspace orthogonal 
to all the $f_v$ is a new function on $V_G$ with respect to $\pi\from G\to B$;
let $p$ be this projection applied to $\mu_* f_T$.
In view of Lemma~\ref{le_friedman_tillich} we have 
that if $m (\# V_\psi)/\sqrt{n}\le \epsilon$ then
$$
\| p - \mu_* f_T \|_2 \le  \epsilon \| \mu_* f_T \|_2,
$$
whereupon Lemma~\ref{le_Rayleigh_perturb} implies that
$$
\hbox{\vbox{
\halign{\hfil $#$ & \hfil $#$ \hfil & $#$ \hfil \cr
\cR_{A_G}(p) \ge & \cR_{A_G}(\mu_* f_T) & - \,\epsilon \,\|A_G\|_2  \cr
 \ge &  \lambda-\epsilon & - \,\epsilon \,d \cr
 =   & \omit\span \lambda - \epsilon(1+d) \cr
}}}
$$
Hence some new function has $A_G$ Rayleigh quotient at least
$\lambda-\epsilon(1+d)$, and hence the largest eigenvalue of $A_G$
restricted to the new functions is at least
$\lambda-\epsilon(1+d)$.
Replacing $\epsilon$ by $\epsilon/(1+d)$ we conclude that
$A_G$ has 
a new eigenvalue of at least $\lambda-\epsilon$ for $n$ sufficiently large.
\end{proof}


\section{Proofs of Theorems~\ref{th_hastangles_lower_bound}
and \ref{th_rel_Alon_regular}}
\label{se_remaining_proofs}

In this section we gather the results of 
Sections~\ref{se_proof_first} and \ref{se_fundamental_subgr}
to prove
Theorems~\ref{th_hastangles_lower_bound}
and \ref{th_rel_Alon_regular}.

\begin{proof}[Proof of Theorem~\ref{th_hastangles_lower_bound}]
Since $S$ occurs in $\cC_n(B)$, some $G\in\cC_n(B)$ for some $n$
contains be as a subgraph, and the $B$-graph structure on $S$
endows $S$ with the structure of a $B$-graph, $S_\Bg$, that
occurs in $\cC_n(B)$.  
Therefore $S_\Bg$ is a $(\ge\nu,<r)$ tangle, which occurs in $\cC_n(B)$,
and hence (Theorem~\ref{th_extra_needed})
\begin{equation}\label{eq_S_in_G_prob}
\Prob_{G\in\cC_n(B)}\Bigl[ [S_\Bg]\cap G\ne\emptyset 
\Bigr] \ge
C' n^{-\tau_{\rm tang}}  .
\end{equation} 
Applying Lemma~\ref{le_fundamental_subgraph} with $\psi_B=S_B$ we
have that for any $\epsilon>0$, for
$n$ sufficiently large, 
$[S_\Bg]\cap G_\Bg\ne\emptyset$ (i.e., if $G_\Bg$ has a subgraph isomorphic to
$S_\Bg$), then $A_G$ has a new eigenvalue at 
least 
$$
\mu_1(S) + \frac{d-1}{\mu_1(S)} - \epsilon = 2(d-1)^{1/2}+\epsilon_0-\epsilon.
$$
Taking $\epsilon=\epsilon_0/2$ we have that for some $n_0$,
\begin{equation}\label{eq_S_in_G_implies_nonAlon}
[S_\Bg]\cap G\ne\emptyset \quad\implies\quad
\bigl({\rm NonAlon}_B(G;\epsilon_0/2)>0 
\end{equation} 
provided that $n\ge n_0$.

In view of 
\eqref{eq_r_nu_for_hastangles_lower_bound},
$S$ is a $(\ge\nu,<r)$-tangle, and therefore we have
\begin{equation}\label{eq_S_in_G_implies_hastangles}
[S_\Bg]\cap G\ne\emptyset \quad\implies\quad
G \in {\rm HasTangles}(\ge\nu,<r) .
\end{equation} 
Combining
\eqref{eq_S_in_G_prob}--\eqref{eq_S_in_G_implies_hastangles}
implies
\eqref{eq_Has_Tangles_non_Alon_lower}.

To prove \eqref{eq_Has_Tangles_non_Alon_upper}, we see that
the results on the $\widetilde c_i$ in the asymptotic expression for 
\eqref{eq_main_tech_result2}
in Theorem~\ref{th_main_tech_result} implies that
that for every $\nu$ and
$r\in\naturals$ we have
$$
\Prob_{G\in\cC_n(B)}[ 
G\in {\rm HasTangles}(\ge\nu,<r)
 ] \le
C(\nu,r) n^{-j}
$$
where $j$ is the smallest order of any $(\ge\nu,<r)$ tangle;
in view of 
\eqref{eq_r_nu_for_hastangles_lower_bound}, we have 
$j=\tau_{\rm tang}$.  
This upper bound is also valid for any sub-event of $G$ of the event that
$G\in {\rm HasTangles}(\ge\nu,<r)$, and therefore we have
\eqref{eq_Has_Tangles_non_Alon_upper}.
\end{proof}

\begin{proof}[Proof of Theorem~\ref{th_rel_Alon_regular}]
For any $\nu,r,\epsilon,n$ we have
$$
f_0(n,\epsilon)
=\Prob_{G\in\cC_n(B)}\Bigl[{\rm NonAlon}_B(G;\epsilon)>0 \Bigr]
$$
is the sum of
$$
f_1(\nu,r,n,\epsilon) \eqdef \Prob_{G\in\cC_n(B)}\Bigl[
\bigl( G \in {\rm HasTangles}(\ge\nu,<r) \bigr)
\ \mbox{\rm and}\  %
\bigl({\rm NonAlon}_B(G;\epsilon)>0  \bigr)
\Bigr]
$$
and
$$
f_2(\nu,r,n,\epsilon) \eqdef \Prob_{G\in\cC_n(B)}\Bigl[
\bigl( G \in {\rm TangleFree}(\ge\nu,<r) \bigr)
\ \mbox{\rm and}\  %
\bigl({\rm NonAlon}_B(G;\epsilon)>0  \bigr)
\Bigr]
$$

Let us first show that for $\epsilon>0$ sufficiently small there
is a constant $C=C(\epsilon)$ such that
\begin{equation}\label{eq_tau_1_bound}
f_0(n,\epsilon) \le C(\epsilon) n^{-\tau_1}.
\end{equation} 
By definition of $\tau_{\rm tang}$, there exists a graph
$S$ occurring in $\cC_n(B)$ with
$\ord(S)=\tau_{\rm tang}$ and $\mu_1(S)>(d-1)^{1/2}$.
Fix any such $S$.
By Theorem~\ref{th_hastangles_lower_bound} we have
\begin{equation}\label{eq_upper_bound_f_1}
f_1(\nu,r,n,\epsilon) \le C(\nu,r)n^{-\tau_{\rm tang}}
\le C(\nu,r)n^{\tau_1},
\end{equation} 
where $C=C(\nu,r)$ is independent of $\epsilon$.
Next let $\nu_0>(d-1)^{1/2}$ be sufficiently small, and 
$r_0\in \naturals$ be sufficiently large so that
$\tau_{\rm alg}(\nu,r)\ge\tau_1$:
this is possible if $\tau_{\rm alg}$ is finite, by the paragraph after
below Definition~\ref{de_algebraic_power},
and also possible if $\tau_{\rm alg}=+\infty$ by similar observations,
since in this case $\tau_1=\tau_{\rm tang}$ is finite.
Let $\epsilon_0$ be given by
$$
2(d-1)^{1/2}+ \epsilon_0 = \nu_0 + \frac{d-1}{\nu_0}.
$$
If $0<\epsilon/2<\epsilon_0$, then we have
$$
2(d-1)^{1/2}+ \epsilon/2 = \nu' + \frac{d-1}{\nu'},
$$
with $(d-1)^{1/2}<\nu'<\nu$; applying
apply Theorem~\ref{th_rel_Alon_regular2} with
$\epsilon'$ set to $\epsilon/2$, and where $\epsilon$ in
the theorem is taken to be some number $\le \epsilon/2$;
it follows that for
$\tilde\epsilon$ between $\epsilon'$ and $\epsilon'+\epsilon/2$ we have
\begin{equation}\label{eq_upper_bound_f_2}
f_2(\nu,r,n,\tilde \epsilon) \le C n^{-\tau_1}.
\end{equation} 
Adding  
\eqref{eq_upper_bound_f_1} and \eqref{eq_upper_bound_f_2} we have
that for each $\epsilon>0$ sufficiently small,
$$
f_0(n,\tilde\epsilon)=
f_1(\nu,r,n,\tilde \epsilon) +
f_2(\nu,r,n,\tilde \epsilon)  \le C'' n^{-\tau_1},
$$
for some $\tilde\epsilon\le\epsilon$ and $C''$.
Since $f_0(n,\epsilon)$ is clearly non-increasing in $\epsilon$, we have
$$
f_0(n,\epsilon)\le f_0(n,\tilde\epsilon) \le C'' n^{-\tau_1} .
$$
This proves \eqref{eq_tau_1_bound}.

Next let us show that there is a constant, $C'$, such that
$\epsilon>0$ sufficiently small we have
\begin{equation}\label{eq_tau_2_bound}
f_0(n,\epsilon) \ge C' n^{-\tau_2}.
\end{equation} 
First consider the case where $\tau_{\rm tang}=\tau_2$.

In this case, fix a graph
$S$ occurring in $\cC_n(B)$ with
$\ord(S)=\tau_{\rm tang}$ and $\mu_1(S)>(d-1)^{1/2}$.
Let $\epsilon_0$ be given by
\eqref{eq_epsilon_0_S}, and consider any real $\nu$ and
$r\in\integers$ such that
\eqref{eq_r_nu_for_hastangles_lower_bound} holds.
According to Theorem~\ref{th_hastangles_lower_bound},
for $n$ sufficiently large we have
$$
f_1(\nu,r,n,\epsilon_0/2) \ge C' n^{-\tau_{\rm tang}} = C' n^{-\tau_2}.
$$
Then
$$
f_0(n,\epsilon_0/2)\ge f_1(\nu,r,n,\epsilon_0/2) \ge C' n^{-\tau_2},
$$
and hence for any $\epsilon\le\epsilon_0/2$ we have
$$
f_0(n,\epsilon)\ge f_0(n,\epsilon/2) \ge C' n^{-\tau_2},
$$
which proves
\eqref{eq_tau_2_bound}.  

Next consider the case that $\tau_{\rm tang}\ne\tau_2$.
In this case 
$\tau_2= \tau_{\rm alg}+1$.
Then, as above, by the paragraph after
below Definition~\ref{de_algebraic_power},
there are $\nu>(d-1)^{1/2}$ and $r\in\naturals$ such that
$\tau_{\rm alg}(\nu,r)$ (as in Theorem~\ref{th_rel_Alon_regular2})
equals $\tau_{\rm alg}$.  Then 
\eqref{eq_rel_alon_expect_lower_and_upper} implies that
$$
\EE_{G\in\cC_n(B)}[ \II_{{\rm TangleFree}(\ge\nu,<r)}(G)
{\rm NonAlon}_d(G;\epsilon_0/2+\epsilon) ] \ge C' n^{-\tau_{\rm alg}}
$$
for $\epsilon>0$ sufficiently small.  Setting 
$\epsilon_1=\epsilon_0/2+\epsilon$, we have
$$
\EE_{G\in\cC_n(B)}[{\rm NonAlon}_d(G;\epsilon_1) ] \ge
\EE_{G\in\cC_n(B)}[ \II_{{\rm TangleFree}(\ge\nu,<r)}(G)
{\rm NonAlon}_d(G;\epsilon_1) ] \ge C' n^{-\tau_{\rm alg}}
$$
for some $\epsilon_1>\epsilon_0/2$.
Since the number of new eigenvalues of $A_G$ with $G\in\cC_n(B)$
is $(n-1)(\#V_B)$,  we have
$$
(n-1)(\#V_B) f_2(\nu,r,n,\epsilon_1) \ge 
\EE_{G\in\cC_n(B)}[{\rm NonAlon}_d(G;\epsilon_1) ] \ge C' n^{-\tau_{\rm alg}},
$$
and hence for $n>1$ there is a constant $C''$ such that
$$
f_2(\nu,r,n,\epsilon_1) \ge
C'' n^{-\tau_{\rm alg}-1} = C'' n^{-\tau_2}.
$$
It follows that for any $\epsilon<\epsilon_1$ we have
$$
f_2(n,\epsilon) \ge f_2(\nu,r,n,\epsilon_1) \ge C'' n^{-\tau_2}.
$$
This establishes \eqref{eq_tau_2_bound} in the case where
$\tau_2= \tau_{\rm alg}+1$, and hence \eqref{eq_tau_2_bound} in
both cases.

\end{proof}

\section{Improved Markov Bounds in Trace Methods}
\label{se_markov_bounds}

The papers
\cite{broder,friedman_random_graphs,friedman_relative,linial_puder,puder}
get spectral bounds by applying
a Markov type bound to the expected trace of a single power 
of the adjacency matrix of the random graphs.
[This contrasts with the Sidestepping Theorem, whose proof involves
the expected trace of a number of consecutive powers of random
matrices; see Article~IV.]
It seems to have gone unnoticed until \cite{friedman_kohler}
that all these papers
get better results for regular graphs
by working with expected Hashimoto traces as opposed to adjacency traces.
In this section we discuss these improvements. 
The formulas we give demonstrate these improvements, although we have
no principle that explains why the Hashimoto matrix approach gives
better bounds (or if there could be further improvement using some
other ``transform'' of the adjacency matrix into equivalent spectral
information).

%

In this section we use $\rhonew_B(A_G)$ to denote the spectral
radius of $A_G$ 
restricted to the new functions of
a covering map $G\to B$, i.e., the largest absolute value of 
a new eigenvalue.
Since the sum of the $k$-th powers of the eigenvalues of $A_G$ restricted
to these new functions equals $\Trace(A_G^k) - \Trace(A_B^k)$, we see
that the new eigenvalues of $A_G$, and therefore $\rhonew_B(A_G)$, depends
only on $G$ and $B$ and not on the particular covering map $G\to B$.
We similarly use $\rhonew_B(H_G)$ for the new Hashimoto spectral radius, 
and similar remarks apply.

\subsection{The Adjacency Markov-Type Bound}

The papers \cite{broder,friedman_random_graphs,friedman_relative} 
essentially give an estimate for some fixed $r$ and $\log k \ll \log n$
of the form
\begin{equation}\label{eq_A_G_trace_estimate}
\EE_{G\in\cC_n(B)}[ \Trace(A_G^k) - \Trace(A_B^k)] \le
\bigl(n+O(1)\bigr) \rho(A_{\hat B})^k+ O(1/n^r)k^{O(1)} \lambda_1(B)^k
\end{equation} 
where $\rho(A_{\hat B})$ denotes the spectral radius of the adjacency
operator on $\hat B$, the universal cover of $B$.
Choosing $k$ to be even and to balance the two summands of the
right-hand-side of \eqref{eq_A_G_trace_estimate} yields a 
high probability bound of
\begin{equation}\label{eq_A_G_markov}
\rhonew_B(A_G) \le \lambda_1(B)^{1/r}\rho(A_{\hat B})^{r/(r+1)}+\epsilon
\end{equation} 
for any $\epsilon>0$.

The papers \cite{broder,friedman_random_graphs}
do this for $B$ equal
to the bouquet of $d/2$ whole-loops and achieve
\eqref{eq_A_G_trace_estimate} for, respectively, $r=1$ and 
all $r$ with $2r-1< (d-1)^{1/2}$
(the bound in \cite{broder} is slightly weaker since their order $n$
term in \eqref{eq_A_G_trace_estimate} is larger).
For $B$ regular we have
$$
\lambda_1(B) = d ,\quad \rho(A_{\hat B}) = 2\sqrt{d-1};
$$
for $r=2\lfloor (d-1)^{1/2}+1\rfloor$ as in \cite{friedman_random_graphs},
the high probability bound \eqref{eq_A_G_markov} becomes
$$
2\sqrt{d-1} \Bigl( d/2\sqrt{d-1}\Bigr)^{1/r}+\epsilon
=
2\sqrt{d-1} + (1/2)\log_e d - \log_e(2) + o(1)
$$
for large $d$.

\subsection{Markov Hashimoto Bounds}

The papers \cite{broder,friedman_random_graphs} obtain
bounds \eqref{eq_A_G_trace_estimate} by first estimating expected counts
of non-backtracking walks.  
These methods can be restricted to SNBC walks, giving a bound
\begin{equation}\label{eq_H_G_trace_estimate}
\EE_{G\in\cC_n(B)}[ \Trace(H_G^k) - \Trace(H_B^k) ] \le
(d-1)^{k/2} O(1) + O(1/n^r)Ck^C(d-1)^k
\end{equation} 
for the same values of $r$ as they do in
\eqref{eq_A_G_trace_estimate}.  Taking $k$ even and using the fact that
all nonreal eigenvalues of $H_G$ have absolute value $(d-1)^{1/2}$, one
gets a high probability bound on the largest new eigenvalue of $H_G$ of
$$
\rhonew_B(H_G) \le 
(d-1)^{1/2} (d-1)^{1/(2r)} + \epsilon ,
$$
which in view of the Ihara Determinantal Formula
corresponds to a high
probability eigenvalue bound of
\begin{equation}\label{eq_A_G_estimate_via_H_G}
\rhonew_B(A_G) \le 
(d-1)^{1/2} \bigl( (d-1)^{1/(2r)}+(d-1)^{-1/(2r)} \bigr) + \epsilon
\end{equation}
Curiously this bound, which is based on seemingly equivalent walk estimates,
improves the bound \eqref{eq_A_G_markov}.
For example, for $r$ as in \cite{friedman_random_graphs},
\eqref{eq_A_G_estimate_via_H_G} yields a bound of
$$
2\sqrt{d-1} + \frac{(\log_e d)^2}{16 \, d^{1/2}} (1+o(1))
$$
which for large $d$ is a significant improvement.

\subsection{Improvements to \cite{puder}}

Doron Puder and the first author \cite{friedman_puder}
have noted a similar improvement in
the new $A_G$ eigenvalue bound for $d$-regular $G$ in \cite{puder},
again by first converting the expected adjacency trace bounds there to
expected Hashimoto trace bounds as above.
The bounds in \cite{puder} 
are of the form
$$
\EE_{G\in\cC_n(B)}[ \Trace(A_G^k) - \Trace(A_B^k)] \le
n c_{-1}(k) + c_0(k) + \ldots + c_{d-1}(k)/n^{d-1} ,
$$
where each $c_i(k)$ is bounded by roughly $\rho_i^k$ where
$$
2\sqrt{d-1}=\rho_{-1}\le \rho_0\le \cdots\le \rho_{d-1} = d
$$
(see \cite{puder}, Table~2).  Taking $k$ even and to balance the two
dominant terms of this estimate gives a new eigenvalue bound of
$$
2\sqrt{d-1}+ (0.86\ldots) + \epsilon.
$$
On the other hand, converting to expected Hashimoto traces gives the
improved bound \cite{friedman_puder} of
$$
2\sqrt{d-1} + O(d^{-1/2}).
$$


\providecommand{\bysame}{\leavevmode\hbox to3em{\hrulefill}\thinspace}
\providecommand{\MR}{\relax\ifhmode\unskip\space\fi MR }
\providecommand{\MRhref}[2]{%
  \href{http://www.ams.org/mathscinet-getitem?mr=#1}{#2}
}
\providecommand{\href}[2]{#2}

\end{document}